\newtheorem{theorem}{Theorem}[section] 
\newtheorem{proposition}[theorem]{Proposition}
\newtheorem{lemma}[theorem]{Lemma}
\newtheorem{corollary}[theorem]{Corollary}
\newtheorem{definition}[theorem]{Definition}
\begin{document}
%
\title{The Hidden Cost of Correlation: Rethinking Privacy Leakage in Local Differential Privacy}




%




\author{\IEEEauthorblockN{Sandaru Jayawardana\IEEEauthorrefmark{1},
Sennur Ulukus\IEEEauthorrefmark{2},
Ming Ding\IEEEauthorrefmark{3}and
Kanchana Thilakarathna\IEEEauthorrefmark{1}}
\IEEEauthorblockA{\IEEEauthorrefmark{1}The University of Sydney, Australia}
\IEEEauthorblockA{\IEEEauthorrefmark{2}University of Maryland, USA}
\IEEEauthorblockA{\IEEEauthorrefmark{3} Data61, CSIRO, Australia}}



\maketitle

\begin{abstract}
Local differential privacy (LDP) has emerged as a promising paradigm for privacy-preserving data collection in distributed systems, where users contribute multi-dimensional records with potentially correlated attributes. Recent work has highlighted that correlation-induced privacy leakage (CPL) plays a critical role in shaping the privacy–utility trade-off under LDP, especially when correlations exist among attributes.
Nevertheless, it remains unclear to what extent the prevailing assumptions and proposed solutions are valid and how significant CPL is in real-world data.
To address this gap, 
we first perform a comprehensive \emph{statistical analysis} of five widely used LDP mechanisms---GRR, RAPPOR, OUE, OLH and Exponential mechanism---to assess CPL across four real-world datasets.
We identify that many primary assumptions and metrics in current approaches fall short of accurately characterising these leakages.
Moreover, current studies have been limited to a set of pure LDP (i.e., $\delta = 0$) mechanisms.
In response, 
we develop the first algorithmic framework to theoretically quantify 
CPL for any general approximated LDP ($(\varepsilon,\delta)$-LDP) mechanism. 
We validate our theoretical results against empirical statistical results and provide a theoretical explanation for the observed statistical patterns.
Finally, we propose two novel benchmarks to validate correlation analysis algorithms and evaluate the utility vs CPL of LDP mechanisms. 
Further, we demonstrate how these findings can be applied to achieve an efficient privacy–utility trade-off in real-world data governance.

\end{abstract}


%
\IEEEpeerreviewmaketitle

\section{Introduction}\label{section:introduction}

In modern distributed systems, multi-dimensional data---comprising multiple correlated attributes---is routinely collected from sources such as IoT devices~\cite{LDP-PCC-dependency-graph-IoT} and end users~\cite{apple2017privacy}. This data fuels applications including crowdsourced analytics~\cite{PCKV_key_value, apple2017privacy} and personalized services~\cite{rappor_original}. However, such datasets frequently include sensitive information, including personally identifiable information (PII)\cite{LDP_survey_composition_theorem} and individual preferences\cite{rappor_original}, raising significant privacy concerns. In scenarios where the data collector is untrusted or users demand strong privacy guarantees, Local differential privacy (LDP)~\cite{LDP_duchi} has emerged as a prominent solution. LDP enforces privacy at the user end, ensuring that each data record is randomised before transmission—thereby eliminating the need to trust the data aggregator. 
For example, LDP mechanisms (a mechanism is a randomised algorithm that perturbs the data) have been successfully deployed in the industry by major players such as Google~\cite{rappor_original}, Microsoft~\cite{microsoft_ldp}, and Apple~\cite{apple2017privacy} to collect user information in a private manner. 
LDP mechanisms are characterized by two parameters: the privacy budget $\varepsilon \geq 0$, which bounds the maximum privacy leakage, and the relaxation parameter $\delta \in [0, 1)$, which allows for rare privacy violations to improve utility~\cite{Algorithmic_Foundations_dwork}. Lower $\varepsilon$ implies stronger privacy (via more noise) but reduced utility. A small $\delta$ is preferred to maintain strong privacy. Thus, experts aim to find the best trade-off between data privacy and utility by carefully choosing the $\varepsilon$ and $\delta$. 

\begin{figure}[t]
  \centering
  \includegraphics[trim={1cm .6cm 0.5cm 0.8cm},clip,width=1.02\linewidth]{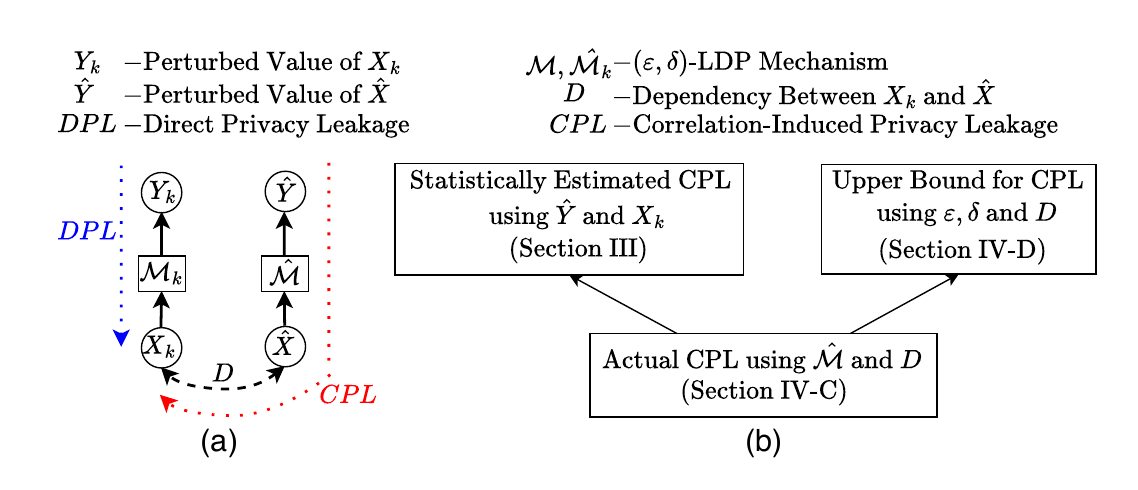}
  \vspace{-1.5em}
  \caption{(a) Example for privacy leakage analysis of $X_k$ attribute. (b) Proposed methods of computing CPL. 
  }
  \vspace{-1.8em}
  \label{fig:intro}
\end{figure}

However, collecting privacy-preserved multi-dimensional data poses significant challenges because most attributes/features of real-world data exhibit interdependence and increase privacy leakage~\cite{no_free_lunch}. 
For example, let $X_k$ and $\hat{X}$ be two correlated attributes, and let $Y_k$ and $\hat{Y}$ be perturbed values of $X_k$ and $\hat{X}$ generated by LDP mechanisms $\mathcal{M}_k$ and $\mathcal{\hat{M}}$, respectively as shown in Fig.~\ref{fig:intro}(a). Suppose we need to analyse the privacy leakage of $X_k$. Then, an adversary can learn about $X_k$ directly from $Y_k$. Additionally, if the adversary has the dependency information (e.g., from prior knowledge) between $X_k$ and $\hat{X}$ (i.e., $D$), then more information about $X_k$ can be inferred from $\hat{Y}$ through $\hat{X}$. Let us call \textbf{direct privacy leakage (DPL)} the theoretical maximum possible learning directly from $Y_k$, and  \textbf{correlation-induced privacy leakage (CPL)} the theoretical maximum possible learning through $\hat{Y}$ (See Section~\ref{subsection:additional_privacy_leakage} for definitions). Then, studies have proposed various approaches to apply LDP in multi-dimensional data, such as privacy budget splitting~\cite{SPL_RS_RS_PLUS_FAKE}, random-sampling~\cite{RS_plus_FD}, joint perturbation~\cite{PCKV_key_value}, etc, to overcome CPL impact. These solutions primarily assumed that the prior knowledge about the dependency between attributes is unknown.

However, when prior knowledge is available, leveraging it to analyse CPL enables a more nuanced trade-off between privacy and utility.
The literature shows that prior knowledge can be gained in various methods: from universal norms~\cite{impact_of_prior_knowledge}, from public datasets~\cite{adult_dataset, celebA_attributes}, learning while collecting data~\cite{AAA} etc. 
Few recent studies have used this prior knowledge to analyse the privacy leakage for LDP mechanisms in a correlated environment~\cite{data_correlation_location_similar_LDP,secon_Collecting_High-Dimensional_Correlation_LDP}. 
Study~\cite{secon_Collecting_High-Dimensional_Correlation_LDP} demonstrates that prior knowledge can be leveraged to calibrate the privacy budgets of $\varepsilon$-Univariate Dominance LDP (a variant of LDP) mechanisms to maximise the utility. 
Study~\cite{data_correlation_location_similar_LDP} proposes an algorithm to quantify CPL in spatial-time data.
However, these solutions are \textbf{limited} to a specific set of $\varepsilon$-LDP mechanisms that have \emph{only two levels of perturbation probabilities} (e.g., Generalised randomised response (GRR)~\cite{LDP_Frequent_Itemset_Mining}, Optimised unary encoding (OUE)~\cite{ldp_Frequency_Estimation}, etc.).  
Meanwhile, some studies have analysed and proposed approaches to measure CPL in $\varepsilon$-DP (central setting)~\cite{correlated_DP, correlated_data_dp_conventional_metrics_MI}. 
These methods generally cannot be applied to LDP because, in LDP, each user perturbs their data locally. 
This prevents a central aggregator from perturbing the dataset to handle correlations in the same way as DP. 
Additionally, in LDP, correlations often arise within a user's attributes rather than across different user records. 
Moreover, these studies commonly adopted conventional correlation metrics (e.g., Pearson correlation coefficients (PCC)~\cite{wikipedia_pearson}, covariance, etc.) in (L)DP analysis~\cite{LDP-PCC-dependency-graph-IoT, secon_Collecting_High-Dimensional_Correlation_LDP, correlated_DP}. For example,~\cite{secon_Collecting_High-Dimensional_Correlation_LDP} proposes a covariance-based algorithm to quantify CPL, and~\cite{LDP-PCC-dependency-graph-IoT} calibrates privacy budgets using PCC to trade-off privacy and utility. 


In summary, there has been \textbf{no} comprehensive investigation into CPL in LDP for real-world data and recent studies predominantly rely on conventional correlation metrics to measure the dependency (e.g., $D$ in  Fig.~\ref{fig:intro}(a)) to characterise CPL without a validation. 
Moreover, the potential information gain of an adversary on the impact of correlation within the context of generic ($\varepsilon,\delta$)-LDP mechanisms remains unexplored. 

To this end, this paper presents a comprehensive analysis of \emph{correlation-induced privacy leakage} (CPL) in LDP through practical experimentation and theoretical analysis, addressing the following key open research questions: 
\begin{itemize}
    \item \textit{\textbf{Q1.} What is the nature of CPL in real-world data?}
    \item \textit{\textbf{Q2.} Can we use conventional correlation metrics to quantify CPL?}
    \item \textit{\textbf{Q3.} How to systematically measure CPL under LDP?}
\end{itemize}

First, we statistically evaluate the impact on privacy due to the correlation using four real-world datasets in Section~\ref{section:motivation}. In the results (Section~\ref{section:experimental_results}), we observe that the worst-case privacy leakage scenarios occur relatively rarely, leaving substantial room to utilise both the privacy budget and population (in random-sampling methods) to enhance data utility without compromising privacy. Moreover, we observe that CPL is \textbf{not} necessarily symmetric. 
For instance, if $A$ and $B$ are two correlated attributes, CPL caused by $B$ about $A$ can differ from the CPL caused by $A$ about $B$. Additionally, we observe that in most cases, CPL \emph{saturates} as the privacy budget increases. 
These findings answer \textbf{\textit{Q1}}. 

To answer \textbf{\textit{Q2}}, next, we analyse the relationship between conventional correlation metrics and CPL in  Section~\ref{section:compare_the_observed_privacy_leakage_with_conventional_metrics}. 
We observe that although higher correlations leak more privacy, conventional metrics such as MI, PCC, etc., do \emph{not} accurately characterise CPL of the attributes. 
Moreover, these statistical results indicate a 5\% significance level\footnote{Statistical significance - A result is statistically significant at level $\alpha$ (i.e., 0.05) when, assuming the null hypothesis is true, the probability of observing data at least as extreme as the actual data is no greater than $\alpha$~\cite{moore2010basic}. 
} 
(i.e., $p < 0.05$).

Motivated by these statistical analyses, next, we theoretically analyse CPL of $(\varepsilon, \delta)$-LDP mechanisms in Section~\ref{section:methodology}.
To the best of our knowledge, this is the \textbf{first} CPL analysis for the generalised `approximated LDP' mechanism.
Then, we propose two novel algorithms to compute CPL. (i) Algorithm~\ref{algorithm_1} computes CPL when transition probabilities of the LDP mechanism and probability distribution between attributes are known. (ii) Algorithm~\ref{algorithm_2_part_2} computes CPL when $\varepsilon,\delta$, and probability distribution between attributes are known. 
Algorithm~\ref{algorithm_2_part_2} is useful to analyse CPL when transition probabilities are either unknown or sophisticated to compute (e.g., RAPPOR~\cite{rappor_original}, OLH~\cite{ldp_Frequency_Estimation}). 
These findings answer \textbf{\textit{Q3}}. Fig.~\ref{fig:intro}(b) summarises the proposed methods of computing $\textbf{CPL}$.

In Section~\ref{section:Theoretical Explanation for the Observations of Statistical Analysis}, we show that the discovered patterns in statistical analysis can be explained using the theoretical findings, which verifies these observations. 
Also, the derived theoretical results indicate that CPL could reach its \emph{maximum even under weak correlations}, which highlights the privacy risks of the usage of conventional correlation metrics (an example scenario is available in Appendix~\ref{section:appendix:Theoretical Explanation for the Observations of Statistical Analysis}).

Finally, in Section~\ref{section:Applications of CPL Analysis}, we demonstrate how these findings can be utilised to analyse and improve existing LDP solutions.
Here, we propose two novel benchmarks: (i) benchmarking CPL analysis algorithms, (ii) Utility vs CPL of LDP mechanisms.
Last but not least, we show how privacy budgets can be calibrated using publicly available information.

We summarise our \textbf{contributions} as follows:

\begin{enumerate}
    \item[$\bullet$] We statistically estimate correlation-induced privacy leakage (CPL) using four real-world datasets. 
    To the best of our knowledge, this is the \textbf{first empirical study} quantifying CPL in local differential privacy (LDP). 
    Notably, we discover that CPL between two correlated attributes is \textbf{asymmetric}, and that conventional correlation metrics \textbf{fail to characterise} CPL accurately.

    \item[$\bullet$] We provide a theoretical analysis of CPL under $(\varepsilon,\delta)$-LDP mechanisms, introducing two novel algorithms:  
    (i) Algorithm~\ref{algorithm_1}, which computes the exact CPL given the transition probabilities and attribute distributions; and  
    (ii) Algorithm~\ref{algorithm_2_part_2}, which computes a tight upper bound on CPL using only $\varepsilon$, $\delta$, and attribute distributions.  
    To our knowledge, this is the \textbf{first theoretical framework} for CPL analysis under $(\varepsilon,\delta)$-LDP.

    \item[$\bullet$] We propose \textbf{two novel benchmarks}:  
    (i) A CPL analysis benchmark to assess and compare existing privacy analysis algorithms; and  
    (ii) A privacy–utility benchmark that quantifies the trade-off between data utility and CPL exposure across LDP mechanisms.

    \item[$\bullet$] We validate our theoretical findings through experiments on both synthetic and real-world datasets.  
    Our results provide actionable insights for \textbf{privacy budget calibration} and demonstrate how CPL-aware strategies can improve the privacy–utility trade-off in practice.
\end{enumerate}

\section{Background}\label{section:background}

In this section, we present and review the related definitions and theorems required for the rest of the paper. First, we present the LDP definition and the related theorems. Next, we discuss the correlated attributes, and we define the different privacy leakages in multi-dimensional data. Finally, we define the threat model considered in this study.

\subsection{Local Differential Privacy}\label{subsection:local_differnetial_privacy}

This section reviews the LDP~\cite{LDP_duchi} definitions and related concepts. 
The \emph{approximated LDP} becomes \emph{pure LDP} when $\delta=0$. Here, $\delta \in [0,1)$ is the relaxation parameter, and a small $\delta$ guarantees better privacy.

\begin{definition}[($\varepsilon, \delta$)-Local Differential Privacy] \label{defn:LDP}
    Let the input alphabet be $\mathcal{X}$ and the output alphabet be $\mathcal{Y}$. A mechanism $Q$ is $(\varepsilon, \delta)$-locally differentially private if, 
    \begin{equation}
        \sup_{S\subseteq\mathcal{Y},x,x'\in\mathcal{X}} \frac{Q(S|x)-\delta}{Q(S|x')} \leq e^\varepsilon,
    \end{equation}
    where $Q(S|x) = p(Y=y_i, \ y_i \in S \ | \ X = x)$ is the privatisation mechanism. The function $p(\cdot)$ denotes the probability of an event. Here, $x \in \mathcal{X}$, $y_i \in \mathcal{Y} $, $\varepsilon \geq 0$ and $0 \leq \delta < 1$. When $\delta = 0$ this becomes $\varepsilon$-LDP.
\end{definition}

Moreover, we can define a variable $X$ as $\varepsilon$-local differentially private with respect to the release of $Y_1,\cdots, Y_n$, as stated in Definition~\ref{defn:LDP_multiple_attributes}.
According to the Definition~\ref{defn:LDP_multiple_attributes}, no adversaries can differentiate $x \in \mathcal{\Hat{X}}$ from $x' \in \mathcal{\Hat{X}}$ better than factor of $e^\varepsilon$ by observing/analysing any possible released $(y_1,\cdots,y_n)$ where $y_i \in \mathcal{Y}_i,\ i \in [n]$. 

\begin{definition}\label{defn:LDP_multiple_attributes}
    Let $X_1, \dots, X_n$ denote inputs of the mechanisms $Q_1, \dots, Q_n$ and $Y_1,\dots, Y_n$ represent the outputs (perturbed) of the mechanisms, respectively. Let $\mathcal{X}_1, \dots, \mathcal{X}_n$ be the alphabets of the inputs, and let $\mathcal{Y}_1,\dots, \mathcal{Y}_n$ be the alphabets of the outputs. Then, any input $\Hat{X}$ is $\varepsilon$-local differentially private against any release output of the system $(y_1,\dots,y_n)$ where $y_i \in \mathcal{Y}_i, \ i \in [n]$ if,
    \begin{equation}\label{defn:ldp_correlated}
        \underset{y_1 \in \mathcal{Y}_1,\dots,y_n \in \mathcal{Y}_n,x,x'\in\mathcal{\Hat{X}}}{\sup}\frac{p(Y_1 = y_1, \dots, Y_n = y_n|\Hat{X} = x)}{p(Y_1 = y_1, \dots, Y_n = y_n|\Hat{X} = x')} \leq e^\varepsilon.
    \end{equation}
The function $p(\cdot)$ denotes the probability of an event. Here, $\mathcal{\Hat{X}}$ is the alphabet of $\Hat{X}$.
\end{definition}

\subsection{Sequential Composition}

Sequential composition theorem (Theorem~\ref{thm:composition_sequential}) explains the privacy leakage when the same data is queried multiple times by (L)DP mechanisms~\cite{Algorithmic_Foundations_dwork}.

\begin{theorem}[Sequential Composition~\cite{LDP_survey_composition_theorem}]\label{thm:composition_sequential} 
    Let $\mathcal{M}_i(v)$ be an $(\varepsilon_i, \delta_i)$-LDP algorithm on an input value $v$, and $\mathcal{M}(v)$ be the sequential composition of $\mathcal{M}_1(v), \dots, \mathcal{M}_m(v)$. Then, $\mathcal{M}(v)$ satisfies $(\sum_{i=1}^m \varepsilon_i, \sum_{i=1}^m \delta_i)$-LDP.
\end{theorem}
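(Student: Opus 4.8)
The plan is to reduce to the two‑fold case by induction on $m$ and then treat the composition of two mechanisms directly. For the induction, the base case $m=1$ is trivial; given that $(\mathcal{M}_1,\dots,\mathcal{M}_{m-1})$ is $(\sum_{i<m}\varepsilon_i,\sum_{i<m}\delta_i)$-LDP, I would regard $\mathcal{M}$ as the composition of the two mechanisms $(\mathcal{M}_1,\dots,\mathcal{M}_{m-1})$ and $\mathcal{M}_m$, and invoke the two‑fold statement with budgets $(\sum_{i<m}\varepsilon_i,\sum_{i<m}\delta_i)$ and $(\varepsilon_m,\delta_m)$. So it suffices to show: if $\mathcal{A}$ is $(\varepsilon_a,\delta_a)$-LDP and $\mathcal{B}$ (possibly reading $\mathcal{A}$'s output) is $(\varepsilon_b,\delta_b)$-LDP, then $(\mathcal{A},\mathcal{B})$ is $(\varepsilon_a+\varepsilon_b,\delta_a+\delta_b)$-LDP. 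Throughout, recall that for LDP every pair of inputs $v,v'\in\mathcal{X}$ is ``adjacent'', and that Definition~\ref{defn:LDP} is equivalent to $Q(S\mid v)\le e^{\varepsilon}Q(S\mid v')+\delta$ for all $S$.

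The engine is a structural decomposition of $(\varepsilon,\delta)$-indistinguishability. Fix inputs $v,v'$ and let $P,P'$ be the output laws of $\mathcal{A}$ under $v,v'$. I would set $\widetilde{P}$ to be $P$ ``capped'' at $e^{\varepsilon_a}P'$ (densities $\min(p,e^{\varepsilon_a}p')$) and $R_a:=P-\widetilde{P}\ge 0$. Then $\widetilde{P}\le e^{\varepsilon_a}P'$ pointwise by construction, while the total mass of $R_a$ equals $P(E)-e^{\varepsilon_a}P'(E)$ on $E=\{p>e^{\varepsilon_a}p'\}$, which is $\le\delta_a$ by the $(\varepsilon_a,\delta_a)$ guarantee applied to $S=E$. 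I would run the same decomposition for $\mathcal{B}$: for each possible output $y$ of $\mathcal{A}$, writing $P_y,P'_y$ for the conditional laws of $\mathcal{B}$'s output under $v,v'$, obtain $P_y=\widetilde{P}_y+R_{b,y}$ with $\widetilde{P}_y\le e^{\varepsilon_b}P'_y$ and $R_{b,y}$ of total mass $\le\delta_b$, measurably in $y$.

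To finish, for any $S$ on the joint output space with $y$-section $S_y$, write $Q(S\mid v)=\int\big(\int \mathbf{1}[(y,z)\in S]\,dP_y(z)\big)\,dP(y)$. Substituting $P_y=\widetilde{P}_y+R_{b,y}$ and bounding $\int \mathbf{1}\,dR_{b,y}\le\delta_b$ (with $P$ a probability measure) peels off the first $\delta_b$; then writing $g(y):=\widetilde{P}_y(S_y)\in[0,1]$ and substituting $P=\widetilde{P}+R_a$ with $\int g\,dR_a\le\delta_a$ peels off $\delta_a$; finally $\widetilde{P}\le e^{\varepsilon_a}P'$ applied to the nonnegative $g$, followed by $\widetilde{P}_y\le e^{\varepsilon_b}P'_y$ applied inside, collapses the remaining term to $e^{\varepsilon_a+\varepsilon_b}Q(S\mid v')$. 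Hence $Q(S\mid v)\le e^{\varepsilon_a+\varepsilon_b}Q(S\mid v')+\delta_a+\delta_b$, i.e.\ $(\mathcal{A},\mathcal{B})$ is $(\varepsilon_a+\varepsilon_b,\delta_a+\delta_b)$-LDP, and the induction closes. In the non‑adaptive product case the same bookkeeping reduces to the identity $1-(1-a_a)(1-a_b)=a_a+a_b-a_aa_b\le a_a+a_b$ with $a_a\le\delta_a,\;a_b\le\delta_b$.

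The step I expect to be the real obstacle is the $\delta$-accounting. The tempting shortcut---condition on $\mathcal{A}$'s output, apply the $(\varepsilon_b,\delta_b)$ bound pointwise, then the $(\varepsilon_a,\delta_a)$ bound---loses a multiplicative factor and yields only $(\varepsilon_a+\varepsilon_b,\;e^{\varepsilon_b}\delta_a+\delta_b)$, because $(\varepsilon,\delta)$-indistinguishability does \emph{not} guarantee that the privacy loss exceeds $\varepsilon$ with probability at most $\delta$: one cannot bound $P(E)$ itself, only $P(E)-e^{\varepsilon_a}P'(E)$. Splitting each output law into an $e^{\varepsilon}$-dominated ``core'' plus a small‑mass ``remainder'' is exactly what lets the two $\delta$'s add without an $e^{\varepsilon}$ penalty, and carrying the remainder terms through the adaptive conditioning---where $\mathcal{B}$'s core and remainder vary measurably with $\mathcal{A}$'s output---is the only place where a little measure‑theoretic care is needed.
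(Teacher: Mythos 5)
Your proof is correct. Note that the paper does not prove Theorem~\ref{thm:composition_sequential} at all: it is imported verbatim from the literature with a citation, so there is no in-paper argument to compare against. What you have written is the standard proof of basic composition for approximate (L)DP, and the details check out: the capping decomposition $P=\widetilde{P}+R_a$ with $\widetilde{P}\le e^{\varepsilon_a}P'$ pointwise and $\|R_a\|\le P(E)-e^{\varepsilon_a}P'(E)\le\delta_a$ on $E=\{p>e^{\varepsilon_a}p'\}$ is exactly the right device, and your remark that the naive ``condition and apply the bounds in sequence'' route only yields $e^{\varepsilon_b}\delta_a+\delta_b$ correctly identifies why the decomposition is needed rather than optional. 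Two minor points: your argument actually proves the stronger adaptive version (the paper's statement, with every $\mathcal{M}_i$ run on the same fixed $v$, is the non-adaptive special case), and since the paper works with finite alphabets the measurability caveat you flag evaporates.
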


\subsection{Correlated Attributes}\label{subsection:correlated_attributes}

In this work, the term ``correlation'' refers to \emph{both} linear and nonlinear dependencies. These correlations make privacy protection more challenging as correlated attributes could potentially reveal more information~\cite{no_free_lunch, correlated_DP}. The following sections formally define these privacy leakages.

\subsection{Privacy Leakages}\label{subsection:additional_privacy_leakage}
\begin{figure}[t]
  \centering
  \includegraphics[trim={4cm 0.45cm 1cm 0.8cm},clip,width=1.05\linewidth]{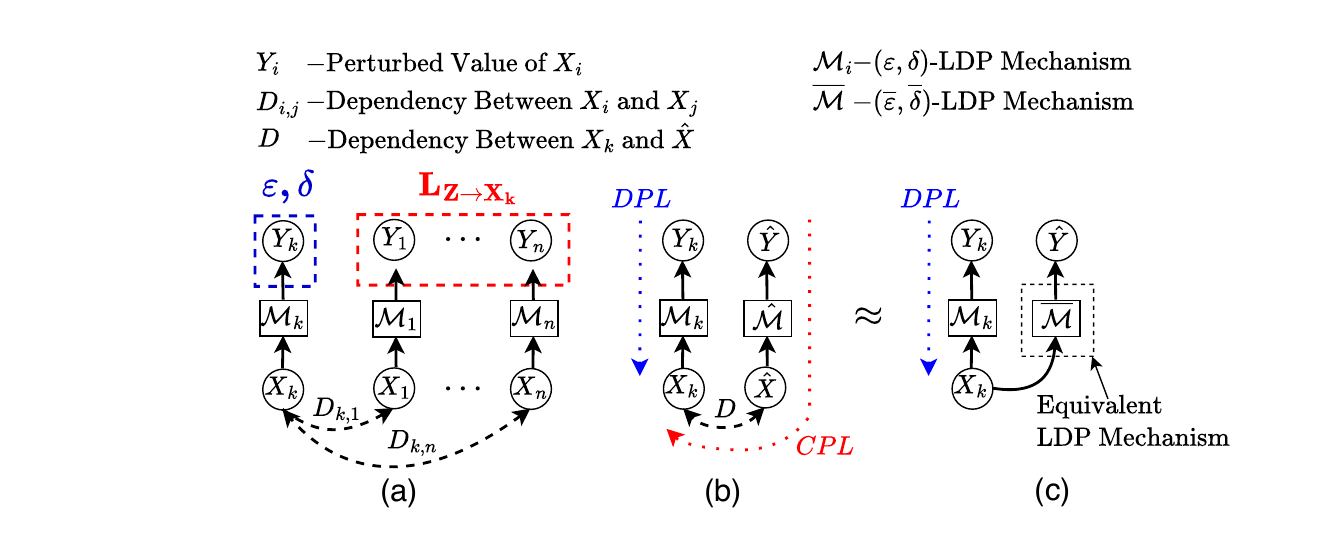}
  \caption{(a) Example dependency relationship of any attribute $X_k \in X$ with remaining attributes $X \setminus \{X_k\}$. Here, $\varepsilon$ represents the DPL of $X_k$ attribute and $L_{Z \rightarrow X_k}$ represents the CPL of $X_k$ attribute caused by $Z$ set of attributes, where $Z = \{X_1,\dots,X_n\} \setminus \{X_k\}$. (b) Example for privacy leakage analysis of $X_k$ attribute in simplified scenario. (c) Equivalent representation for CPL of $X_k$ in Fig.~\ref{fig:background_fig}(b).
  }
  \vspace{-1.5em}
  \label{fig:background_fig}
\end{figure}
Let $X_1, \dots, X_n \in X$ be $n$ correlated attributes and $Y_1, \dots, Y_n \in Y$ be perturbed versions obtained with $\mathcal{M}_1, \dots, \mathcal{M}_n \in \mathcal{M}$ $(\varepsilon, \delta)$-LDP mechanisms (Fig.~\ref{fig:background_fig}(a)). Let $\mathcal{X}_1, \dots, \mathcal{X}_n$ be the alphabets of inputs and $\mathcal{Y}_1, \dots, \mathcal{Y}_n$ be the alphabets of outputs, respectively. Then, we can define \emph{direct privacy leakage (DPL), correlation-induced privacy leakage (CPL), total correlation-induced privacy leakage (TCPL)}, and \emph{total privacy leakage (TPL)} terms as follows.

\subsubsection{Direct Privacy Leakage}

We define the \textbf{direct privacy leakage (DPL)} as the extent to which an adversary can infer information about a user's original data from its perturbed data (without correlated attributes' perturbed data). We can quantify the DPL of LDP mechanism from its privacy budget $\varepsilon$ and relaxation parameter $\delta$. Formally, let any attribute $X_k \in X$ be perturbed with $(\varepsilon, \delta)$-LDP mechanism $\mathcal{M}_k \in \mathcal{M}$. Then, the DPL of $X_k$ is at most $\varepsilon$ with relaxation of $\delta$.

For example, let us consider a simplified two-attribute scenario as shown in Fig.~\ref{fig:background_fig}(b). Let $X_k$ and $\hat{X}$ be two correlated attributes, and let $Y_k$ and $\hat{Y}$ be perturbed values of $X_k$ and $\hat{X}$ generated by LDP mechanisms $\mathcal{M}_k$ and $\mathcal{\hat{M}}$, respectively. Then, DPL of $X_k$ is caused by mechanism $\mathcal{M}_k$ due to the release of $Y_k$ as shown by the \textcolor{blue}{blue dotted arrow} in Fig.~\ref{fig:background_fig}(b).


\subsubsection{Correlation-Induced Privacy Leakage}

We define the \textbf{correlation-induced privacy leakage (CPL)} as the privacy leakage of an attribute that is caused by correlated neighbouring attributes. Formally, let $L_{Z \rightarrow X_k}$ denote the CPL experienced by any attribute $X_k \in X$ due to a subset of other correlated attributes $Z \subseteq X\setminus \{X_k\}$. Let $Y_k \in Y$ be the perturbed output of $X_k$, and $W \subseteq Y \setminus \{Y_k\}$ be the perturbed output of $Z$, and $\mathcal{W}$ be the output alphabet of $W$. Then, based on Definition~\ref{defn:LDP_multiple_attributes}, $L_{Z \rightarrow X_k}$ is given by, 
\begin{equation}\label{eqn:privacy_leakage_additional_xk}
    L_{Z \rightarrow X_k} = \ln \sup_{w\in \mathcal{W},x,x'\in\mathcal{X}_k} \frac{p(w|x)}{p(w|x')}.
\end{equation}

Referring to Fig.~\ref{fig:background_fig}(b), the CPL of $X_k$ from $\hat{X}$ is caused by mechanism $\hat{\mathcal{M}}$ due to the release of $\hat{Y}$ as shown by the \textcolor{red}{red dotted arrow}.
In Section~\ref{section:delta_greater_than_zero_theoretical_proof} (theoretical analysis), we show that CPL can be represented with a relaxation factor as well (e.g., $L_{Z \rightarrow X_k}$ can be represented using two components (i) privacy leakage component and (ii) relaxation component) when neighbouring attributes (i.e., $Z$) are perturbed with approximated LDP mechanisms ($\delta > 0$).
For example, suppose $L_{\hat{X} \rightarrow X_k}=(l_{\hat{X} \rightarrow X_k},\overline{f}^*_{\hat{X} \rightarrow X_k})$ (i.e., $l_{\hat{X} \rightarrow X_k}$ is privacy leakage component and $\overline{f}^*_{\hat{X} \rightarrow X_k}$ is relaxation component). Then, intuitively, CPL caused by $\hat{X}$ about $X_k$ has a similar effect as $X_k$ attribute is perturbed with ($\varepsilon=l_{\hat{X} \rightarrow X_k}, \delta=\overline{f}^*_{\hat{X} \rightarrow X_k}$)-LDP mechanism (i.e., $\overline{\mathcal{M}}$) as shown in Fig.~\ref{fig:background_fig}(c).


\subsubsection{Total Correlation-Induced Privacy Leakage}\label{section:Total Correlation-Induced Privacy Leakage}

We define the \textbf{total correlation-induced privacy leakage (TCPL)} as the summation of CPL caused by all the correlated neighbouring attributes in every attribute in the dataset. TCPL is given by,
\begin{equation}
    \operatorname{TCPL} = \sum_{x_k \in X} \sum_{x \in X\setminus \{x_k\}} L_{x \rightarrow x_k}.
\end{equation}
\subsubsection{Total Privacy Leakage}

We refer to \textbf{total privacy leakage (TPL)} as the privacy leakage of an attribute, including its DPL and CPL. Formally, let $L_{X_k}$ denote the TPL of any attribute $X_k \in X$. Then, $L_{X_k}$ can be calculated as
\begin{equation}\label{eqn:tpl_defn}
L_{X_k} = \ln \sup_{y_1\in \mathcal{Y}_1,\dots,y_n\in \mathcal{Y}_n,x,x'\in\mathcal{X}_k} \frac{p(y_1, \dots, y_n|x)}{p(y_1, \dots, y_n|x')},
\end{equation}
based on Definition~\ref{defn:LDP_multiple_attributes}.
Calculating the above equation is challenging as the number of attributes increases because of the curse of dimensionality. 
To overcome this, we can model TPL of $X_k$ attribute as a combination of DPL and CPL by neighbouring attributes.
Referring to Fig.~\ref{fig:background_fig}(b), privacy leakage of $X_k$ can be represented as two mechanisms $\mathcal{M}_k$ and $\overline{\mathcal{M}}$ as shown in Fig.~\ref{fig:background_fig}(c). Then, the upper bound of TPL of $X_k$ can be computed as the sum of privacy leakage of $\mathcal{M}_k$ and $\overline{\mathcal{M}}$ (i.e., DPL + CPL) using the sequential composition Theorem~\ref{thm:composition_sequential}.
For the generalised case, we can compute an upper bound for $L_{X_k}$, $\overline{L}_{X_k}$ using the sequential composition Theorem~\ref{thm:composition_sequential} as
\begin{equation}
\label{eqn:total_privacy_leakage_composition}
    \overline{L}_{X_k} = (\varepsilon + \sum_{x \in X \setminus \{X_k\}} l_{x \rightarrow X_k},\quad \delta + \sum_{x \in X \setminus \{X_k\}} \overline{f}^*_{x \rightarrow X_k}),
\end{equation}


where $\sum_{x \in X \setminus \{X_k\}} l_{x \rightarrow X_k}$, and $ \sum_{x \in X \setminus \{X_k\}} \overline{f}^*_{x \rightarrow X_k}$ are the sum of privacy leakage components and relaxation components of CPL, respectively. Here, $\varepsilon$ and $\delta$ denote the DPL and relaxation of $X_k$ under the $(\varepsilon,\delta)$-LDP mechanism. 
In a nutshell, we can compute an upper bound for the TPL of an attribute by summing up all the CPL and DPL. 

\subsection{Threat Model}

We consider a threat model wherein an adversary possesses distributional knowledge obtained from publicly available datasets, previously collected data, or, in the worst-case scenario, the ground-truth joint distribution. The adversary aims to perform \emph{attribute inference attacks}\footnote{Attribute inference attacks aim to infer a user's true input from the obfuscated output produced by a mechanism~\cite{distinguishability_attacks_2,PETS_statistical_audit}.} by leveraging statistical dependencies among attributes, even when local perturbation mechanisms are employed. Potential adversaries include honest-but-curious data collectors seeking to glean sensitive information from user data. In this context, adversarial knowledge gain is quantified using the LDP privacy leakage metric as detailed in Section~\ref{subsection:additional_privacy_leakage}.




\section{Statistical Analysis}\label{section:motivation}

In this section, we answer the first two research questions: \textbf{Q1} and \textbf{Q2}. Here, we statistically analyse the CPL using four real-world datasets. 
This section is organised as follows: Section~\ref{section:experimental_Setup_statistical} explains the experimental setup and Section~\ref{section:experiments_statistical} explains the conducted experiments. Next, we present results in Section~\ref{section:experimental_results}. Section~\ref{section:compare_the_observed_privacy_leakage_with_conventional_metrics} compares the observed privacy leakages with conventional correlation metrics. Finally, Section~\ref{section:limitations_of_the_statistical_evaluation_method} summarises the main statistical observations and discusses the challenges in the statistical method.

\subsection{Experimental Setup}\label{section:experimental_Setup_statistical}

\subsubsection{Environment}
We implement our experiments using Python 3.10 and run them on an AMD Ryzen Threadripper PRO 3955WX, 64GB RAM, and a 4TB storage computer. We have used two state-of-the-art LDP libraries~\cite{pureLDP2021, Arcolezi2022-multi-freq-ldpy}.

\subsubsection{Datasets}\label{section:datasets}
We use four real-world datasets. In these datasets, we have converted continuous numerical attributes into discrete attributes by binning, as this work focuses on privacy leakages related to discrete/categorical data.

\textbf{CelebA}~\cite{celebA_attributes} dataset contains 40 binary attributes of human faces with more than 200K samples. 

\textbf{Adult}~\cite{adult_dataset} is an income survey dataset containing 14 features and over 40K samples.

\textbf{CVD}~\cite{sulianova2021cardiovascular}
dataset consists of 11 features of 70K cardiovascular patient records (34,979 have cardiovascular
disease and 35,021 otherwise).

\textbf{DSS}~\cite{dss} dataset is released by the Department of Social Services (DSS) in Australia, and the selected dataset contains 27 features, with more than 38K samples.


\subsubsection{LDP Mechanisms}
In this experiment, we have used five commonly adopted LDP mechanisms: Generalised-randomised response (\textbf{GRR})~\cite{LDP_Frequent_Itemset_Mining}, \textbf{RAPPOR}~\cite{rappor_original}, Optimised unary encoding (\textbf{OUE})~\cite{ldp_Frequency_Estimation}, Optimised local hashing (\textbf{OLH})~\cite{ldp_Frequency_Estimation} and Exponential mechanism (\textbf{EXP}) ~\cite{Algorithmic_Foundations_dwork}, as many recent studies have used these mechanisms to develop more advanced mechanisms~\cite{PCKV_key_value}. We include more details in the Appendix~\ref{section:appendix:ldp_mechanisms} due to space limitations.

\subsubsection{Conventional Correlation Metrics}
We have used two mainly adopted correlation metrics in the literature: mutual information (\textbf{MI})~\cite{covers_mutual_information} and Pearson correlation coefficient (\textbf{PCC})~\cite{wikipedia_pearson}. 

\begin{figure}[t]
  \centering
  \includegraphics[trim={0.8cm 0.45cm 3cm 0cm},clip,width=1\linewidth]{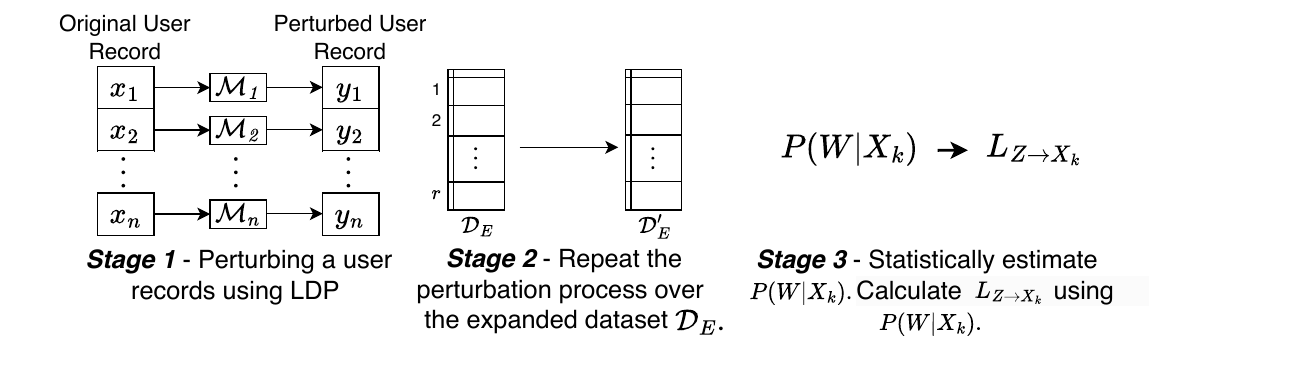}
  \caption{Workflow of the experiment. In Stage 1, we perturb each attribute ($x_1,\dots,x_n$) in a user record using LDP. This process should be repeated for all the records in the expanded dataset $\mathcal{D}_E$ and compute the perturbed dataset $D'_E$ as shown in Stage 2. In Stage 3, $P(W|X_k)$ is estimated using $D'_E$ and corresponding actual values. Here, $W = \{Y_1 \text{, }  \cdots \text{ ,} Y_n\} \setminus \{Y_k\}$  and $w \in \mathcal{W}$ where $\mathcal{W}$ is the alphabet of $W$ and $|\mathcal{W}| = l$.
  }
  \vspace{-1.5em}
  \label{fig:work_flow}
\end{figure}
\subsection{Experiments}\label{section:experiments_statistical}
The main objective of this experiment is to explore how privacy leakage behaves when attributes are correlated. As explained in Section~\ref{subsection:additional_privacy_leakage}, we can calculate the correlation-induced privacy leakage (CPL) (i.e., $L_{ Z\rightarrow X_k}$) caused by correlated attribute(s) using (\ref{eqn:privacy_leakage_additional_xk}). Here, we statistically estimate the value for this equation following a \emph{three-stage workflow} (Fig.~\ref{fig:work_flow}) to estimate CPL between correlated attributes. 

\begin{enumerate}
    \item  [$\bullet$]  \textbf{Stage~1} We apply LDP to perturb each user record in the dataset. Here, we independently perturb each attribute in the user records with the same privacy budget. 

    \item  [$\bullet$]  \textbf{Stage~2} We repeat this process over the entire dataset. 

    \item [$\bullet$] \textbf{Stage~3}, We compute the conditional probability distribution between correlated attributes to estimate CPL. For instance, if we require to calculate CPL of $X_k$ attribute caused by $Z= \{X_1,\dots,X_n\} \setminus \{X_k\}$ attributes, then we should calculate the probability distribution $P(W|X_k)$  where $W = \{Y_1,\dots, Y_n\} \setminus \{Y_k\}$ and $Y_i, i \in [n]$, are perturbed outputs of $X_i, i\in [n]$, respectively. Then, using $P(W|X_k)$, we can compute CPL using~\eqref{eqn:privacy_leakage_additional_xk}.
\end{enumerate}

Here, from an application perspective, \emph{Stage~1} represents the data perturbation at the user level, and \emph{Stage~2} represents the collection of that perturbed data by the data aggregator. 
To identify $Y_k$ from the perturbed output, we have adopted state-of-the-art attribute inference attack models from the literature~\cite{PETS_statistical_audit, distinguishability_attacks_2} (see Appendix~\ref{section:appendix:attacking_models} for details).
Further, a smaller sample count (user records) in the original dataset $\mathcal{D}$ can affect the estimation accuracy of $P(W|X_k)$. To overcome this, we augmented the dataset by replicating the records in $\mathcal{D}$, $r$ times and obtained the expanded dataset $\mathcal{D}_E$ for the perturbation. For example, if $\mathcal{D}$ has $N$ records, then $\mathcal{D}_E$ has $Nr$ records.
This increases the number of samples while preserving original relationships between attributes. In our experiments, we set $r=50$, which yields \textbf{statistically significant privacy leakage estimates} (at the $p<0.05$ level), even under a high privacy budget of $\varepsilon = 8$. 
To evaluate the \emph{statistical significance} of the estimated privacy leakage, we generate 10,000 surrogate datasets by independently permuting the values in each column (i.e., attribute) of the perturbed dataset. This preserves each attribute’s marginal distribution but breaks the correlation across attributes so we can determine whether the observed CPL is truly larger than any leakage we might see by randomly shuffling the data. 

We conduct the experiments in three main directions to evaluate different aspects: (i) CPL between attributes in the Adult dataset with different LDP mechanisms at $\varepsilon=1$. (ii) Privacy leakage between attributes over different privacy budgets in the CelebA dataset using GRR. (iii) CPL between each attribute in the CVD dataset using GRR at $\varepsilon=1$.

\subsection{Experimental Results}\label{section:experimental_results}

\begin{figure*}[h]
    \centering
    \includegraphics[trim={1cm 0.52cm 0cm .2cm},clip,width=1\linewidth]{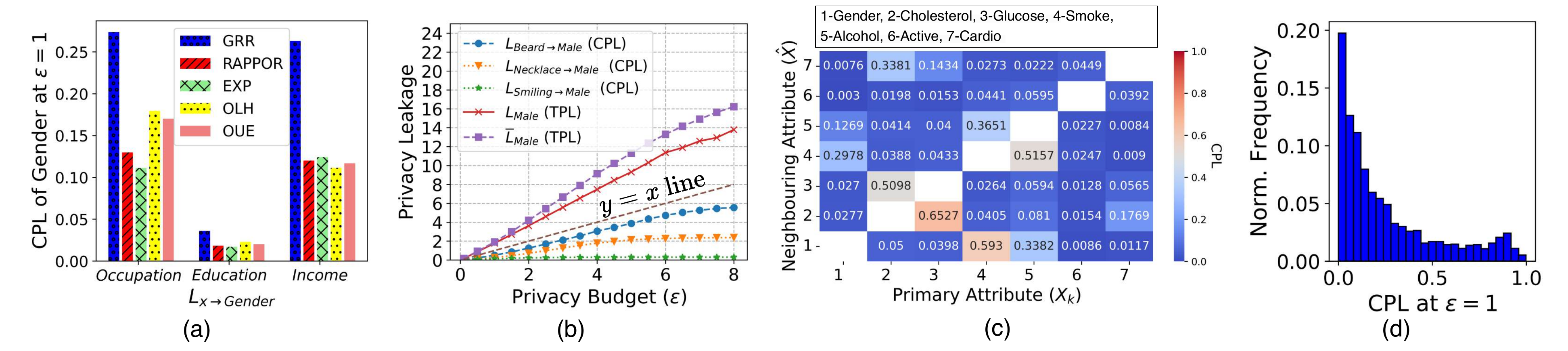}
    \caption{(a) CPL of \emph{Gender} attribute caused by three neighbouring attributes in Adult dataset at $\varepsilon = 1$. Here, data is perturbed with five different LDP mechanisms. (b) Privacy leakage of \emph{Male/Female} attribute caused by three other attributes in the CelebA dataset over a range of privacy budgets. Here, data is perturbed with GRR mechanisms. (c) The heatmap represents the CPL of each primary attribute $X_k$ caused by each neighbouring attribute $\Hat{X}$ (i.e. $L_{\Hat{X}\rightarrow X_k}$) in CVD dataset. Here, data is perturbed with GRR mechanisms. (d) Histogram of CPL between attributes in all four datasets perturbed with GRR mechanisms at $\varepsilon=1$. Here, the normalised frequency represents how many attributes (normalised) have each CPL value.}
    \vspace{-.5em}
\label{fig:motivation_results}
\end{figure*}

\subsubsection{CPL Between Attributes in Adult Dataset With Different LDP Mechanisms}
\label{section:CPL between Attributes in Adult Dataset with Different LDP Mechanisms}
As shown in Fig.~\ref{fig:motivation_results}(a), we evaluate the CPL of the gender attribute (\emph{Male/Female}) caused by three other attributes: \emph{Occupation}, \emph{Education} and \emph{Income} using GRR, RAPPOR, EXP, OLH and OUE  mechanisms. Here, we use  privacy budget $\varepsilon=1$ since, in practice, data is required to be perturbed with lower privacy budgets, like close to 1, for better privacy guarantee~\cite{LDP_survey_composition_theorem}. Out of five mechanisms, GRR leaks more privacy. Also note that although RAPPOR and EXP have relatively lower CPL, we cannot conclude that they are better over the other mechanisms since we only evaluate the CPL. Further, we observe that the five mechanisms exhibit consistent varying privacy leakage pattern across the attributes. For instance, knowing \emph{Occupation} data allows us to infer more about \emph{Gender} compared to the leakage caused knowing \emph{Education} as $L_{Occupation \rightarrow Gender} \gg L_{Education \rightarrow Gender}$. Since the GRR mechanism has the highest privacy leakage and the privacy leakage pattern is consistent across mechanisms, the rest of the experiments are done using GRR, as the GRR mechanism has the highest privacy leakage.

\subsubsection{Privacy Leakage Between Attributes Over Different Privacy Budgets in CelebA Dataset Using GRR} 

Here, we evaluate whether a person's gender could be inferred from other attributes using GRR mechanisms. For that, we selected three attributes from the CelebA dataset: \emph{Beard/No-Beard}, \emph{Wearing-necklace/No-wearing-necklace} and \emph{Smiling/No-smiling}, which have three different levels of correlation to gender from obvious biological knowledge. As expected, the \emph{Beard} attribute leaks more information about gender than the \emph{Necklace}, as shown in Fig.~\ref{fig:motivation_results}(b), because having a beard on a woman's face is \emph{very rare} compared to wearing a necklace by a man. Moreover, since \emph{Smiling} is a gender-independent expression, it does not reveal any significant information about the gender. We observe another interesting pattern in privacy leakage: It increases with privacy budget $\varepsilon$ for some point and becomes stable (saturates) afterwards. For example, $L_{Necklace \rightarrow Male}$ has maximum value of $\approx 2.4$ after $\varepsilon > 6$. These maximum CPL values increase with the correlation strength between attributes (i.e., $L_{Smiling \rightarrow Male} < L_{Necklace \rightarrow Male} < L_{Beard \rightarrow Male}$).
Moreover, we observe that CPL values are lower than the privacy budget (lie under the $y=x$ line). In our theoretical analysis, we prove that the CPL is always less than or equal to the privacy budget of neighbouring attributes' privacy budget (Corollary~\ref{corollary:maximum_additional_privacy_leakage_over_GG'} in Appendix~\ref{section:appendix:Extreme Scenarios}).
Next, we statistically evaluate the TPL of \emph{Male/Female} attribute ($L_{Male}$) with respect to the selected other three attributes using~\eqref{eqn:tpl_defn} (i.e., compute $P(W|X_k)$  where $W = \{\text{\emph{Beard/No-Beard}, \emph{Wearing-necklace/No-wearing-necklace}},\linebreak \text{\emph{Smiling/No-smiling}}\}$). Next, we compute $\overline{L}_{Male}$ using previously estimated CPLs from each attribute and~\eqref{eqn:total_privacy_leakage_composition}. As illustrates in Fig.~\ref{fig:motivation_results}(b), $\overline{L}_{Male}$ gives a tight upper bound for $L_{Male}$. Therefore, this result verifies the upper-bound approximation for TPL based on the sequential composition theorem in~\eqref{eqn:total_privacy_leakage_composition}.

\subsubsection{CPL Between Each Attribute in the CVD Dataset Using GRR at $\varepsilon=1$}\label{section:Additional Privacy Leakage between each Attribute in the Cardiovascular Dataset}

As shown in Fig.~\ref{fig:motivation_results}(c), we compute CPL between seven selected attributes from CVD dataset using GRR mechanisms at privacy budget $\varepsilon=1$. One of the main observations is that most leakages are \emph{close to zero} (the maximum value is 1 as we perturb this data with $\varepsilon=1$). To further validate this, we estimate the privacy leakage in the remaining four datasets, CelebA, Adult and DSS, as well. Next, we plot the histogram of the normalised frequencies of each CPL value as shown in Fig.~\ref{fig:motivation_results}(d). We can observe that most of the leakage values in all four datasets are \emph{close to zero}. Therefore, this implies that splitting the privacy budget equally among correlated attributes (SPL method) unnecessarily adds much redundant noise in real-world scenarios. Additionally, we note another interesting observation in Fig.~\ref{fig:motivation_results}(c): CPL between two attributes is \textbf{not necessarily symmetric}. For instance, if we know a person is a cardiovascular patient, then the patient's level of cholesterol is more likely to be revealed as $L_{Cardio \rightarrow Cholesterol} = 0.3381$. However, knowing the level of cholesterol, it is highly uncertain whether the person has cardiovascular disease or not as $L_{Cholesterol \rightarrow Cardio} = 0.1769$.

We can summarise the key observations of this statistical privacy leakage analysis as follows:
(i) Privacy leakage caused by correlated attributes is \emph{not} necessarily symmetric, (ii) Most of the time, CPL \textbf{saturates} as the privacy budget of the neighbouring attribute increases, and (iii) Most of the time, CPL is much less than the worst-case scenario.
Therefore, these findings answer the \textbf{Q1}, \emph{``What is the usual nature of CPL in real-world data?''}

Next, we compare in detail these privacy leakage results with conventional correlation metrics: MI and PCC. 

\subsection{Comparison of the Observed Privacy Leakage Results With Conventional Correlation Metrics}~\label{section:compare_the_observed_privacy_leakage_with_conventional_metrics}
\vspace{-2em}
\begin{figure}[h]
  \centering
  \includegraphics[trim={1cm 0.6cm 0cm .2cm},clip,width=0.95\linewidth]{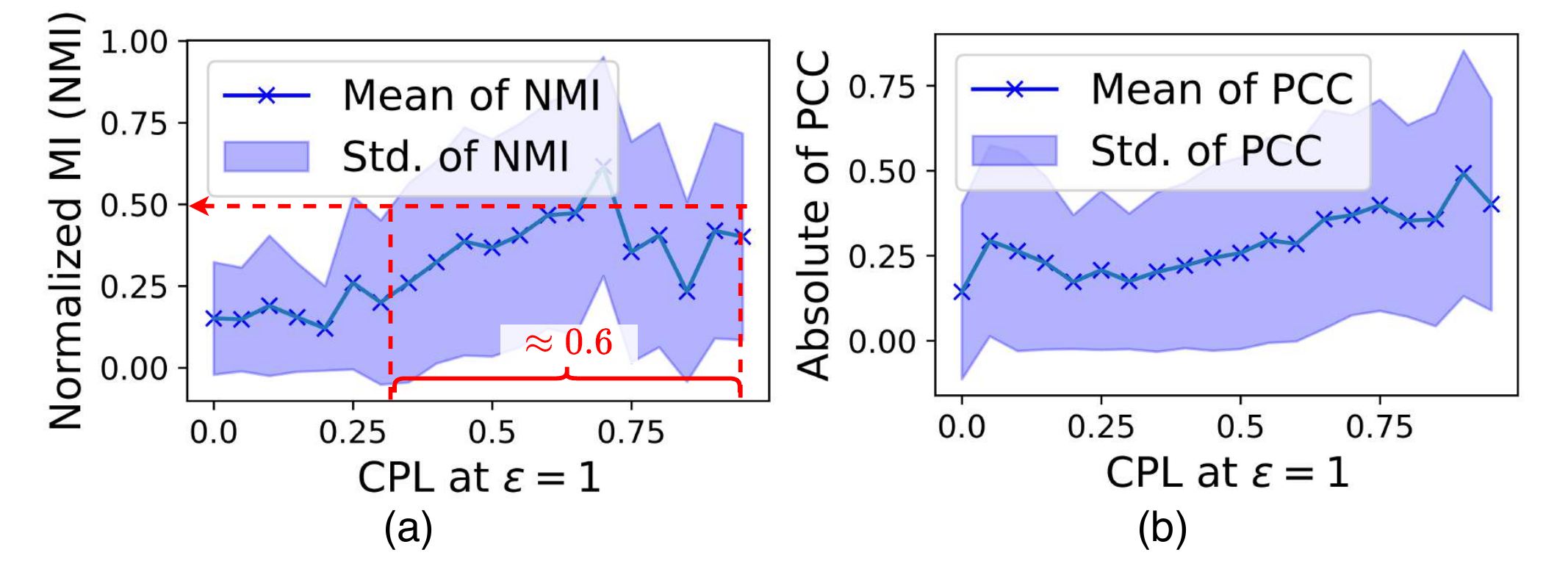}
  \caption{Mean and standard deviation (Std.) of correlation measured with conventional metrics vs CPL for all four datasets. (a) NMI vs CPL, (b) Absolute value of PCC vs CPL.} 
   \vspace{-1.5em}
   \label{fig:compare_conventional_metric_vs_privacy_leakage}
\end{figure}

Fig.~\ref{fig:compare_conventional_metric_vs_privacy_leakage} depicts the mean and standard deviation of the CPL versus correlation measured with conventional correlation metrics across all four datasets. We use GRR mechanisms with $\varepsilon=1$. As shown in Fig.~\ref{fig:compare_conventional_metric_vs_privacy_leakage}, CPL has a connection with the conventional correlation metrics. For example, CPL increases as normalised MI (NMI) increases. However, as the results show, CPL cannot be precisely predicted using NMI alone. For instance, if two attributes are correlated with NMI = 0.5, their CPL could be varied by approximately 0.6 as per Fig.~\ref{fig:compare_conventional_metric_vs_privacy_leakage}(a).

One of the primary reasons for the mismatch between conventional metrics and CPL could be the way these metrics are defined. For example, LDP is designed to evaluate the worst-case scenarios (by calculating supremum as in Definition.~\ref{defn:LDP}) of information leakage, whereas MI measures the average amount of information shared between variables (by calculating expectation~\cite{covers_mutual_information}).
Moreover, conventional correlation metrics give \emph{symmetric} measurements for the correlation, which \textbf{cannot} characterise the \emph{asymmetric} relationships in CPL (as shown in Section~\ref{section:Additional Privacy Leakage between each Attribute in the Cardiovascular Dataset}).
Therefore, this yields negative answers to \textbf{Q2} \emph{``Can conventional correlation metrics characterise CPL?''}

\subsection{Summary of the Statistical Observations and Challenges}\label{section:limitations_of_the_statistical_evaluation_method}

There are three main observations in the statistical analysis.
We have named them as \emph{OB1, OB2,} and \emph{OB3} as follows.

\begin{enumerate}
    \item[$\bullet$] \textbf{OB1} - Privacy leakage caused by correlated attributes is not necessarily symmetric.

    \item[$\bullet$] \textbf{OB2} - Most of the time, CPL saturates as the privacy budget of the neighbouring attribute increases.

    \item[$\bullet$] \textbf{OB3} - Conventional correlation metrics do not accurately characterise the CPL of LDP. 
\end{enumerate}

Additionally, this statistical method has \textbf{challenges} in terms of scalability. Specifically, statistical privacy leakage estimation requires a large number of samples and computational power to achieve statistically significant results. Therefore, in the next section, we derive a theoretical CPL to address these challenges. Moreover, we theoretically explain above statistical observations (i.e., \emph{OB1, OB2,} and \emph{OB3}) using derived theoretical results in Section~\ref{section:Theoretical Explanation for the Observations of Statistical Analysis}.

\section{Theoretical Analysis}\label{section:methodology}

This section theoretically analyses the correlation-induced privacy leakage (CPL) to answer \textbf{Q3}: \emph{``How can we systematically measure CPL under LDP?''}. 
In Section~\ref{section:Fomulate_the_Additional_Privacy_Leakage}, we formulate CPL for multiple attribute scenarios. Next, in Section~\ref {section:calculate additional privacy leakage between two correlated attributes}, we solve the derived formulation. In Section~\ref{section:Calculating CPL for specific mechanism} and Section~\ref{section:Calculating CPL for generic mechanisms}, we propose two algorithms to compute the CPL. 
Finally, in Section~\ref{section:Theoretical Explanation for the Observations of Statistical Analysis}, we theoretically explain the statistical observation in Section~\ref{section:motivation}.


\subsection{Formulation of CPL}\label{section:Fomulate_the_Additional_Privacy_Leakage}

Let us assume that ${X_1, \dots, X_n} \in X$ are $n$ correlated attributes that are perturbed using $(\varepsilon, \delta)$-LDP mechanisms $\mathcal{M}_1, \dots, \mathcal{M}_n \in \mathcal{M}$  and generating perturbed outputs ${Y_1, \dots, Y_n} \in Y$, respectively as shown in Fig.~\ref{fig:background_fig}(a). We can calculate the CPL of any attribute $X_k \in X$ caused by an attribute $\hat{X} \in X \setminus \{X_k\}$ as stated in Proposition~\ref{prop:total_privacy_leakage_from_probability}.

\begin{proposition}\label{prop:total_privacy_leakage_from_probability}
    Let ${X_1, \dots, X_n} \in X$ be $n$ correlated attributes that are perturbed using ($\varepsilon$,$\delta$)-LDP mechanisms $\mathcal{M}_1,\dots, \mathcal{M}_n \in \mathcal{M}$ producing perturbed outputs ${Y_1, \dots, Y_n} \in Y$, respectively. Let $\mathcal{X}_1, \dots, \mathcal{X}_n$ be the alphabets of the inputs and $\mathcal{Y}_1, \dots, \mathcal{Y}_n$ be the alphabets of the outputs.

    Suppose $L_{\Hat{X} \rightarrow X_k}$ is the correlation-induced privacy leakage (CPL) of any attribute $X_k \in X$ caused by release of attribute $\hat{X} \in X\setminus \{X_k\}$, then it is given by,
\begin{equation}\label{eqn:expanded_additional_privacy_leakage}
        L_{\Hat{X} \rightarrow X_k} = \ln{\sup_{y\in \mathcal{\Hat{Y}},x,x'\in\mathcal{X}_k} \left(\frac{\sum_{u \in \mathcal{\Hat{X}}} p( y|u)p(u|x)}{\sum_{v \in \mathcal{\Hat{X}}} p( y|v)p(v|x')}\right)}.
\end{equation}
    Here, $\hat{Y}$ is perturbed value of $\hat{X}$, and $\mathcal{\hat{Y}}$ is the alphabet of $\hat{Y}$.
\end{proposition}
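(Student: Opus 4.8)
The plan is to obtain \eqref{eqn:expanded_additional_privacy_leakage} directly from the definition of CPL in \eqref{eqn:privacy_leakage_additional_xk} by specialising to the two‑attribute sub‑system $X_k\to\hat X\to\hat Y$ and expanding the conditional law $p(\hat Y=y\mid X_k=x)$ via the law of total probability. Concretely, I would first instantiate \eqref{eqn:privacy_leakage_additional_xk} with $Z=\{\hat X\}$, so that the released set reduces to the single attribute $W=\{\hat Y\}$, its alphabet becomes $\mathcal W=\hat{\mathcal Y}$, and
\[
L_{\hat X\to X_k}=\ln\sup_{y\in\hat{\mathcal Y},\,x,x'\in\mathcal X_k}\frac{p(\hat Y=y\mid X_k=x)}{p(\hat Y=y\mid X_k=x')}.
\]

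Next I would rewrite the numerator by marginalising over the intermediate attribute $\hat X$:
\[
p(\hat Y=y\mid X_k=x)=\sum_{u\in\hat{\mathcal X}}p(\hat Y=y\mid \hat X=u,\,X_k=x)\,p(\hat X=u\mid X_k=x).
\]
The key structural step is then to invoke the processing model of Fig.~\ref{fig:background_fig}(a),(b): the perturbed value $\hat Y$ is produced by the mechanism $\hat{\mathcal M}$ acting on $\hat X$ alone (using only its own internal randomness), so $\hat Y$ is conditionally independent of $X_k$ given $\hat X$, i.e.\ $p(\hat Y=y\mid \hat X=u,X_k=x)=p(\hat Y=y\mid \hat X=u)=p(y\mid u)$. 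Writing $p(u\mid x):=p(\hat X=u\mid X_k=x)$ then gives $p(\hat Y=y\mid X_k=x)=\sum_{u\in\hat{\mathcal X}}p(y\mid u)p(u\mid x)$, and the same argument applied at $x'$ with a fresh summation variable $v$ yields the denominator $\sum_{v\in\hat{\mathcal X}}p(y\mid v)p(v\mid x')$. Substituting both expressions back into the display above produces exactly \eqref{eqn:expanded_additional_privacy_leakage}.

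I expect the main point requiring care — more a modelling clarification than a hard obstacle — to be the justification of the conditional independence $\hat Y\perp X_k\mid\hat X$, which has to be read off from the data‑flow assumptions of the system (each $\mathcal M_i$ perturbs only $X_i$); I would state this explicitly as the Markov chain $X_k\to\hat X\to\hat Y$. A secondary technical detail is the reduction from the set‑based LDP formulation of Definition~\ref{defn:LDP} to the singleton supremum over $y\in\hat{\mathcal Y}$ used here: since the ratio $\bigl(\sum_{y\in S}p(y\mid x)\bigr)/\bigl(\sum_{y\in S}p(y\mid x')\bigr)$ is dominated by $\max_{y\in S}p(y\mid x)/p(y\mid x')$, the supremum over subsets coincides with the supremum over singletons, so no generality is lost. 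One should also restrict the supremum implicitly to pairs $(y,x')$ with $p(\hat Y=y\mid X_k=x')>0$ (equivalently adopt the standard convention for the ratio), which is harmless since vanishing‑denominator terms do not contribute to the leakage.
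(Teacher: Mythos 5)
Your proposal is correct and follows essentially the same route as the paper's own proof: instantiate the CPL definition in~\eqref{eqn:privacy_leakage_additional_xk} with $W=\{\hat Y\}$, expand $p(y\mid x)$ by marginalising over $\hat X$, and invoke the Markov chain $X_k\to\hat X\to\hat Y$ to factor $p(y,u,x)=p(y\mid u)p(u\mid x)p(x)$. Your additional remarks on the reduction from set-based to singleton suprema and on vanishing denominators are sensible clarifications but not points of divergence.
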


The proof of Proposition~\ref{prop:total_privacy_leakage_from_probability} is available in Appendix~\ref{proof:prop:total_privacy_leakage_from_probability}.

Then, we can calculate the total privacy leakage (TPL) of any attribute $X_k$ if we have all the CPLs caused by each neighbouring attribute (i.e., $L_{X_i \rightarrow X_k}, i \in [n]\setminus \{k\}$) using~\eqref{eqn:total_privacy_leakage_composition}. Therefore, next, we calculate CPL between two correlated attributes.

\subsection{Calculating CPL Between Two Correlated Attributes}\label{section:calculate additional privacy leakage between two correlated attributes}

Suppose $\mathcal{\Hat{X}} = \{a_1, \dots, a_t\}$ is the alphabet of the attribute $\Hat{X}$ with $|\mathcal{\Hat{X}}|=t$. Then, we can define terms $p(y|u), p(u|x), p(u|x'), \ u \in \mathcal{\Hat{X}},\ y \in \mathcal{\Hat{Y}},\  x,x' \in \mathcal{X}_k$ in vector forms as $C = (p(y|a_1),\dots, p(y|a_t)),
        G = (p(a_1|x), \dots,p(a_t|x)),
        G'= (p(a_1|x'), \dots,p(a_t|x'))$.
Then, we can write (\ref{eqn:expanded_additional_privacy_leakage}) as shown in (\ref{eqn:exact_solution_for_additional_privacy_leakage}) with removing $\ln$. 
\begin{equation}\label{eqn:exact_solution_for_additional_privacy_leakage}
    e^{L_{\Hat{X} \rightarrow X_k}} = \sup_{y\in \mathcal{\Hat{Y}},x,x'\in\mathcal{X}_k} \frac{C^TG}{C^TG'}.
\end{equation}
Here, $T$ denotes the transpose operation. Therefore, we can calculate CPL of any attribute $X_k \in X$ caused by any other correlated attribute $\Hat{X}\in X \setminus \{X_k\}$ (i.e., $L_{\Hat{X} \rightarrow X_k}$), when we have the transition probabilities of the LDP mechanism (i.e., $C$) and the conditional probability distribution $P(\Hat{X}|X_k)$ (i.e., $G$ and $G'$). 
However, sometimes the LDP mechanisms are characterised using a privacy budget $\varepsilon$ and relaxation parameter $\delta$, as it could be challenging to obtain transition probabilities in sophisticated LDP mechanisms such as RAPPOR~\cite{rappor_original}, OLH~\cite{ldp_Frequency_Estimation}, etc. Therefore, next, we analyse the CPL in two different scenarios: (i) assuming we know the transition probabilities of the LDP mechanism in Section~\ref{section:Calculating CPL for specific mechanism} (covers simple mechanisms such as GRR and EXP), and (ii) assuming we only have $\varepsilon, \delta$ of LDP mechanisms in Section~\ref{section:Calculating CPL for generic mechanisms} (generic solution useful for analysing sophisticated LDP mechanisms.).  

\subsection{Calculating CPL for Known $G$, $G'$ and Transition Probabilities $C$}\label{section:Calculating CPL for specific mechanism}

When we have $G$, $G'$ and $C$, we can compute the \textbf{actual CPL} using
Algorithm~\ref{algorithm_1}.
Here, we find the maximum CPL, $L_{\hat{X}\rightarrow X_k}$ by traversing through all possible $C, G$ and $G'$ values over $\hat{X}$ and $X_k$ alphabet in for-loops at lines 2 and 4, using exhaustive search. 
In line 7, we keep the maximum value of $L_{\hat{X}\rightarrow X_k}$.

\textbf{Time Complexity Analysis - } Lines~2 and~4 execute nested \emph{for-loops} over $m$ and $t$, with each iteration performing a vector multiplication of cost $O(t)$, yielding $O(m^2 t^2)$ overall, where $m$ and $t$ denote the alphabet sizes of $X_k$ and $\hat{X}$, respectively. Therefore, the time complexity of Algorithm~\ref{algorithm_1} is $O(m^2 t^2)$.

Next, we validate the Algorithm~\ref{algorithm_1} results using statistically estimated CPL for GRR and EXP mechanisms as summarised in Appendix~\ref{section:Results: Statistical vs Theoretical algo 1}. According to the results, the normalised mean square error (NMSE) between theoretical (Algorithm~\ref{algorithm_1}) and statistically estimated CPL is always marginal ($<5.5 \times 10^{-3}$), which validates the Algorithm~\ref{algorithm_1}.

\begin{algorithm}[h]
\caption{Compute \emph{Actual CPL} caused by $\Hat{X}$ attribute about $X_k$ attribute for known $G,G'$ and $C$.}\label{algorithm_1}
\begin{algorithmic}[1]
    \Statex \textbf{Input:}
    \Statex \hspace{1em} $P(\Hat{X}|X_k)$ \Comment{\textit{Conditional probability distribution.}}
    \Statex \hspace{1em} $P(\hat{Y}|\hat{X})$\Comment{\textit{Transition probabilities of mechanism $\hat{M}$.}}
    \State $l \gets 0$ \Comment{\textit{Variable to keep the maximum leakage.}}
    \ForAll{$\hat{x} \in \mathcal{\hat{X}}$} 
    \State$ C \gets P(\hat{Y}|\hat{X}=\hat{x})$ \Comment{\textit{$C$ vector}}
    \ForAll{$x,x' \in \mathcal{X}_k, \quad x \neq x'$}
        \State$ G \gets P(\Hat{X}|X_k=x)$ \Comment{\textit{$G$ vector}}
        \State$ G' \gets P(\Hat{X}|X_k=x')$\Comment{\textit{$G'$ vector.}}
        \State $ l \gets \max (l, \ln\frac{C^T G}{C^T G'})$\Comment{\textit{Compute APL using~\eqref{eqn:exact_solution_for_additional_privacy_leakage}.}}
    \EndFor
    \EndFor
    \State \Return $L_{\Hat{X} \rightarrow X_k} = l$
\end{algorithmic}
\end{algorithm}


\subsection{Calculating CPL for Known $G$, $G',\varepsilon$ and $\delta$}\label{section:Calculating CPL for generic mechanisms}

We formulate the CPL computation as a maximisation problem as shown in (\ref{eqn:optimization_problem_over_C}). $G$ and $G'$ are two probability vectors corresponding to any two alphabet values of $X_k$.
\begin{equation} \label{eqn:optimization_problem_over_C}
    \begin{split}
        \underset{C}{\textbf{maximise}} & \quad F(C,G,G') = \frac{C^TG}{C^TG'}   \\
        \textbf{subject to} 
        &\quad c \leq e^{\varepsilon}\Tilde{c}+\delta, \ \forall c, \Tilde{c} \in C \ \textit{($\varepsilon,\delta$)-LDP constraints,}\\
        &\quad 0< c \leq 1, \ \forall c \in C \quad \quad \textit{probability constraints.}
    \end{split}
\end{equation} 
We can show that the maximum value of $F(C,G,G')$ is given by the boundary values of $C$ as stated in Proposition~\ref{prop:c_max c_min}.

\begin{proposition}\label{prop:c_max c_min}
Let $C$ be a size $t$ vector and let $c_{max} = \max(C)$ and $c_{min} = \min(C)$. Then, consider a vector $\Tilde{C}$ which consists only the elements $c_{min},c_{max}$ (i.e., $\Tilde{C} \in \{c_{\min}, c_{\max}\}^t$). Then, for a given function $F(C, G, G')$, we have $F(\Tilde{C},G,G') \geq F(C, G, G')$. Therefore, it is sufficient to consider $\Tilde{C}$ to calculate the maximum value of $F(C, G, G')$. Here, $c_{max}$ and $c_{min}$ satisfy $c_{max} = e^{\varepsilon}c_{min}+\delta$ at the global maximum of $J(S,c_{min}, c_{max}, G, G')$.
\end{proposition}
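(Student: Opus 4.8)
The plan is to reduce the maximisation in (\ref{eqn:optimization_problem_over_C}) to a search over the vertices of a box, exploiting the fact that $F(C,G,G')$ is a ratio of nonnegative linear functionals of $C$ and hence monotone in each coordinate. First I would fix the \emph{spread} of $C$: given any feasible $C$, set $c_{\min}=\min(C)$ and $c_{\max}=\max(C)$. Applying the $(\varepsilon,\delta)$-LDP constraint to the pair $(c_{\max},c_{\min})$ yields $c_{\max}\le e^{\varepsilon}c_{\min}+\delta$, and conversely any $c,\tilde c\in[c_{\min},c_{\max}]$ then satisfy $c\le c_{\max}\le e^{\varepsilon}c_{\min}+\delta\le e^{\varepsilon}\tilde c+\delta$ together with $0<c_{\min}\le c\le c_{\max}\le 1$. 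Hence the entire box $[c_{\min},c_{\max}]^{t}$ is feasible, and it suffices to maximise $F$ over this box.

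Next I would show the box-maximum lies at a vertex in $\{c_{\min},c_{\max}\}^{t}$, which is exactly the vector $\tilde C$ of the statement. Freeze every coordinate of $C$ except $c_j$ and write $F(C,G,G')=(A+c_jg_j)/(B+c_jg'_j)$ with $A=\sum_{i\ne j}c_ig_i\ge 0$ and $B=\sum_{i\ne j}c_ig'_i\ge 0$; note $B+c_jg'_j=C^{T}G'>0$ since $C$ is strictly positive and $G'$ is a probability vector. Then $\partial F/\partial c_j=(g_jB-g'_jA)/(B+c_jg'_j)^{2}$, whose sign is independent of $c_j$, so $F$ is monotone in $c_j$ on $[c_{\min},c_{\max}]$ and is not decreased by moving $c_j$ to an endpoint. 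Iterating over $j=1,\dots,t$ is a finite coordinate ascent turning any box point into a vertex $\tilde C$ with $F(\tilde C,G,G')\ge F(C,G,G')$. Applying this to a maximiser of (\ref{eqn:optimization_problem_over_C}) — or, since the feasible set is open at $c=0$, to a maximising sequence and its limit — gives $F(\tilde C,G,G')\ge F(C,G,G')$ for all feasible $C$, which is the first assertion.

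For the boundary condition I would parametrise the vertex by $S=\{i:\tilde c_i=c_{\max}\}$. Using $\sum_i g_i=\sum_i g'_i=1$ (both $G$ and $G'$ are conditional distributions over $\mathcal{\hat{X}}$), the value becomes
\[
J(S,c_{\min},c_{\max},G,G')=\frac{c_{\min}+(c_{\max}-c_{\min})P}{c_{\min}+(c_{\max}-c_{\min})Q},\qquad P=\sum_{i\in S}g_i,\ \ Q=\sum_{i\in S}g'_i .
\]
At the maximum we may assume $P>Q$ (otherwise $J\le 1$, a value already attained by a constant $C$), and then $\partial J/\partial c_{\max}=(P-Q)\,c_{\min}\,/\,\bigl(c_{\min}+(c_{\max}-c_{\min})Q\bigr)^{2}>0$, so $J$ strictly increases in $c_{\max}$. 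Hence the optimum drives $c_{\max}$ to its largest admissible value for the given $c_{\min}$, namely $c_{\max}=e^{\varepsilon}c_{\min}+\delta$ in the regime where this does not exceed the probability bound $1$, which is the stated condition.

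The part I expect to be delicate is the vertex reduction: it is essential to pin the spread $[c_{\min},c_{\max}]$ \emph{before} arguing coordinatewise, because in the raw problem the admissible range of each $c_j$ is coupled to the other coordinates through the pairwise LDP inequalities, so ``moving to an endpoint'' is only meaningful once the feasible region has been replaced by a genuine box. A second subtlety is that the feasible set is not closed ($c>0$), so the maximiser may exist only as a limit; this does not affect the inequality $F(\tilde C,G,G')\ge F(C,G,G')$ but should be stated carefully. The corner case $e^{\varepsilon}c_{\min}+\delta>1$, where the binding bound on $c_{\max}$ is the probability constraint rather than the LDP constraint, falls outside the regime the proposition asserts and can be set aside.
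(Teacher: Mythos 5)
Your proposal is correct and follows essentially the same route as the paper: coordinatewise monotonicity of the linear-fractional objective (the paper's Lemma~\ref{lemma:monotone_fn}) reduces the maximisation to vertices of $\{c_{\min},c_{\max}\}^t$, and then monotonicity of $J$ in $c_{\max}$ (the paper parametrises $c_{\max}=k_1c_{\min}+k_2$ and argues monotonicity in $k_1,k_2$; you differentiate in $c_{\max}$ directly, noting $\partial J/\partial c_{\max}\propto (P-Q)c_{\min}$) forces the LDP constraint to bind at the optimum. Your added care about feasibility of the whole box $[c_{\min},c_{\max}]^t$ and the non-closed feasible set are refinements the paper elides, not a different argument.
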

The proof of Proposition~\ref{prop:c_max c_min} is available in Appendix~\ref{proof:prop:c_max c_min}.



Then, we can express $F(C,G,G')$ by separating the elements into $c_{min}$ and $c_{max}$ coefficients using Proposition.~\ref{prop:c_max c_min}. Let $S \subseteq [t]$ denote the indices of the elements associated with $c_{max}$. Based on Proposition.~\ref{prop:c_max c_min} $F(C,G,G')$ can be written in the form of $H(c_{min},S, G,G')$ as
\begin{equation}\label{eqn:H_with_delta}
    H(\cdot) = \frac{c_{min}\sum_{i \in [t]\setminus S}g_i+(e^{\varepsilon}c_{min}+\delta)\sum_{j \in S}g_j}{c_{min}\sum_{i \in [t]\setminus S}g'_i+(e^{\varepsilon}c_{min}+\delta)\sum_{j \in S}g'_j}.
\end{equation}
Therefore, we can solve \eqref{eqn:optimization_problem_over_C} by calculating the maximum $H(\cdot)$ over $S$. Let $H^*$ be $H^*=\max_S H(\cdot)$. Then, we can compute CPL (i.e., $L_{\Hat{X} \rightarrow X_k}$) by considering the maximum $H^*$ over all possible $G,G'$ corresponding to the alphabets of $X_k$  as
\begin{equation}\label{eqn:H* and privacy leakage}
    L_{\Hat{X} \rightarrow X_k} = \ln \max_{x,x' \in \mathcal{X}_k} H^*.
\end{equation}
Therefore, next, we calculate $H^*$ in two different cases: $\delta > 0$ and $\delta = 0$.

\subsubsection{Case $\delta > 0$}\label{section:delta_greater_than_zero_theoretical_proof}

$c_{\min}$ depends on the transition probabilities of the LDP mechanism. Its exact value is unknown, as our objective here is to compute CPL using $\varepsilon$ and $\delta$.
Therefore, we \emph{remove} $c_{min}$ from the expression. Here, we know that \eqref{eqn:H_with_delta} is monotone over $c_{min}$ and $0 < c_{min} \leq 1 < \infty$. Therefore, we have,
\begin{equation}\label{eqn:H_upperbounds_case1_delta_greater_than_0}
    H(\cdot) < \max \left\{\frac{\sum_{i \in [t] \setminus S}g_i+e^{\varepsilon}\sum_{j \in S}g_j}{\sum_{i \in [t] \setminus S}g'_i+e^{\varepsilon}\sum_{j \in S}g'_j}, \frac{\sum_{j \in S} g_j}{\sum_{j \in S} g'_j} \right\}.
\end{equation}
Here, $\frac{\sum_{i \in [t] \setminus S}g_i+e^{\varepsilon}\sum_{j \in S}g_j}{\sum_{i \in [t] \setminus S}g'_i+e^{\varepsilon}\sum_{j \in S}g'_j}$ is monotone over $e^\varepsilon$, $1 \leq e^\varepsilon \leq \infty$. Therefore, $\frac{\sum_{i \in [t] \setminus S}g_i+e^{\varepsilon}\sum_{j \in S}g_j}{\sum_{i \in [t] \setminus S}g'_i+e^{\varepsilon}\sum_{j \in S}g'_j} \leq \linebreak \max\left\{1, \frac{\sum_{j \in S} g_j}{\sum_{j \in S} g'_j}\right\}$. Thus, from \eqref{eqn:H_upperbounds_case1_delta_greater_than_0}, $H^* < \max_S \left\{1, \frac{\sum_{j \in S} g_j}{\sum_{j \in S} g'_j} \right\}$. We know $\max_S \frac{\sum_{j \in S} g_j}{\sum_{j \in S} g'_j} = \max_{i \in [t]}\frac{g_i}{g'_i}$ from Lemma~\ref{lemma:maximum_value_of_linear_fraction}. Moreover, we know $1 \leq \max_{i \in [t]}\frac{g_i}{g'_i}$. Therefore, we have $H^* < \max_{i \in [t]}\frac{g_i}{g'_i}$. From \eqref{eqn:H* and privacy leakage}, we have the CPL $L_{\Hat{X} \rightarrow X_k} \leq \ln  \max_{\Hat{x}\in \mathcal{\Hat{X}},x,x'\in\mathcal{X}_k} \frac{p(\Hat{x}|x)}{p(\Hat{x}|x')} $ for $\delta >0$ scenario. However, as stated in Corollary~\ref{corollary:max_privacy_leakage_privacy_budget}, the maximum $L_{\Hat{X}\rightarrow X_k}$ as $\varepsilon$ varies is $\max_\varepsilon L_{\Hat{X}\rightarrow X_k} = \ln \max_{\Hat{x}\in \mathcal{\Hat{X}},x,x'\in\mathcal{X}_k} \frac{p(\Hat{x}|x)}{p(\Hat{x}|x')}$. This implies that when $\delta > 0$, we \textbf{cannot} calculate a tighter upper bound than the maximum CPL without knowing the $c_{min}$ of the LDP mechanism.

\begin{lemma}\label{lemma:maximum_value_of_linear_fraction}
    Let $G=(g_1, g_2,\dots,g_t)$ and $G'=(g'_1, g'_2,\dots,g'_t)$ be two vectors of size $t$ where $0 \leq g_i \leq 1$ and $0 \leq g'_i \leq 1$ for all $i \in [t]$. Suppose $S \subseteq [t]$ be a nonempty subset. Then, $\max_{S} \frac{\sum_{i \in S} g_i}{\sum_{i \in S} g'_i} = \max_{i \in [t]} \frac{g_i}{g'_i}$.
\end{lemma}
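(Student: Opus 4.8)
The plan is to establish the identity by the elementary ``mediant inequality'': for nonnegative reals with positive denominators, $\frac{a+c}{b+d}$ always lies between $\frac{a}{b}$ and $\frac{c}{d}$. Concretely, if $\frac{a}{b}\le\frac{c}{d}$ then cross-multiplying (valid since $b,d>0$) gives $ad\le bc$, hence $a(b+d)=ab+ad\le ab+bc=b(a+c)$, i.e. $\frac{a}{b}\le\frac{a+c}{b+d}$, and symmetrically $\frac{a+c}{b+d}\le\frac{c}{d}$. So a sum-of-entries ratio is a ``weighted mediant'' of the individual ratios and cannot exceed the largest of them.

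First I would dispose of the degenerate case: if $\sum_{i\in S}g'_i=0$ then every $g'_i=0$ for $i\in S$, so $g_i/g'_i$ is $+\infty$ (or the quotient is treated as $+\infty$) for those $i$, and the claimed equality holds trivially under that convention; otherwise I may assume $\sum_{i\in S}g'_i>0$, and after discarding indices with $g'_i=0$ (which only increase $\max_i g_i/g'_i$ to $+\infty$ if any such $g_i>0$) I may assume all denominators are strictly positive. This matches the setting of~\eqref{eqn:optimization_problem_over_C}, where the probability constraint forces strictly positive entries.

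Next I would prove $\frac{\sum_{i\in S}g_i}{\sum_{i\in S}g'_i}\le\max_{i\in[t]}\frac{g_i}{g'_i}$ for every nonempty $S$ by induction on $|S|$. The base case $|S|=1$ is immediate. For the inductive step, write $S=S'\cup\{j\}$ with $j\notin S'$; applying the mediant inequality to $a=\sum_{i\in S'}g_i$, $b=\sum_{i\in S'}g'_i$, $c=g_j$, $d=g'_j$ gives
\begin{equation}
\frac{\sum_{i\in S}g_i}{\sum_{i\in S}g'_i}\le\max\!\left\{\frac{\sum_{i\in S'}g_i}{\sum_{i\in S'}g'_i},\ \frac{g_j}{g'_j}\right\}\le\max_{i\in[t]}\frac{g_i}{g'_i},
\end{equation}
where the last step uses the induction hypothesis on $S'$. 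Taking the maximum over all nonempty $S$ yields $\max_S\frac{\sum_{i\in S}g_i}{\sum_{i\in S}g'_i}\le\max_{i\in[t]}\frac{g_i}{g'_i}$. Conversely, choosing $S=\{i^*\}$ where $i^*\in\arg\max_{i\in[t]}\frac{g_i}{g'_i}$ shows the left side is also $\ge\max_{i\in[t]}\frac{g_i}{g'_i}$, giving equality.

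The only real subtlety—hence the step to be most careful about—is the treatment of vanishing denominators and the $0/0$ case; once the convention is fixed (and it is natural to fix it so that the supremum in~\eqref{eqn:expanded_additional_privacy_leakage} is well defined), the remainder is a routine induction. I do not expect any genuine obstacle beyond bookkeeping of these edge cases.
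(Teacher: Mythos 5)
Your proposal is correct and follows essentially the same route as the paper: the paper's proof also rests on the two-term mediant inequality $\frac{g_a+g_b}{g'_a+g'_b}\leq\max\bigl\{\frac{g_a}{g'_a},\frac{g_b}{g'_b}\bigr\}$ and then extends it to sums over arbitrary subsets, with the singleton choice giving the matching lower bound. Your version merely makes the extension explicit via induction on $|S|$ and adds the (reasonable) bookkeeping for vanishing denominators, which the paper leaves implicit.
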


The proof of Lemma~\ref{lemma:maximum_value_of_linear_fraction} is available in Appendix~\ref{proof:lemma:maximum_value_of_linear_fraction}.

\begin{corollary}\label{corollary:max_privacy_leakage_privacy_budget}
    Let $X_k$ and $\Hat{X}$ be two correlated attributes. Suppose $\Hat{X}$ attribute is perturbed by ($\varepsilon,\delta$)-LDP mechanism $\Hat{M}$. Then, the CPL caused by $\Hat{X}$ about $X_k$ saturates as $\varepsilon$ increases, as shown in \eqref{eqn:corollary:max_privacy_leakage_privacy_budget}. Moreover, this is the maximum $L_{\Hat{X}\rightarrow X_k}$ as $\varepsilon$ varies (i.e., $\max_{\varepsilon} L_{\Hat{X}\rightarrow X_k}$),
    \begin{equation}\label{eqn:corollary:max_privacy_leakage_privacy_budget}
        \lim_{\varepsilon \rightarrow \infty} L_{\Hat{X}\rightarrow X_k} = \max_{\varepsilon} L_{\Hat{X}\rightarrow X_k} = \ln \max_{\Hat{x}\in \mathcal{\Hat{X}},x,x'\in\mathcal{X}_k} \frac{p(\Hat{x}|x)}{p(\Hat{x}|x')}.
    \end{equation}
    Here, $\mathcal{\Hat{X}}$ and $\mathcal{X}_k$ are alphabets of  $\Hat{X}$ and $X_k$, respectively. 
\end{corollary}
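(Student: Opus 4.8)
The plan is to treat $L_{\hat X\rightarrow X_k}$ as a function of the privacy budget $\varepsilon$, with $\delta$ and the conditional law $P(\hat X\mid X_k)$ held fixed, and to establish three facts: (a) it is non-decreasing in $\varepsilon$; (b) for every $\varepsilon$ it satisfies $L_{\hat X\rightarrow X_k}\le \ln\max_{\hat x,x,x'} p(\hat x\mid x)/p(\hat x\mid x')$; and (c) it converges to that same value as $\varepsilon\rightarrow\infty$. A non-decreasing function that is bounded above by a constant it approaches in the limit attains that constant as both its limit and its supremum, so (a)--(c) together yield $\lim_{\varepsilon\to\infty} L_{\hat X\rightarrow X_k} = \sup_\varepsilon L_{\hat X\rightarrow X_k} = \ln\max_{\hat x,x,x'} p(\hat x\mid x)/p(\hat x\mid x')$, which is \eqref{eqn:corollary:max_privacy_leakage_privacy_budget} (with ``$\max_\varepsilon$'' understood as this supremum, which for $\delta=0$ is attained only in the limit). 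Fact (a) is immediate from the reformulation: by \eqref{eqn:optimization_problem_over_C} and \eqref{eqn:H* and privacy leakage}, $e^{L_{\hat X\rightarrow X_k}}=\max_{x,x'}\max_{C}F(C,G,G')$, where $G,G'$ are the conditional-probability vectors attached to $x,x'$ and $C$ ranges over $\{C\in(0,1]^t: c\le e^\varepsilon\tilde c+\delta\ \forall c,\tilde c\in C\}$; since $\tilde c\ge 0$, increasing $\varepsilon$ only relaxes each constraint, so the feasible set — and hence $\max_C F$ and $L_{\hat X\rightarrow X_k}$ — is non-decreasing in $\varepsilon$.

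Fact (b) is exactly the chain of inequalities carried out immediately before the corollary, which holds for every $\varepsilon\ge 0$ and $\delta\in[0,1)$: invoke Proposition~\ref{prop:c_max c_min} to reduce to $H^{*}=\max_S H(c_{min},S,G,G')$ with $H$ as in \eqref{eqn:H_with_delta}; use that $H$ is monotone in $c_{min}\in(0,1]$ and, at its worst $c_{min}$-boundary, monotone in $e^\varepsilon\in[1,\infty)$ with value $1$ at $e^\varepsilon=1$ (because $G,G'$ are probability vectors, so $\sum_i g_i=\sum_i g'_i=1$); then Lemma~\ref{lemma:maximum_value_of_linear_fraction} gives $\max_S \sum_{j\in S}g_j/\sum_{j\in S}g'_j=\max_{i\in[t]}g_i/g'_i\ge 1$, whence $H^{*}<\max_{i\in[t]}g_i/g'_i$ and, after $\max_{x,x'}$ and $\ln$, the claimed bound. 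For $\delta=0$ the $c_{min}$ factor cancels in \eqref{eqn:H_with_delta} and the same argument applies verbatim.

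The substantive step is fact (c): the bound of (b) must be tight in the limit. I would fix a triple $(x^\star,x'^\star,a_{i^\star})$ attaining $\max_{\hat x,x,x'} p(\hat x\mid x)/p(\hat x\mid x')$, abbreviate $g^\star=p(a_{i^\star}\mid x^\star)$ and $g'^\star=p(a_{i^\star}\mid x'^\star)$, and for each $\varepsilon$ feed the optimisation the two-valued vector $C$ with $c_{i^\star}=1$ and $c_j=(1-\delta)/e^\varepsilon$ for $j\neq i^\star$. This $C$ lies in $(0,1]^t$ (since $\delta<1$), its only binding LDP constraint $1\le e^\varepsilon\cdot(1-\delta)/e^\varepsilon+\delta$ holds with equality, and it coincides with the $H(\cdot)$ value at $c_{min}=(1-\delta)/e^\varepsilon$, $S=\{i^\star\}$; a genuine mechanism exhibiting the same limiting behaviour is $\varepsilon$-GRR on $\mathcal{\hat X}$. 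Since $\sum_{j\ne i^\star}g_j=1-g^\star$ and $e^\varepsilon c_{min}+\delta=1$, evaluating gives $F(C,G,G')=\big(g^\star+\tfrac{1-\delta}{e^\varepsilon}(1-g^\star)\big)\big/\big(g'^\star+\tfrac{1-\delta}{e^\varepsilon}(1-g'^\star)\big)\xrightarrow{\varepsilon\to\infty} g^\star/g'^\star$, so $L_{\hat X\rightarrow X_k}\ge \ln F\to \ln(g^\star/g'^\star)=\ln\max_{\hat x,x,x'} p(\hat x\mid x)/p(\hat x\mid x')$. Combined with (b) this forces the limit to equal that value, and with (a) the value is also $\sup_\varepsilon L_{\hat X\rightarrow X_k}$, completing the proof.

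I expect the only real obstacle to be the bookkeeping around degenerate cases and the $\sup$/$\max$ wording rather than anything deep. If $p(\hat x\mid x')=0$ while $p(\hat x\mid x)>0$ for some $\hat x,x'$, then $\max_{\hat x,x,x'}p(\hat x\mid x)/p(\hat x\mid x')=+\infty$; the same construction placed at that coordinate gives $F\to+\infty$ and both sides of \eqref{eqn:corollary:max_privacy_leakage_privacy_budget} are $+\infty$, so the statement still holds (full support rules this out). When $\delta=0$ the supremum over $\varepsilon$ is not attained at any finite $\varepsilon$ — there $F=\big(1+(e^\varepsilon-1)g^\star\big)/\big(1+(e^\varepsilon-1)g'^\star\big)<g^\star/g'^\star$ for every finite $\varepsilon$ whenever $g^\star>g'^\star$ — so ``$\max_\varepsilon$'' must be read as $\sup_\varepsilon$, which equals the limit by the monotonicity in (a). Everything else reduces to the reformulation \eqref{eqn:optimization_problem_over_C}, Proposition~\ref{prop:c_max c_min} and Lemma~\ref{lemma:maximum_value_of_linear_fraction}, all of which are already available.
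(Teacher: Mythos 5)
Your proof is correct and arrives at the same key identity as the paper --- the reduction to $H(\cdot)$ in \eqref{eqn:H_with_delta} followed by Lemma~\ref{lemma:maximum_value_of_linear_fraction} to convert $\max_S \sum_{j\in S}g_j/\sum_{j\in S}g'_j$ into $\max_{i\in[t]} g_i/g'_i$ --- but your organisation is more careful and closes two gaps the paper leaves open. The paper's proof simply asserts that $L_{\hat{X}\rightarrow X_k}$ is largest ``when we release the neighbouring attribute without any perturbation,'' then computes $\lim_{\varepsilon\rightarrow\infty}H(\cdot)=\sum_{j\in S}g_j/\sum_{j\in S}g'_j$ (implicitly holding $c_{min}$ fixed even though the optimal $c_{min}$ may vary with $\varepsilon$) and applies the lemma; it never justifies that the limiting value is also the supremum over $\varepsilon$, nor that the optimisation actually approaches it. Your decomposition supplies exactly these pieces: the feasible-set monotonicity in (a) makes $\sup_\varepsilon = \lim_{\varepsilon\rightarrow\infty}$ rigorous, the uniform upper bound in (b) is the paper's own $\delta>0$ chain applied verbatim for every $\varepsilon$, and the explicit near-extremal $C$ (or $\varepsilon$-GRR) in (c) witnesses achievability in the limit, which the paper never exhibits. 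Your remark that for $\delta=0$ the supremum is not attained at any finite $\varepsilon$, so that ``$\max_\varepsilon$'' in \eqref{eqn:corollary:max_privacy_leakage_privacy_budget} should be read as a supremum, is a correct and worthwhile refinement of the statement's wording, as is your handling of the degenerate case $p(\hat{x}\mid x')=0$. I see no gaps.
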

The proof of Corollary~\ref{corollary:max_privacy_leakage_privacy_budget} is available in Appendix~\ref{proof:corollary:max_privacy_leakage_privacy_budget}.

However, we identify that we can reformulate the $H(\cdot)$ in a form of relaxed privacy leakage similar to the approximated LDP. From \eqref{eqn:H_with_delta},
\begin{equation}
\begin{split}
    H(\cdot) = & \frac{c_{min}\sum_{i \in [t] \setminus S}g_i+e^{\varepsilon}c_{min}\sum_{j \in S}g_j}{c_{min}\sum_{i \in [t] \setminus S}g'_i+(e^{\varepsilon}c_{min}+\delta)\sum_{j \in S}g'_j} \\
    & + \frac{\delta\sum_{j \in S}g_j}{c_{min}\sum_{i \in [t] \setminus S}g'_i+(e^{\varepsilon}c_{min}+\delta)\sum_{j \in S}g'_j}. \\
\end{split}
\end{equation}

\noindent Since $\delta >0 $, we have $\frac{c_{min}\sum_{i \in [t] \setminus S}g_i+e^{\varepsilon}c_{min}\sum_{j \in S}g_j}{c_{min}\sum_{i \in [t] \setminus S}g'_i+(e^{\varepsilon}c_{min}+\delta)\sum_{j \in S}g'_j} <  \frac{c_{min}\sum_{i \in [t] \setminus S}g_i+e^{\varepsilon}c_{min}\sum_{j \in S}g_j}{c_{min}\sum_{i \in [t] \setminus S}g'_i+e^{\varepsilon}c_{min}\sum_{j \in S}g'_j}$. Moreover, from \eqref{eqn:privacy_leakage_additional_xk}, we know $p(y|x') = c_{min}\sum_{i \in [t] \setminus S}g'_i+(e^{\varepsilon}c_{min}+\delta)\sum_{j \in S}g'_j$ where $y \in \Hat{\mathcal{Y}},\ x,x' \in \mathcal{X}_k$. Therefore, we have, 

\begin{equation}\label{eqn:H_delta_greater than zero}
    H(\cdot) = \frac{p(y|x)}{p(y|x')}<\frac{\sum_{i \in [t] \setminus S}g_i+e^{\varepsilon}\sum_{j \in S}g_j}{\sum_{i \in [t] \setminus S}g'_i+e^{\varepsilon}\sum_{j \in S}g'_j} + \frac{\delta\sum_{j \in S} g_j}{p(y|x')}.
\end{equation}

Above \eqref{eqn:H_delta_greater than zero} has a form similar to the \textbf{approximated LDP} definition  (Definition.~\ref{defn:LDP}) as
\begin{equation}\label{eqn:reformed_ldp_def}
    \frac{p(y|x)}{p(y|x')} \leq e^\varepsilon + \frac{\delta}{p(y|x')}, \quad \forall y \in \mathcal{Y}, \ \forall x,x' \in \mathcal{X},
\end{equation}
where $\mathcal{X}$ and $\mathcal{Y}$ are the alphabets of inputs and outputs of the ($\varepsilon,\delta$)-LDP mechanism, respectively.

Therefore, we can take $\frac{\sum_{i \in [t] \setminus S^*}g_i+e^{\varepsilon}\sum_{j \in S^*}g_j}{\sum_{i \in [t] \setminus S^*}g'_i+e^{\varepsilon}\sum_{j \in S^*}g'_j}$ as the \textbf{privacy leakage} and $\delta\sum_{j \in S^*} g_j$ as the \textbf{relaxation amount} of the privacy leakage guarantee. Here, $S^*$ is the optimal $S$ which maximise \linebreak $\frac{\sum_{i \in [t] \setminus S}g_i+e^{\varepsilon}\sum_{j \in S}g_j}{\sum_{i \in [t] \setminus S}g'_i+e^{\varepsilon}\sum_{j \in S}g'_j}$. Next, we show that $\delta=0$ scenario also has same privacy leakage $\max_S \frac{\sum_{i \in [t] \setminus S}g_i+e^{\varepsilon}\sum_{j \in S}g_j}{\sum_{i \in [t] \setminus S}g'_i+e^{\varepsilon}\sum_{j \in S}g'_j}$.

\subsubsection{Case $\delta = 0$}

Then, we have,
\begin{equation}\label{eqn:H}
    H(r,G,G', \varepsilon, \delta = 0) = \frac{\sum_{i \in [t] \setminus S}g_i+e^{\varepsilon}\sum_{j \in S}g_j}{\sum_{i \in [t] \setminus S}g'_i+e^{\varepsilon}\sum_{j \in S}g'_j}.
\end{equation}
Since the privacy leakage of both $\delta >0$ and $\delta =0$ have the same expression given in \eqref{eqn:H}, we can use the same algorithm to solve both scenarios. Here, note that when $\delta>0$, there is a relaxation component for the privacy leakage. Next, let us define terms $\overline{H}, H^*$ and $f^*$ for ease of usage.
\begin{align}
    \overline{H} &= \frac{\sum_{i \in [t] \setminus S}g_i+e^{\varepsilon}\sum_{j \in S}g_j}{\sum_{i \in [t] \setminus S}g'_i+e^{\varepsilon}\sum_{j \in S}g'_j}, \\
    \overline{H}^* &= \max_S \overline{H}, \\
    f^* &= \delta\sum_{j \in S^*} g_j, \text{ where } S^* = \arg\max_S \overline{H}.
\end{align}
Therefore, CPL (i.e., $L_{\hat{X} \rightarrow X_k}$) can be represented using two components as $L_{\hat{X} \rightarrow X_k} = (\ln \max_{x,x' \in X_k} \overline{H}^*,\overline{f}^*)$. Here, we take the relaxation component $\overline{f}^*$ as the $f^*$ value corresponds to the maximum $\overline{H}^*$. If $\delta=0$, we can write CPL as $L_{\hat{X} \rightarrow X_k} = \ln \max_{x,x' \in X_k} \overline{H}^*$.

We notice that this problem can be solved time efficiently $O(m^2t\log t)$ by adopting a greedy strategy, as explained in the next section. 

\subsubsection{Algorithm of Calculating CPL for Known $G$, $G',\varepsilon$ and $\delta$}\label{section:Proposed Algorithm to Compute CPL}


\begin{algorithm}[h]
\caption{Compute an \emph{Upper Bound for CPL} caused by $\Hat{X}$ attribute about $X_k$ attribute (i.e., $L_{\Hat{X} \rightarrow X_k}$).}\label{algorithm_2_part_2}
\begin{algorithmic}[1]
    \Statex \textbf{Input:}
    \Statex \hspace{1em} $P(\Hat{X}|X_k)$ \Comment{\textit{Conditional probability distribution.}}
    \Statex \hspace{1em} $\varepsilon, \delta$\Comment{\textit{Privacy budget and relaxation of $\Hat{M}$.}}
    \State $l \gets 0$ \Comment{\textit{Variable to keep the maximum leakage.}}
    \ForAll{$x,x' \in \mathcal{X}_k, \quad x \neq x'$}     \State$ G \gets P(\Hat{X}|X_k=x)$ \Comment{\textit{$G$ vector}}
        \State$ G' \gets P(\Hat{X}|X_k=x')$\Comment{\textit{$G'$ vector.}}
        \State $Q \gets \text{sort}(G \oslash G')$ \Comment{\textit{Descending order of element-wise ratios.}}
    \State $A, B \gets 0$ \Comment{\textit{Variables to keep the summations.}}
    \ForAll{$q \in Q$}
            \If{$q \geq \frac{1+A(exp(\varepsilon)-1)}{1+B(exp(\varepsilon)-1)}$}
                \State $A \gets A + g$ \Comment{\textit{Element of $G$ corresponds to $q$.}}
                \State $B \gets B + g'$ \Comment{\textit{Element of $G'$ corresponds to $q$.}}
            \EndIf
        \EndFor
    \State $\overline{H}^* =\frac{1+A(exp(\varepsilon)-1)}{1+B(exp(\varepsilon)-1)}$, $f^* = \delta A$
        \If{$l < \ln\overline{H}^*$}
                \State $l \gets \ln\overline{H}^*$ \Comment{\textit{Keep the maximum leakage.}}
                \State $\overline{f}^* \gets f^*$ \Comment{\textit{Keep the flexibility.}
                }
        \EndIf
    \EndFor
    
    \State \Return $L_{\Hat{X} \rightarrow X_k} = (l,\overline{f}^*)$ 
\end{algorithmic}
\end{algorithm}


Algorithm~\ref{algorithm_2_part_2} is an exhaustive search to find the maximum $\overline{H}^*$ by traversing through all possible $G$ and $G'$ values over $X_k$ alphabet in \emph{for-loop} at line 2. 
Next, we take two variables $A$ and $B$ to store $\sum_{j \in S}g_j$ and $\sum_{j \in S}g'_j$, respectively at lines 6. 
According to the Lemma.~\ref{lemma:sigma_greater_than_ratio}, to maximise $H(\cdot)$ for selected $G,G'$, we should add $g,g'$ to $A,B$, respectively, if $\frac{g}{g'} \geq \frac{1+A(exp(\varepsilon)-1)}{1+B(exp(\varepsilon)-1)}$. Therefore, we sort the element-wise ratio $G\oslash G'$ in descending order, and update $A$ and $B$ by checking the ratio of $\frac{g}{g'}$ as in lines 5-12.
According to Theorem~\ref{Theorem:optimality_of_algo}, this algorithm always finds the optimal solution for $\overline{H}$ for selected $G,G'$.
CPL $L_{\Hat{X} \rightarrow X_k}$ will be $\ln$ of the maximum $\overline{H}^*$ out of all $G,G'$, which will be stored in variable $l$ (lines 14-15). Also, we can take the $\overline{f}^*$ as the value of $f^*$ corresponds to the maximum $\overline{H}^*$ as shown in line 16. 

\textbf{Time Complexity Analysis - }
Time complexity for computing optimal $L_{\Hat{X} \rightarrow X_k}$ is $O(m^2t \log{t})$ since \emph{for-loops} at line 2 have  $O(m^2)$ time complexity and computing $\overline{H}^*$ for selected $G,G'$ (lines 2-17) has time complexity of $O(t \log{t})$ (because of the sorting $G \oslash G'$ step~\cite{WikipediaSortingAlgorithm}). Here, $m$ and $t$ are the alphabet sizes of $X_k$ and $\hat{X}$.

For the completeness of the theoretical analysis, Appendix~\ref{section:appendix:Extreme Scenarios} contains the analysis of CPL in \textbf{extreme scenarios}.

\begin{lemma}\label{lemma:sigma_greater_than_ratio}
Let $0 \leq A, B, g' \leq 1$, and suppose $e^\varepsilon > 1$. Then,
   \begin{equation*}
       \frac{1+(A+\sigma g')\lambda}{1+(B+g')\lambda} -\frac{1+A\lambda}{1+B\lambda} \geq 0 \iff \sigma \geq \frac{1+A\lambda}{1+B\lambda}.
   \end{equation*}
   Here, $\lambda = (e^\varepsilon-1)$.
\end{lemma}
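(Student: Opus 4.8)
The plan is to prove the equivalence by clearing denominators and collapsing everything to a single sign condition that is linear in $\sigma$.

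First I would record the sign facts that make the manipulation legal. Since $e^\varepsilon > 1$ we have $\lambda = e^\varepsilon - 1 > 0$, and since $A, B, g' \ge 0$ both $1 + B\lambda$ and $1 + (B+g')\lambda$ are at least $1$, hence strictly positive. Therefore multiplying the displayed inequality through by the positive quantity $(1+(B+g')\lambda)(1+B\lambda)$ is an equivalence, and the claim in Lemma~\ref{lemma:sigma_greater_than_ratio} reduces to showing
\[
\big(1+(A+\sigma g')\lambda\big)\big(1+B\lambda\big) - \big(1+A\lambda\big)\big(1+(B+g')\lambda\big) \;\ge\; 0 \;\iff\; \sigma \ge \frac{1+A\lambda}{1+B\lambda}.
\]

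Next I would expand the two products and track the cancellation: the constant terms, the $\lambda$-terms carrying $A$ and $B$, and the $AB\lambda^2$ terms all cancel between the two products, and what survives is exactly $g'\lambda\big(\sigma(1+B\lambda) - (1+A\lambda)\big)$. I would verify this termwise rather than assert it, since that is the only place an arithmetic slip could enter. Assuming $g' > 0$ — the case relevant to Algorithm~\ref{algorithm_2_part_2}, where $g'$ enters through the ratio $g/g'$ — the prefactor $g'\lambda$ is strictly positive, so the left-hand inequality holds iff $\sigma(1+B\lambda) - (1+A\lambda) \ge 0$; dividing by $1+B\lambda > 0$ yields $\sigma \ge \frac{1+A\lambda}{1+B\lambda}$, as required. (When $g' = 0$ the difference is identically $0$, consistent with the degenerate situation in which such a coordinate contributes nothing to the running sums $A,B$ in the algorithm.)

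There is no substantial obstacle here — the statement is an identity-plus-sign argument. The only points that need genuine care are confirming that all denominators are positive so that cross-multiplication preserves the inequality direction, and carrying out the expansion cleanly so that the large cancellation down to $g'\lambda\big(\sigma(1+B\lambda) - (1+A\lambda)\big)$ is actually checked; everything after that is immediate.
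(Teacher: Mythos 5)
Your proof is correct and follows essentially the same route as the paper's: both reduce the difference of the two fractions to the single expression $g'\lambda\bigl(\sigma(1+B\lambda)-(1+A\lambda)\bigr)$ over a positive denominator and read off the sign condition. Your explicit handling of the degenerate case $g'=0$ (where the difference vanishes identically and the strict biconditional breaks down) is a small point of care that the paper's proof glosses over by asserting $g'(e^\varepsilon-1)$ is ``always positive.''
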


The proof of Lemma~\ref{lemma:sigma_greater_than_ratio} is available in Appendix~\ref{proof:lemma:sigma_greater_than_ratio}.

\begin{theorem}\label{Theorem:optimality_of_algo}
    Algorithm~\ref{algorithm_2_part_2} always converges to the optimal $\overline{H}^*$ for selected $G,G'$.
\end{theorem}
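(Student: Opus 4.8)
The plan is to fix the pair $(G,G')$ selected in the outer loop of Algorithm~\ref{algorithm_2_part_2} and prove that the inner procedure (lines~5--13) returns $\overline{H}^{*}=\max_{S\subseteq[t]}\overline{H}(S)$. Set $\lambda=e^{\varepsilon}-1\ge 0$; if $\lambda=0$ then $\overline{H}(S)=1$ for every $S$ and there is nothing to prove, so assume $\lambda>0$. Since $G$ and $G'$ are conditional distributions over the alphabet of $\hat X$, we have $\sum_i g_i=\sum_i g'_i=1$, so writing $A(S)=\sum_{j\in S}g_j$ and $B(S)=\sum_{j\in S}g'_j$ gives $\overline{H}(S)=\tfrac{1+\lambda A(S)}{1+\lambda B(S)}$ --- exactly the quantity maintained in the variables $A,B$ --- and the test on line~8 reads ``accept the current element iff its ratio $q=g/g'$ is at least the current value of $\overline{H}$''. (For an element with $g'=0$ we adopt the convention $q=+\infty$; such elements are processed first and always accepted, and every step below applies verbatim.)

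First I would show the set $S_{\mathrm{g}}$ returned by the greedy is a \emph{prefix} of the ratio-sorted order. By Lemma~\ref{lemma:sigma_greater_than_ratio}, appending an element $(g,g')$ to the running $(A,B)$ changes $\overline{H}$ by a non-negative amount precisely when $q=g/g'\ge\overline{H}_{\mathrm{cur}}$, i.e.\ exactly when the algorithm accepts it; hence $\overline{H}_{\mathrm{cur}}$ never decreases during the loop. Because ratios are processed in non-increasing order, as soon as one element fails the test ($q<\overline{H}_{\mathrm{cur}}$) every later element has an even smaller ratio and, since $\overline{H}_{\mathrm{cur}}$ does not decrease afterwards, also fails. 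Thus $S_{\mathrm{g}}=\{(1),\dots,(k)\}$ for some $k$, and $\overline{H}_{\mathrm{cur}}$ is frozen at $v:=\overline{H}(S_{\mathrm{g}})$ from that point on.

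Next I would derive a threshold characterisation of $S_{\mathrm{g}}$. Whenever an element with ratio $q$ is accepted, the new value $\tfrac{(1+\lambda A)+\lambda g}{(1+\lambda B)+\lambda g'}$ is the mediant of $\overline{H}_{\mathrm{cur}}$ and $q$ and therefore lies in $[\overline{H}_{\mathrm{cur}},q]$; iterating and using monotonicity of $\overline{H}_{\mathrm{cur}}$ shows every accepted element satisfies $q\ge v$, while every rejected element (all of which come after the prefix, when the value is already $v$) satisfies $q<v$. Consequently $\{j:q_j>v\}\subseteq S_{\mathrm{g}}\subseteq\{j:q_j\ge v\}$, and each $j\in S_{\mathrm{g}}$ with $q_j=v$ contributes $g_j-vg'_j=0$.

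Finally, for an arbitrary $S\subseteq[t]$ I would show $\overline{H}(S)\le v$. As $\lambda>0$ and $1+\lambda B(S)>0$, this is equivalent to $\lambda\sum_{j\in S}(g_j-vg'_j)\le v-1$. The left-hand side is bounded by keeping only the positive summands, whose indices are exactly those with $q_j>v$; by the previous step that restricted sum equals $\sum_{j\in S_{\mathrm{g}}}(g_j-vg'_j)=A(S_{\mathrm{g}})-vB(S_{\mathrm{g}})$, and the identity $v\bigl(1+\lambda B(S_{\mathrm{g}})\bigr)=1+\lambda A(S_{\mathrm{g}})$ gives $\lambda\bigl(A(S_{\mathrm{g}})-vB(S_{\mathrm{g}})\bigr)=v-1$. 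Hence $\overline{H}(S)\le v=\overline{H}(S_{\mathrm{g}})$ for all $S$, so the algorithm indeed outputs $\overline{H}^{*}$. I expect the last step to be the crux: one must notice that the ``positive-part'' relaxation of $\sum_{j\in S}(g_j-vg'_j)$ is attained exactly at the greedy set and that the fixed-point identity for $v$ closes the estimate with equality; by comparison the prefix and mediant observations are routine bookkeeping.
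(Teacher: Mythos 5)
Your proof is correct, but it reaches the optimality of Algorithm~\ref{algorithm_2_part_2} by a genuinely different route than the paper. Both arguments use Lemma~\ref{lemma:sigma_greater_than_ratio} to establish the local behaviour of the greedy step (the running value $\overline{H}_{\mathrm{cur}}$ is non-decreasing and the accepted indices form a prefix of the ratio-sorted order). From there the paper proceeds by a two-stage contradiction: first that no subset of the greedy set $S$ can beat it (using the exchange inequality of Lemma~\ref{lemma:less_than_a_b}), then that no arbitrary $\tilde S\subseteq Q$ can beat it, by intersecting $\tilde S$ with $S$ and arguing that every element of $\tilde S\setminus S$ has ratio below the final value. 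You instead produce a direct certificate of global optimality: writing $\overline{H}(S)=\frac{1+\lambda A(S)}{1+\lambda B(S)}$, you reduce $\overline{H}(S)\le v$ to $\lambda\sum_{j\in S}(g_j-vg'_j)\le v-1$, bound the left-hand side by its positive part $\{j:q_j>v\}$, identify that set (up to zero contributions) with the greedy output, and close with the fixed-point identity $\lambda\bigl(A(S_{\mathrm g})-vB(S_{\mathrm g})\bigr)=v-1$. This buys a sharper statement --- a complete threshold characterisation of all maximisers, plus explicit handling of ties and of indices with $g'_j=0$ --- and it avoids the paper's somewhat informal final step, in which several elements of $\tilde S\setminus S$ are appended while the comparison value is implicitly held fixed. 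The paper's version is shorter and stays entirely within the exchange arguments already set up for the algorithm, but your dual-certificate argument is the more airtight of the two.
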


The proof of Theorem~\ref{Theorem:optimality_of_algo} is available in Appendix~\ref{proof:Theorem:optimality_of_algo}.



\subsection{Theoretical Explanation for the Observations of Statistical Analysis}\label{section:Theoretical Explanation for the Observations of Statistical Analysis}

In this section, we theoretically explain the statistical observations (i.e., OB1, OB2, and OB3) in Section~\ref{section:limitations_of_the_statistical_evaluation_method}.

\subsubsection{For \textbf{OB1}} Since $P(\Hat{X}|X_k)$ is not necessarily equal to $P(X_k|\Hat{X})$, the resulting CPL could be asymmetric according to the Algorithm~\ref{algorithm_2_part_2}. 

\subsubsection{For \textbf{OB2}}According to Corollary~\ref{corollary:max_privacy_leakage_privacy_budget}, CPL saturates as $\varepsilon$ increases if all the probabilities are non-zero in the joint probability distribution of two attributes. 



\subsubsection{For \textbf{OB3}} \label{section:theoretical_explaination_OB3}
This is because CPL of an attribute only depends on its two realisations of the alphabet (i.e., $G,G'$).
However, conventional metrics represent the average value for the whole joint probability distribution.

The next section explores the potential applications of theoretical findings.

\section{Applications of CPL Analysis}\label{section:Applications of CPL Analysis}

This section demonstrates the potential applications of CPL analysis to improve distributed systems. In Section~\ref{section:Benchmark for CPL Analysis Methods} and Section~\ref{section:Utility vs CPL Benchmark for LDP}, we propose two novel benchmarks for CPL analysis algorithms in LDP and utility vs CPL in LDP mechanisms, respectively. Finally, we demonstrate how publicly available knowledge can be used to calibrate the privacy budgets in Section~\ref{section:Privacy Budget Calibration}. 

\subsection{Benchmark for CPL Analysis Algorithms}\label{section:Benchmark for CPL Analysis Methods}

\begin{figure*}[h]
  \centering
  \includegraphics[trim={2.6cm 0.2cm 0cm 0.2cm},clip,width=0.95\linewidth]{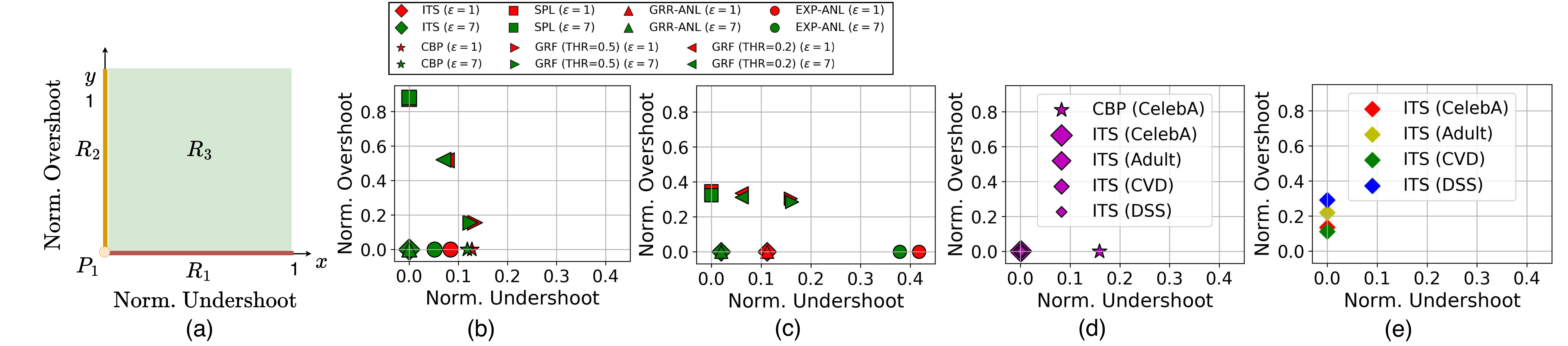}
  \caption{(a) The three regions ($R_1,R_2,R_3$) and the point ($P_1$) of the benchmark for CPL analysis algorithms. (b) and (c) Results of the benchmark for CPL analysis algorithms (ref. is $\varepsilon$-LDP using Algorithm~\ref{algorithm_2_part_2}) for the CelebA dataset and DSS dataset, respectively. In (b) and (c), the shapes of the markers correspond to different privacy mechanisms, and the colours correspond to different privacy budgets. (d) Results of the benchmark for CPL analysis algorithms (ref. is GRR using Algorithm~\ref{algorithm_1}) for all the datasets. (e) Results of the benchmark for CPL analysis algorithms (ref. is statistical OUE) for all the datasets.}
  \vspace{-1.5em}
\label{fig:benchmark_1}
\end{figure*}

Several algorithms have been proposed in the literature to quantify privacy leakage in distributed systems, including:

\begin{itemize}
\item \textbf{SPL-ANL} \cite{SPL_RS_RS_PLUS_FAKE}: A naive method estimating privacy leakage by \emph{assuming perfect correlation} among attributes, analogous to simple privacy budget splitting (SPL). If $n$ attributes are perturbed with privacy budgets $\varepsilon_1, \dots, \varepsilon_n$, the total estimated leakage is $\sum_{i \in [n]} \varepsilon_i$.

\item \textbf{GRF}\cite{LDP-PCC-dependency-graph-IoT}: Constructs a dependency graph based on conventional correlation metrics. It computes Pearson correlation coefficients (PCC) between attribute pairs and applies a threshold (THR) for connectivity. Attributes with absolute values of PCC below THR are considered uncorrelated. Then, SPL-ANL is applied to each connected sub-graph, with the maximum leakage across sub-graphs used as the estimate. Our experiments use THR = 0.2 and 0.4, marking the lower and upper bounds of weak correlation~\cite{Wikipedia_Correlation_Coefficient_thresholds}.

\item \textbf{ITS} \cite{data_correlation_location_similar_LDP}: Computes CPL assuming a two-level perturbation mechanism (e.g., GRR~\cite{LDP_Frequent_Itemset_Mining}, OUE~\cite{ldp_Frequency_Estimation}).

\item \textbf{CBP} \cite{secon_Collecting_High-Dimensional_Correlation_LDP}: Calibrates privacy budgets specifically for GRR mechanisms within a relaxed LDP framework (UDLDP). CBP is efficient primarily for binary attributes due to computational complexity scaling exponentially with alphabet size.
\end{itemize}

Despite these advancements, a critical question remains: \emph{``Do these algorithms accurately quantify CPL in distributed systems?''}.
To address this question, we propose a benchmark framework to evaluate CPL analysis algorithms. This benchmark evaluates algorithms under two criteria: (i) generic LDP mechanisms using Algorithm~\ref{algorithm_2_part_2}, and (ii) mechanism-specific benchmarks using Algorithm~\ref{algorithm_1}.

This benchmark provides a \textbf{standardised framework} for evaluating the \emph{optimality}, \emph{limitations}, and \emph{comparative performance} of CPL quantification algorithms across diverse real and synthetic datasets, facilitating rigorous assessment of future CPL analysis methods.

\subsubsection{Definition}

Let $q$ denote the total number of pairwise attribute combinations. For each combination $i$, let $l_i^*$ represent the reference CPL and $l_i$ denote the CPL estimated by a given algorithm. The algorithm's estimation performance is quantified using normalised overshoot and undershoot as follows:
\begin{equation}
\begin{split}
\text{Normalised Undershoot} &= \frac{\sum_{i \in [q], (l_i^*-l_i)\geq 0}(l_i^*-l_i)}{\varepsilon q},\\
\text{Normalised Overshoot} &= \frac{\sum_{i \in [q], (l_i^*-l_i)< 0}(l_i - l_i^*)}{\varepsilon q}.
\end{split}
\end{equation}
Here, $\varepsilon$ is the perturbation privacy budget of each attribute. The reference CPL $l_i^*$ can be computed using Algorithm~\ref{algorithm_1} or the statistical method (for specific LDP mechanisms, e.g., GRR, EXP) or Algorithm~\ref{algorithm_2_part_2} (for generic mechanisms). 
Further, the values are normalised by $\varepsilon$ and $q$, enabling comparisons across different privacy budgets and datasets. 

We represent \emph{Undershoot} in the $x-axis$ and \emph{Overshoot} in the $y-axis$. Each algorithm configuration corresponds to a unique point in this normalised space.
Then, the benchmark graph is partitioned into distinct evaluation regions (See Fig.~\ref{fig:benchmark_1}(a)):

\begin{enumerate}
\item [$\bullet$] \textbf{P1 (Origin, $x=0$, $y=0$):} Indicates optimal CPL estimation, matching exactly the reference CPL for the given LDP mechanism.

\item [$\bullet$] \textbf{R1 ($y=0,\ x>0$):} Represents consistent underestimation of CPL, signifying potential privacy risks due to underestimated leakage.

\item [$\bullet$] \textbf{R2 ($x=0,\ y>0$):} Reflects consistent overestimation, generally safer but potentially resulting in unnecessary utility loss.

\item [$\bullet$] \textbf{R3 ($x>0,\ y>0$):} Denotes mixed estimation, indicating simultaneous presence of overestimation and underestimation across attribute combinations.
\end{enumerate}

Additionally, since the $x$ and $y$ coordinates represent the normalised average undershoot and overshoot with respect to the reference CPL, the distance from the origin is proportional to the deviation from the reference.

\subsubsection{Results}
\textbf{(i) Comparison with Generic $\varepsilon$-LDP Mechanisms.} We benchmark SPL-ANL, GRF, ITS, CBP, GRR-ANL, and EXP-ANL using CelebA and DSS datasets (see Fig.~\ref{fig:benchmark_1}(b,c)
).
Here, \emph{GRR-ANL} and \emph{EXP-ANL} denote the estimated CPL by Algorithm~\ref{algorithm_1} (i.e., actual CPL) for GRR and EXP mechanisms. 

Then, the naive method \emph{SPL-ANL} consistently lies in R2, confirming it overestimates CPL, because SPL-ANL assumes all the attributes are perfectly correlated with each other.
\emph{GRF} (THR=0.2, 0.4) positions in R3, which implies some of the estimated CPL are underestimated and some are overestimated. This observation confirms the limitations in conventional correlation metrics for accurate CPL characterisation (recall that GRF uses PCC and a predefined threshold to identify and filter the correlated attributes). \emph{CBP} lies in R1, which implies that CBP underestimates CPL than generic $\varepsilon$-LDP (i.e., reference). Therefore, CBP is \textbf{only} valid for a set of LDP mechanisms.
Interestingly, \emph{ITS} estimates the optimal (P1) CPL for the CelebA dataset (Fig.~\ref{fig:benchmark_1}(b)), which implies that the estimated CPL of ITS is valid for \textbf{any} $\varepsilon$-LDP mechanisms for the CelebA dataset. This insight was not previously identified as authors focused on only two-level perturbation mechanisms for ITS~\cite{data_correlation_location_similar_LDP}. However, ITS underestimates (R1) for the DSS dataset (Fig.~\ref{fig:benchmark_1}(c)), which implies that ITS is \textbf{only} valid for a set of LDP mechanisms for the DSS dataset. Here, note that CelebA is a binary dataset and DSS is a non-binary dataset (consists of attributes that have a range of alphabet sizes). Interestingly, \emph{GRR-ANL} places in the same positions as ITS in the results of both CelebA and DSS datasets as depicted in Fig.~\ref{fig:benchmark_1}(b,c). \emph{EXP-ANL} lies in R1, which aligns with the low CPL of EXP mechanisms observed in the statistical results in Fig.~\ref{fig:motivation_results}(a) in Section~\ref{section:CPL between Attributes in Adult Dataset with Different LDP Mechanisms}.

Next, we benchmark ITS and CBP with respect to the actual CPL of the GRR mechanism (using Algorithm~\ref{algorithm_1}) using all the datasets (note that CBP is only applicable for binary dataset CelebA because of the exponential time complexity with alphabet size) to check whether these methods correctly estimate the CPL for GRR, as claimed by the authors.

\textbf{(ii) Comparison with Specific $\varepsilon$-LDP Baseline.} 

Fig.~\ref{fig:benchmark_1}(d) depicts the benchmark of ITS and CBP mechanisms with respect to the actual CPL of GRR (using Algorithm~\ref{algorithm_1}).
According to the results, ITS aligns optimally (P1) for all datasets, accurately reflecting the actual CPL, which verifies that the ITS correctly estimates the CPL of GRR mechanisms as claimed by the authors.
However, we notice that CBP notably underestimates (R1), which contradicts the claim of the authors. Therefore, we have delved into the CBP algorithm further to identify the reason for the underestimated CPL. We identify that, in CBP algorithm development (Definition 3.1 in~\cite{secon_Collecting_High-Dimensional_Correlation_LDP}), authors have assumed that the privacy leakage between attributes is symmetric, which may lead to underestimated privacy leakage in the CBP method.

Additionally, we evaluate ITS against OUE (ref. is statistically estimated CPL) across all datasets as depicted in Fig.~\ref{fig:benchmark_1}(e). Here, ITS consistently overestimates (R2). Therefore, we can conclude that ITS is optimised for GRR mechanisms and overestimates some two-level perturbation mechanisms such as OUE. However, ITS can be used to estimate the CPL of OUE mechanisms, as overestimation does not lead to privacy vulnerabilities in the distributed system. 

In conclusion, ITS estimates the CPL of GRR and OUE (the estimated CPL for OUE is not optimal). CBP underestimates the CPL of GRR, which could lead to privacy vulnerabilities. SPL-ANL significantly overestimates the CPL. GRF indicates a combination of overestimates and underestimates of CPL, which verifies the inaccurate CPL characterisation of the PCC metric.

\subsection{Utility vs TCPL Benchmark for LDP Mechanisms}\label{section:Utility vs CPL Benchmark for LDP}


LDP mechanisms may vary in terms of utility and CPL. 
In this benchmark, we evaluate the performance of LDP mechanisms in terms of data utility error vs CPL for the ground truth of the data. Identifying mechanisms that achieve the desired utility with minimal CPL allows better utilisation of privacy budgets in data collections in distributed systems such as IoT devices.

\subsubsection{Definition}

We use normalised total correlation-induced privacy leakage (normalised TCPL, see Section~\ref{section:Total Correlation-Induced Privacy Leakage} for the definition of TCPL) to measure the total privacy leakage of the mechanisms.

\noindent\textbf{Normalised Total Correlation-Induced Privacy Leakage:}

The normalised TCPL for an LDP mechanism is given by,
\begin{equation}
\text{Norm. TCPL} = \frac{\operatorname{TCPL}'}{\operatorname{TCPL}^*} = \frac{\sum_{x_k \in X} \sum_{x \in X\setminus \{x_k\}} L'_{x \rightarrow x_k}}{\sum_{x_k \in X} \sum_{x \in X\setminus \{x_k\}} L^*_{x \rightarrow x_k}}.
\end{equation}
Here, $\operatorname{TCPL}'$ denotes the TCPL of the LDP mechanism for the selected dataset (can be computed with Algorithm~\ref{algorithm_1} or statistical method), and $\operatorname{TCPL}^*$ denotes the TCPL given by Algorithm~\ref{algorithm_2_part_2} for the selected dataset.

Moreover, we have used two commonly adopted utility error metrics in LDP data collection applications to measure the utility of the mechanisms:

\noindent\textbf{Frequency Estimation Error:}

Normalised mean square error (NMSE) of frequency estimation in $X$ attributes is given by,
\begin{equation}
    \operatorname{Freq. Est. Error} = \frac{\sum_{x \in X} \bigl(\tilde{f}_x - f_x\bigr)^2}{ \sum_{x \in X}f_x ^2}.
\end{equation}
Here, $f_x$ and $\tilde{f}_x$ are actual and estimated frequency of $x^{th}$ attribute.

\noindent\textbf{Normalised 0-1 Error:}

Normalised 0-1 error between the original inputs $X$ and the  perturbed outputs $Y$ is given by,
\begin{equation}
    \text{Normalised 0-1 Error} = \frac{\sum_{i = 1}^n f(x_i,y_i)}{n},
\end{equation}
where $x_i \in X,\ y_i \in Y$ for all $i \in [n], \ |X|=|Y|=n$. Here, $f(x_i,y_i) = 0$ if $x_i = y_i$. Otherwise, $f(x_i,y_i) = 1$.

\subsubsection{Results}

\begin{figure}[h]
  \centering
  \includegraphics[trim={0.3cm 0.2cm 0.3cm 0.2cm},clip,width=0.9\linewidth]{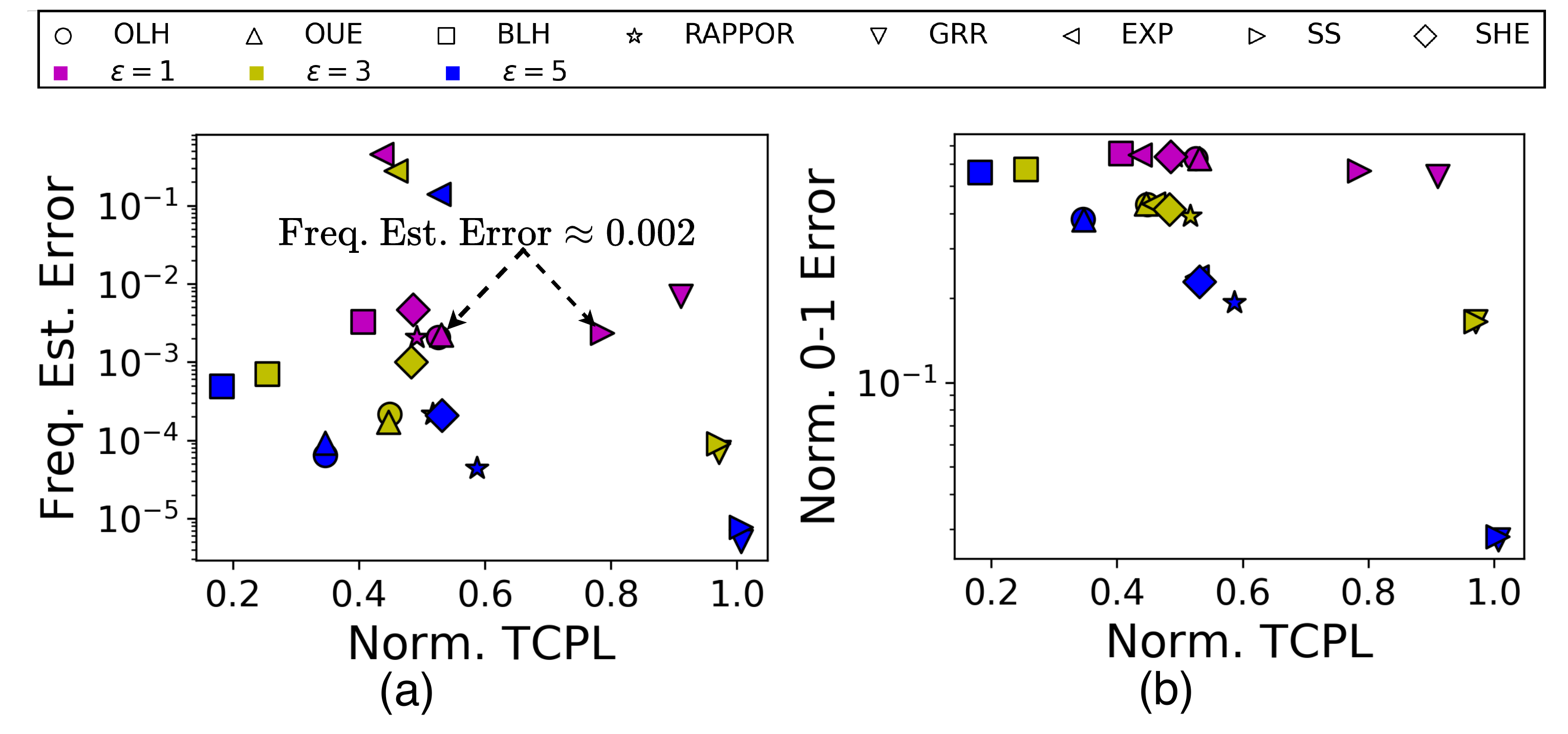}
  \vspace{-0.7em}
  \caption{Utility vs CPL benchmark of LDP mechanisms for the Adult dataset. (a) Benchmark for frequency estimation error vs TCPL.
  (b) Benchmark for normalised 0-1 error vs TCPL.
  }
  \vspace{-1.5em}
  \label{fig:benchmark_2}
\end{figure}

We benchmark eight widely used LDP mechanisms: GRR\cite{LDP_Frequent_Itemset_Mining}, RAPPOR\cite{rappor_original}, EXP~\cite{Algorithmic_Foundations_dwork}, OUE\cite{ldp_Frequency_Estimation}, BLH~\cite{ldp_Frequency_Estimation}, OLH~\cite{ldp_Frequency_Estimation}, Summation with histogram encoding (SHE)~\cite{dwork_she}, and Subset selection (SS)~\cite{subsetselection_min} (See Appendix~\ref{section:appendix:ldp_mechanisms} for more detail about the mechanisms). These mechanisms have been adopted in a broad range of applications, including frequency estimation~\cite{ldp_Frequency_Estimation},
and heavy hitter detection~\cite{Heavy_Hitter_Estimation_LDP}. 

First, we evaluate frequency estimation error vs normalised TCPL of all eight mechanisms for the Adult dataset under three different privacy budgets $\varepsilon=1,3,5$ as depicted in Fig.~\ref{fig:benchmark_2}(a). We observe that the LDP mechanisms have different normalised TCPL for different utility levels at the same privacy budget.
Interestingly, we observe that the OLH mechanism achieves a frequency estimation error of approximately $0.002$, while incurring only $60\%$ of the normalised TCPL compared to the SS mechanism when $\varepsilon=1$.
Moreover, these results are consistent with other datasets (CelebA, CVD, and DSS). Normalised 0-1 error vs normalised TCPL also shows similar behaviour as depicted in Fig.~\ref{fig:benchmark_2}(b).


\subsection{Correlation-Aware Privacy Budget Calibration}\label{section:Privacy Budget Calibration}

Privacy budget calibration is a process of optimally adjusting individual privacy budgets to balance the trade-off between privacy and utility in multi-dimensional data. 
For example, suppose given a maximum total privacy budget $\overline{\varepsilon}$, our goal is to determine the largest possible per-attribute budget $\tilde{\varepsilon}$ such that the total privacy leakage (TPL) for each attribute remains within the predefined bound. 
Privacy budget calibration can be done either considering or without considering the correlation information.
The naive approach that does not use correlation information equally distributes the total privacy budget (i.e., $\overline{\varepsilon}$) across attributes (denoted as SPL).
We propose a novel algorithm (See Appendix~\ref{section:appendix:privacy_budget_calibration}) to compute the optimal privacy budget $\varepsilon^*$ to perturb each attribute using CPL analysis. Here, we assume that all attributes are assigned equal privacy budgets.

Next, we evaluate the correlation-aware privacy budget calibration method with the naive method (SPL) using the CelebA dataset.
Here, we have selected 21 attributes from the CelebA dataset, which are common with another public dataset, Labelled Faces in the Wild (LFW)~\cite{lfw_attributes} (we use the LFW dataset as a publicly available information to learn correlations in the CelebA dataset).

First, we compute the calibrated privacy budget across different $\overline{\varepsilon}$ assuming we have the ground truth probability information of the CelebA as shown in the red dashed line in Fig.~\ref{fig:celeba_lfw_calibration_results}(a). Compared to the naive method (SPL), the correlation-aware calibrated privacy budget is approximately five times higher, allowing the data to be perturbed with significantly less noise for the same privacy requirements. Next, we evaluate the utility error of perturbed data at each privacy requirement (i.e., $\overline{\varepsilon}$) as depicted in Fig.~\ref{fig:celeba_lfw_calibration_results}(b). Here, we adopted empirical utility calculation by perturbing the CelebA dataset using GRR (the simplest mechanism) and computed 0-1 error between the data. As expected, the correlation-aware calibrated privacy budget has much less utility error compared to the SPL method.

Next, we use the LFW dataset to estimate CPL and apply our calibration algorithm accordingly. As shown in Fig.~\ref{fig:celeba_lfw_calibration_results}(a), the calibrated privacy budget using public knowledge (i.e., LFW dataset) is close to the calibrated privacy budget using ground truth. \emph{This experimental result demonstrates the effectiveness of leveraging public correlation knowledge for privacy budget calibration}.

Beyond the above applications, our statistical and theoretical findings (Sections~\ref{section:motivation} and~\ref{section:methodology}) open new directions for designing LDP mechanisms and tools that balance privacy and utility by leveraging CPL analysis with prior knowledge.

\begin{figure}[bth]
  \centering
  \vspace{-1em}
  \includegraphics[trim={1.2cm 0.5cm 1cm 0.2cm},clip,width=0.9\linewidth]{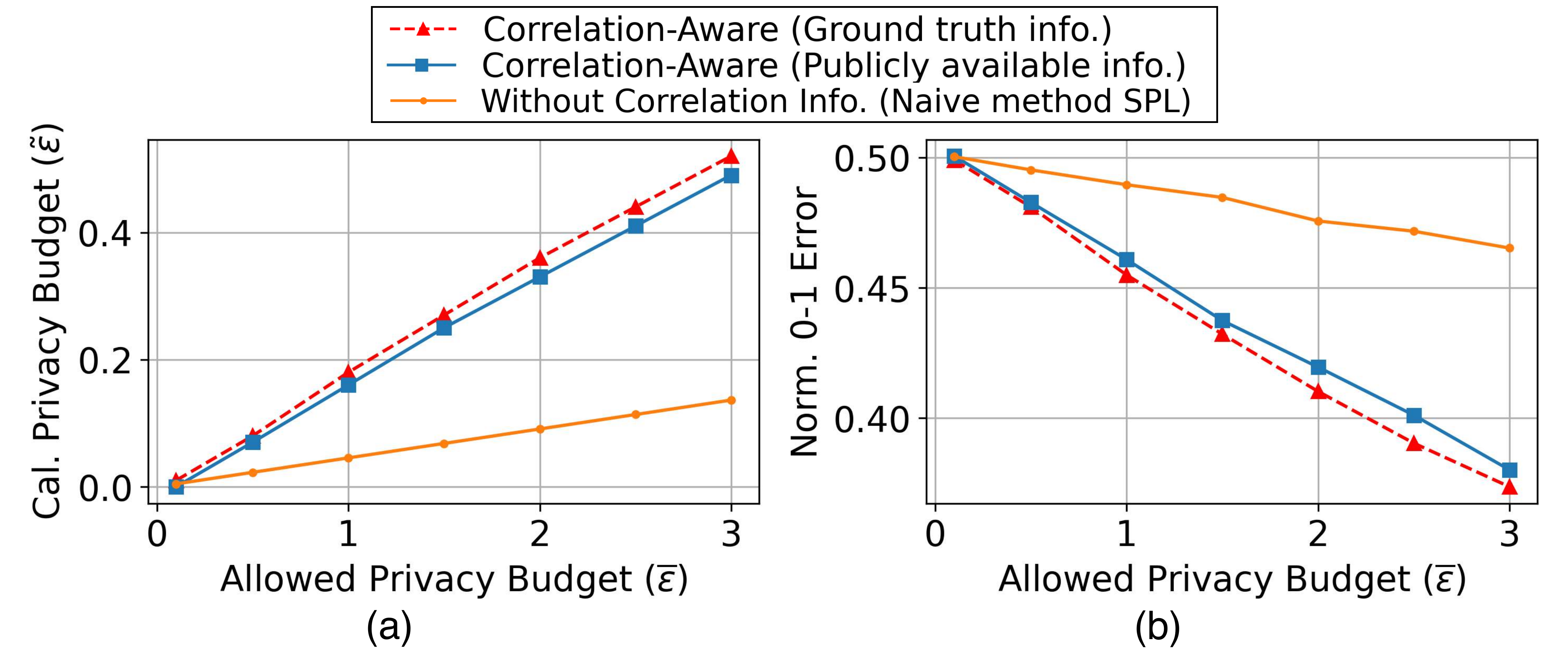}
  \caption{(a) Calibrated privacy budget of single attribute ($\tilde{\varepsilon}$) vs allowed privacy budget for whole dataset ($\overline{\varepsilon}$) for different algorithms. (b) Normalised 0-1 error vs allowed privacy budget for whole dataset ($\overline{\varepsilon}$) for different algorithms.}
  \vspace{-1.5em}
  \label{fig:celeba_lfw_calibration_results}
\end{figure}




\section{Related Work}\label{section:related_works}

First, we review existing privacy mechanisms associated with Differential Privacy (DP) and Local Differential Privacy (LDP). Next, we examine prior knowledge and its associated privacy risks. Finally, we review DP and LDP mechanisms for correlated data. 

\subsection{DP and LDP}

DP and LDP have become de facto tools for preserving privacy due to the limitations in traditional privacy algorithms such as k-anonymity~\cite{k-anonymity}, l-diversity~\cite{l-diversity} etc.~\cite{Algorithmic_Foundations_dwork, LDP_survey_composition_theorem}. As we mentioned before, researchers adopt DP and LDP mechanisms based on the requirements. LPD has become a popular method when the data collector is untrustworthy~\cite{LDP_survey_composition_theorem, apple2017privacy, microsoft_ldp, PCKV_key_value, LDP_Frequent_Itemset_Mining}. 
A recent study~\cite{PETS_statistical_audit} has utilised the statistical analysis to audit the LDP mechanisms to verify the privacy guarantee of LDP mechanisms using statistical analysis and synthetic data. However, this work does not explore the effect of correlation and privacy leakage caused by the correlations in their analysis. Both LDP and DP are initially defined by assuming that the attributes of data have no relationship with each other. However, most real-world data is inherently correlated to some extent, and most of these relationships could be freely available for adversaries~\cite{impact_of_prior_knowledge}. As a result, the privacy guarantee of (L)DP may not be accurate anymore in most practical scenarios~\cite{no_free_lunch, correlated_DP}.

\subsection{Prior Knowledge}

Prior knowledge could have different forms depending on the data type: distributional details~\cite{correlated_DP, impact_of_prior_knowledge}, information from previous timestamps in time-series data~\cite{ data_correlation_location_similar_LDP}, current location based on last location~\cite{data_correlation_location_similar_LDP} etc. Literature shows that this knowledge can be gained in various methods: from universal norms~\cite{impact_of_prior_knowledge}, from public datasets~\cite{adult_dataset, celebA_attributes}, learning while collecting data~\cite{AAA} etc. This leads to increased privacy risks and inaccurate privacy guarantees for mechanisms that do not explicitly consider prior knowledge~\cite{impact_of_prior_knowledge, correlated_DP}.

In this light, there has been some work improving the privacy guarantee leveraging prior knowledge~\cite{secon_Collecting_High-Dimensional_Correlation_LDP, AAA, correlated_DP}. Next, we summarise how studies have incorporated prior knowledge to enhance the privacy guarantee in (L)DP.

\subsection{(L)DP Mechanisms for Correlated Data}

\subsubsection{DP Related Work}

Considerable work has been done on correlated data under the central setting of differential privacy~\cite{Algorithmic_Foundations_dwork, correlated_DP}. Group DP is a simple method to adjust the global sensitivity based on the number of correlated records~\cite{Algorithmic_Foundations_dwork}. Instead of this global sensitivity, \cite{correlated_DP} formulates the sensitivity parameter using a correlation degree matrix. Another study~\cite{correlated_data_dp_conventional_metrics_MI} uses distance correlation to measure the correlation. In~\cite{dynamic_sensitivity_change_dp}, a periodic sensitivity calculation algorithm is proposed to address the periodic correlation changes in temporal data using autocorrelation. However, in these studies, correlation is measured using conventional metrics such as PCC, and the authors assume that the privacy leakage impact between user records is symmetric without a proper theoretical explanation for the correctness of these metrics. Additionally, recent works focus on analysing the privacy leakage in correlated temporal and location data~\cite{May_the_privacy_location_correlation_DP, Quantifying_DP}.
Interestingly, we notice that study~\cite{Quantifying_DP} also have obtained a \emph{mathematically} similar form of the optimisation problem to (\ref{eqn:H}) (when $\delta=0$), although they analyse the temporal data in the central setting of DP.
 However, these studies cannot apply to $(\varepsilon, \delta)$-LDP mechanisms as these works analyse the privacy leakage under the central setting of pure DP ($\varepsilon$-DP). 

Additionally, several studies have proposed frameworks to address the data correlation by improving the DP definition. Pufferfish~\cite{pufferfish}
and Bayesian DP~\cite{bayesian_differential_privacy} are notable such frameworks. Since these frameworks are defined for the central setting, they are not directly applicable to LDP and the proposed algorithms in this paper.

\subsubsection{LDP Related Work}


Several studies have proposed novel mechanisms to perturb correlated data~\cite{PCKV_key_value, secon_Collecting_High-Dimensional_Correlation_LDP}. In~\cite{PCKV_key_value}, two interdependent mechanisms have collected correlated key-value pairs. Studies~\cite{LDP-PCC-dependency-graph-IoT} and~\cite{secon_Collecting_High-Dimensional_Correlation_LDP} adopt the partitioning the privacy budget method, also known as privacy budget splitting (SPL),  to incorporate correlated data by measuring the strength of the correlation via MI and PCC. However, our study proves that these conventional metrics do not characterise LDP privacy leakage correctly. As a result, there could be a privacy risk as illustrated in the example in Section~\ref{section:theoretical_explaination_OB3}. Moreover, these studies focused on designing mechanisms rather than analysing privacy leakage when data is correlated.

Compared to DP, only a limited number of studies have been done on privacy leakage analysis on correlated data in LDP. A recent study~\cite{data_correlation_location_similar_LDP} has analysed privacy leakage related to spatial-temporal data. The authors have derived a general formulation (Equation 8 in Definition 5~\cite{data_correlation_location_similar_LDP}) similar to ours (\ref{eqn:expanded_additional_privacy_leakage}), which is used to compute temporal correlation-related privacy leakage. Therefore, this work~\cite{data_correlation_location_similar_LDP} verifies the formulation of the work. However, ~\cite{data_correlation_location_similar_LDP} solves this general formation only for a specific family of $\varepsilon$-LDP mechanisms which have two levels of perturbation probabilities (i.e. GRR~\cite{LDP_survey_composition_theorem}, Optimal unary encoding (OUE)~\cite{LDP_Frequent_Itemset_Mining}, etc.). In contrast, we analyse for any general $(\varepsilon,\delta)$-LDP mechanism, and we propose Algorithm~\ref{algorithm_1} (using transition probabilities of the LDP mechanism) and Algorithm~\ref{algorithm_2_part_2} (using $\varepsilon$ and $\delta$ when unknown transition probabilities of the LDP mechanism) to compute the correlation-induced privacy leakage (CPL).

In summary, no comprehensive work has been done to analyse correlation-related privacy leakage in terms of practical and theoretical perspectives in LDP. Limited available studies only consider specific categories of LDP that cannot be generalised for complex LDP mechanisms (e.g. Exponential~\cite{Algorithmic_Foundations_dwork}, RAPPOR~\cite{rappor_original}, etc.) and approximated LDP (i.e., ($\varepsilon$, $\delta$)-LDP).

\section{Conclusion}\label{section:conclusion}

Addressing correlation-induced privacy leakage (CPL) in distributed systems under Local Differential Privacy (LDP) remains challenging. This study combines statistical analysis on four real-world datasets (Section~\ref{section:motivation}) with theoretical investigation, proposing two novel algorithms (Algorithm~\ref{algorithm_1} and Algorithm~\ref{algorithm_2_part_2}) to quantify CPL in general ($\varepsilon,\delta$)-LDP scenarios. Our results indicate that severe CPL is rare in practice, suggesting strategic privacy budget allocation can enhance utility.
We demonstrate that conventional metrics like mutual information (MI) and Pearson correlation coefficient (PCC) inadequately capture CPL, potentially increasing privacy risks. Next, we propose two benchmarks for evaluating distributed LDP analysis methods (Section~\ref{section:Benchmark for CPL Analysis Methods}) and mechanisms (Section~\ref{section:Utility vs CPL Benchmark for LDP}), highlighting shortcomings and improvement opportunities.

Finally, we illustrate CPL analysis application for privacy budget calibration in distributed data collection using public correlation data (Section~\ref{section:Privacy Budget Calibration}).

\subsection{Limitations and Future Work}
The proposed algorithms assume that the characteristics of the data distribution are known. Although many studies have shown that the underlying data distribution could be learned in different ways, e.g. publicly available datasets, collecting a portion of population data to learn distribution, etc. ~\cite{impact_of_prior_knowledge, AAA}, it is possible that the assumed distribution deviates from the actual data distribution. This could lead to suboptimal CPL. Therefore, the consequences of deviation in the assumed data distribution will be investigated further in future work. 

\section{Online Resources}

Artifacts will be made available soon.




\bibliographystyle{IEEEtran}
%

\bibliography{paper/refs}


\appendix

\section{Appendix}

\begin{appendices}



    

\section{LDP Mechanisms}\label{section:appendix:ldp_mechanisms}

This section summarises the LDP mechanisms that are used in Section~\ref{section:motivation} and Section~\ref{section:Applications of CPL Analysis}.

\subsubsection{Generalised Randomised Response (GRR)~\cite{LDP_Frequent_Itemset_Mining}}
The GRR mechanism $M$ is a LDP mechanism that maps $X$ stochastically onto itself (i.e., $\mathcal{Y} = \mathcal{X}$), given by,
$p_M(y|x) = \begin{cases} 
\frac{e^\varepsilon}{k - 1 + e^\varepsilon} & \text{if } y = x, \\
\frac{1}{k - 1 + e^\varepsilon} & \text{if } y \neq x.
\end{cases}$
Here, $x \in \mathcal{X}$ and $y \in \mathcal{Y}$, where $\mathcal{X}$ and $\mathcal{Y}$ are alphabets of $X$ and $Y$ respectively. 

\subsubsection{RAPPOR~\cite{rappor_original}}
We adopt basic RAPPOR with unary encoding, where each user's input is first encoded into a binary vector and then perturbed using a two-step randomisation process. The first step---permanent randomised response---flips each bit with parameter $f=0.5$, applied only once. The second step---instantaneous randomisation---perturbs each bit at every data collection using parameters $p=0.5$ and $q=0.75$. It is a common configuration used in pioneer work~\cite{rappor_original}.

\subsubsection{Exponential Mechanism (EXP)~\cite{Algorithmic_Foundations_dwork}}
Given a utility function \( u : \mathcal{X} \times \mathcal{R} \to \mathcal{R} \), the sensitivity of \( u \) is defined as
$\Delta u = \max_{x,x' \in \mathcal{X}} \; \max_{r \in \mathcal{R}} \left| u(x, r) - u(x', r) \right|$.
The exponential mechanism \( M \) selects and outputs an element \( r \in \mathcal{R} \) with probability proportional to
$\exp\left( \frac{\varepsilon \cdot u(x, r)}{2\Delta u} \right)$.
In our experiments, we define the utility function as \( u(x, r) = 1 \) if \( x = r \), and \( u(x, r) = 0 \) otherwise.

\subsubsection{Optimised Unary Encoding (OUE)~\cite{ldp_Frequency_Estimation}}
OUE is a frequency estimation mechanism based on unary encoding, where each input value from a domain of size $d$ is encoded into a binary vector $v$ of length $d$, with a $1$ at the index corresponding to the input and $0$s elsewhere. To preserve privacy, OUE randomises the bits from $v$ independently to generate $y$ as
$    \forall i \in [k] : \quad \Pr[y_i = 1] =
\begin{cases}
\frac{1}{2}; & \text{if } \mathbf{v}_i = 1, \\
\frac{1}{e^\varepsilon+1}; & \text{if } \mathbf{v}_i = 0.
\end{cases}$

\subsubsection{Binary Local Hashing (BLH)~\cite{ldp_Frequency_Estimation} and Optimised Local Hashing (OLH)~\cite{ldp_Frequency_Estimation}}

Let $\mathcal{H}$ be a universal hash function family such that any hash function $H \in \mathcal{H}$ hashes value $v$ into $[g]$ (i.e., $H: v \rightarrow [g]$).  Each user selects a hash function $H \in \mathcal{H}$ and computes $h = H(v)$, which is then privatised using GRR over the range $[g]$. The mechanism outputs $\text{LH}(v) := \langle H, \text{GRR}(h) \rangle$, enabling efficient data collection over large domains with reduced communication. BLH uses $g = 2$, while OLH sets $g = e^\varepsilon + 1$ to minimise estimation variance.

\subsubsection{Summation with Histogram Encoding (SHE)~\cite{ldp_Frequency_Estimation,dwork_she}}
SHE encodes the user’s input \( v \in \mathcal{V} \) as a one-hot vector \( \mathbf{v} \in \{0,1\}^k \) and adds Laplace noise to each component independently: \( \mathbf{y} = \mathbf{v} + \text{Lap}(2/\varepsilon)^k \), satisfying \( \varepsilon \)-LDP.


\subsubsection{Subset Selection (SS)~\cite{wang2016mutual,subsetselection_min}}
SS mechanism obfuscates the true value by reporting a subset \( \Omega \subseteq \mathcal{V} \) of size
$\omega = \max\left(1, \left\lfloor \frac{k}{e^\varepsilon + 1} \right\rfloor\right)$.
The true value \( v \) is included in \( \Omega \) with probability
$p = \frac{\omega e^\varepsilon}{\omega e^\varepsilon + k - \omega}$.
If \( v \in \Omega \), the user samples \( \omega - 1 \) elements uniformly at random from \( \mathcal{V} \setminus \{v\} \); otherwise, \( \omega \) elements are sampled from \( \mathcal{V} \setminus \{v\} \). The user sends \( \Omega \) to the server.

\section{Inferring From Perturbed Output}\label{section:appendix:attacking_models}

This section explains the adopted attribute inference attacks for the LDP mechanisms which are used in statistical CPL estimation.

\subsubsection{GRR and EXP}
Since the output domain of GRR mechanisms is similar to the input domain, we can take the perturbed output as the original value.

\subsubsection{RAPPOR and OUE}

Let $V$ be the set of all possible inputs for the mechanism and $Y$ be perturbed output vector. Then, let $I$ be the set which contains input values corresponding to 1s in $Y$ (i.e., $I = \{v| Y_v=1\},\ v \in V$). Then inferred input from perturbed output $v_{infer}$ given by,
$v_{infer} = Uniform(V)$ if $I = \emptyset$. Otherwise, $v_{infer} = Uniform(I)$~\cite{distinguishability_attacks_1,PETS_statistical_audit}.

\subsubsection{BLH and OLH}

Let $V$ be the set of all possible inputs for the mechanism and $Y$ be perturbed output vector. Then, let $I$ be the set which contains all the input values that hash to $Y$ (i.e., $I = \{v| H(v)=Y\},\ v \in V$). Then inferred input from perturbed output $v_{infer}$ given by,
$v_{infer} = Uniform(V)$ if $I = \emptyset$. Otherwise, $v_{infer} = Uniform(I)$~\cite{distinguishability_attacks_1,PETS_statistical_audit}.

\subsubsection{SS}

Let $V$ be the set of all possible inputs for the mechanism and $\Omega$ be the selected subset sending to the server. Then, let $I$ be the set which contains $v \in V$ values in $\Omega$ (i.e., $I = \{v| v \in \Omega\},\ v \in V$). Then inferred input from perturbed output $v_{infer}$ given by, $v_{infer} = Uniform(I)$~\cite{distinguishability_attacks_1,PETS_statistical_audit}.
\subsubsection{SHE}
In SHE, \( \mathbf{y} \) is released directly. An adversary estimates \( v \) by computing the posterior probability
$P_V(v \mid \mathbf{y}) = \frac{P_Y(\mathbf{y} \mid v) P_V(v)}{\sum_{i=1}^k P_Y(\mathbf{y} \mid i) P_V(i)}$,
where the likelihood is $P_Y(\mathbf{y} \mid v) = \frac{1}{(2b)^k} \exp\left(-\frac{\lVert \mathbf{y} - \mathbf{v} \rVert_1}{b}\right), \ b = 2/\varepsilon$.
Then, Bayes-optimal estimate is $\hat{v} = \arg\max_{v \in \mathcal{V}} P_V(v \mid \mathbf{y})$~\cite{PETS_statistical_audit}.

\section{Monotonicity of Linear Fractional Functions}
Lemma~\ref{lemma:monotone_fn} will be used in mathematical proofs in theoretical analysis.

\begin{lemma}\label{lemma:monotone_fn}
    The function \( f(x) \) defined as $f(x) = \frac{a_1 x + b_1}{a_2 x + b_2}$ is monotonic. Specifically, $f(x)$ is \textbf{increasing} with respect to $x$ if $a_1 b_2 - a_2 b_1 \geq 0$, and $f(x)$ is \textbf{strictly decreasing} with respect to $x$ if $a_1 b_2 - a_2 b_1 < 0$.
\end{lemma}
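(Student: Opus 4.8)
The plan is a direct derivative computation, treating $a_1,a_2,b_1,b_2$ as fixed real constants and $x$ ranging over any interval $I$ on which the denominator $a_2 x + b_2$ does not vanish (in all the uses of this lemma in the paper the denominator is a sum of probabilities, hence of constant sign, so this restriction costs nothing). On such an interval $f$ is differentiable, and by the quotient rule
\begin{equation}\label{eqn:derivative_monotone_lemma}
f'(x) = \frac{a_1(a_2 x + b_2) - a_2(a_1 x + b_1)}{(a_2 x + b_2)^2} = \frac{a_1 b_2 - a_2 b_1}{(a_2 x + b_2)^2}.
\end{equation}
The first step is simply to carry out and simplify this calculation; the $a_1 a_2 x$ terms cancel, leaving the constant numerator $a_1 b_2 - a_2 b_1$.

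The second step is to read off the sign. Since $(a_2 x + b_2)^2 > 0$ for all $x \in I$, the sign of $f'(x)$ equals the sign of $a_1 b_2 - a_2 b_1$, which is independent of $x$. Hence $f$ is monotonic on $I$: if $a_1 b_2 - a_2 b_1 > 0$ then $f' > 0$ and $f$ is strictly increasing; if $a_1 b_2 - a_2 b_1 = 0$ then $f' \equiv 0$ and $f$ is constant (in particular non-decreasing, which is what the statement's ``increasing'' clause covers); and if $a_1 b_2 - a_2 b_1 < 0$ then $f' < 0$ and $f$ is strictly decreasing. Combining the first two cases gives exactly the claim that $f$ is increasing (in the weak sense) whenever $a_1 b_2 - a_2 b_1 \geq 0$, and strictly decreasing whenever $a_1 b_2 - a_2 b_1 < 0$.

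If one prefers to avoid calculus entirely, the alternative is the partial-fraction rewrite $f(x) = \frac{a_1}{a_2} + \frac{a_2 b_1 - a_1 b_2}{a_2(a_2 x + b_2)}$ when $a_2 \neq 0$ (and $f$ is affine, hence trivially monotone, when $a_2 = 0$), and then invoke the known monotonicity of $x \mapsto 1/(a_2 x + b_2)$ on an interval of constant sign of the denominator; the sign bookkeeping matches \eqref{eqn:derivative_monotone_lemma}. There is no real obstacle here: the only point needing a word of care is the domain restriction ensuring the denominator is nonzero and of constant sign, which is automatically satisfied in every application of this lemma in the paper, so the computation in \eqref{eqn:derivative_monotone_lemma} is the entire content of the proof.
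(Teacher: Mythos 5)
Your proposal is correct and follows essentially the same route as the paper: compute $f'(x)$ by the quotient rule, observe that the numerator simplifies to the constant $a_1 b_2 - a_2 b_1$, and read off the sign. The extra remarks on the domain restriction and the partial-fraction alternative are fine but not needed beyond what the paper's own one-line derivative argument already establishes.
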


\begin{proof}
    We have
            $\frac{d f(x)}{d x} = \frac{a_1(a_2x+b_2) - a_2(a_1x + b_1)}{(a_2x+b_2)^2} = \frac{a_1b_2 - a_2b_1}{(a_2x+b_2)^2}$.
    \begin{equation*}
        \text{Therefore,} \quad\quad \frac{d f(x)}{d x} \text{  } 
        \begin{cases} 
        \geq 0 & \text{; if }  a_1b_2 - a_2b_1 \geq 0, \\
        < 0 & \text{; otherwise.}
        \end{cases}
    \end{equation*}
This completes the proof.
\end{proof}

\section{Proof of Proposition~\ref{prop:total_privacy_leakage_from_probability}}\label{proof:prop:total_privacy_leakage_from_probability}

\begin{proof}

Suppose any attribute $\Hat{X} \in X \setminus \{X_k\}$ perturbed by $\Hat{\mathcal{M}}\in \mathcal{M}$ outputs $\Hat{Y}\in Y$. From (\ref{eqn:privacy_leakage_additional_xk}), CPL $L_{\Hat{X} \rightarrow X_k}$ is given by,
\begin{equation}\label{eqn:additional_privacy_leakage_from_x_hat_about_xk}
\begin{split}
    L_{\Hat{X} \rightarrow X_k} 
    &= \ln{\sup_{y\in \mathcal{\Hat{Y}},x,x'\in\mathcal{X}_k} \left(\frac{\sum_{u \in \mathcal{\Hat{X}}} \frac{p(y,u,x)}{p(x)}}{\sum_{v \in \mathcal{\Hat{X}}} \frac{p(y,v,x')}{p(x')}}\right)}.
\end{split}
\end{equation}
We can represent the relationship between $X_k, \Hat{X}$ and $\Hat{Y}$ as a Markov chain ($X_k \rightarrow \Hat{X} \rightarrow \Hat{Y}$) because $\Hat{M}$ generates $\Hat{Y}$ by perturbing $\Hat{X}$. Therefore, $p(y,u,x) = p(y|u)p(u|x)p(x)$, where $y \in \mathcal{Y}, u \in \mathcal{\Hat{X}}, x \in \mathcal{X}_k$. Then, we have,
\begin{equation}\label{eqn:expanded_additional_privacy_leakage_proof}
    \begin{split}
        L_{\Hat{X} \rightarrow X_k} 
        &= \ln{\sup_{y\in \mathcal{\Hat{Y}},x,x'\in\mathcal{X}_k} \left(\frac{\sum_{u \in \mathcal{\Hat{X}}} p( y|u)p(u|x)}{\sum_{v \in \mathcal{\Hat{X}}} p( y|v)p(v|x')}\right)}.
    \end{split}
\end{equation}
    completing the proof.
\end{proof}

\section{Proof of Proposition~\ref{prop:c_max c_min}}\label{proof:prop:c_max c_min}


\begin{proof}
We know that $F(C, G, G') = \frac{\sum_{i=1}^t c_i g_i}{\sum_{i=1}^t c_i g'_i}$, where $c_i \in C, g_i \in G, g'_i \in G',0 < c_i, g_i, g'_i \leq 1 ; \forall i \in [t]$. For any $c\in C, g \in G$ and $g' \in G'$, we can rewrite $F(C, G, G')$ as
    $F(C, G, G') = \frac{cg + A}{cg' + B}$,
where $A = \sum_{a\in C \setminus c,\ v \in G \setminus g} av$ and $B = \sum_{a\in C \setminus c,\ w \in G' \setminus g'} aw$. From Lemma~\ref{lemma:monotone_fn}, $F(\cdot)$ is a monotone function of $c$. The values of $c$ which maximises $F(C,G,G')$ can be determined as, $c=c_{\max}$ if $g B - g' A \geq 0$.  Otherwise $c=c_{min}$.

Therefore, to compute the maximum value of $F(C, G, G')$, it suffices to evaluate the function using a vector $\Tilde{C}$ which consists only the boundary values $c_{min},c_{max}$ (i.e., $\tilde{C} \in \{c_{\min}, c_{\max}\}^t$).

Since $c \leq e^{\varepsilon}\Tilde{c} + \delta ; \forall c,\Tilde{c} \in C$, we have $c_{max} \leq e^{\varepsilon} c_{min} + \delta$.
Therefore, we can write $c_{max}$ as $c_{max}= k_1 c_{min} + k_2$ using $c_{min}$, where $k_1 \in (1, e^{\varepsilon}]$ and $k_2 \in [0, \delta]$.
Then $J(S,c_{min},c_{max},G,G')$ can be written in the form of $I(c_{min},S,k_1,k_2,G,G')$ as
    $I(c_{min},S,k_1, k_2, G,G') = \frac{c_{min}\sum_{i \in [t] \setminus S}g_i+\lambda\sum_{j \in S}g_j}{c_{min}\sum_{i \in [t] \setminus S}g'_i+\lambda\sum_{j \in S}g'_j}$.
Here, $\lambda=(k_1c_{min}+k_2)$.
Let us take $A = \sum_{j \in S}g_j$ and $B = \sum_{j \in S}g'_j$. We know that $\sum_{i \in [t] \setminus S}g_i + \sum_{j \in S}g_j = 1$ because it is a summation of a conditional probability distribution. Then we have $\sum_{i \in [t]\setminus S }g_i = 1-A$. In the same way we can write $\sum_{i \in [t] \setminus S}g'_i = 1-B$. Then we can write $I(\cdot)$ as 
    $N(c_{min},k_1, k_2, A,B) = \frac{c_{min}(1-A)+(k_1c_{min}+k_2)A}{c_{min}(1-B)+(k_1c_{min}+k_2)B}$.
$N(c_{min},k_1, k_2,A,B)$ is monotone over $k_1$ and $k_2$. From Lemma~\ref{lemma:monotone_fn}, we can maximise $N(\cdot)$ by setting $k_1$ as, $k_1=e^{\varepsilon}$ if $A-B \geq 0$. Otherwise $k_1=1$. Similarly, we can maximise $N(\cdot)$ by setting $k_2$ as $k_2=\delta$ if $A-B \geq 0$. Otherwise $k_2=0$.

However, for any $a\in A$ and $b \in B$, $a < b \implies N(k,a,b) \leq 1$ and $a \geq b \implies N(\cdot) \geq 1$. Since we can change $a$ and $b$ by varying $S$, we can always set $a \geq b$ to maximise $N(\cdot)$. Therefore, $A-B \geq 0$ at the maximum $N(\cdot)$, and $k$ should be $k_1=e^{\varepsilon}$ and $k_2=\delta$ to maximise $N(\cdot)$.
Therefore, at the maximum $J(S,c_{min}, c_{max}, G, G')$ the coefficients satisfy $c_{max} = e^{\varepsilon}c_{min} + \delta$.

This completes the proof.
\end{proof}

\section{Proof of Lemma~\ref{lemma:maximum_value_of_linear_fraction}}\label{proof:lemma:maximum_value_of_linear_fraction}


\begin{proof}
    Suppose $a,b \in [t]$ such that $\frac{g_a}{g'_a} \geq \frac{g_b}{g'_b}$. Then, we have $\frac{g_a + g_b}{g'_a + g'_b} - \frac{g_a}{g'_a} = \frac{g_bg'_a - g_ag'_b}{(g'_a+g'_b)g'_a} \leq 0$.
    Therefore, the fraction of sums is less than the maximum individual ratios in the subset. Therefore, $\max_{S} \frac{\sum_{i \in S} g_i}{\sum_{i \in S} g'_i} = \max_{i \in [t]} \frac{g_i}{g'_i}$.

This completes the proof.
\end{proof}

\section{Proof of Corollary~\ref{corollary:max_privacy_leakage_privacy_budget}}\label{proof:corollary:max_privacy_leakage_privacy_budget}


\begin{proof}
    $L_{\Hat{X}\rightarrow X_k}$ becomes maximum when we release the neighbouring attribute $\Hat{X}$ without any perturbation (i.e. $\varepsilon \rightarrow \infty$). Therefore, from (\ref{eqn:H_with_delta}), let us calculate the limit of $H(\cdot)$ when $\varepsilon \rightarrow \infty$ as follows,
        $\lim_{\varepsilon \rightarrow \infty} H(\cdot) = \frac{\sum_{j \in S}g_j}{\sum_{j \in S}g'_j}$.
    Then, $\lim_{\varepsilon \rightarrow \infty}  H^* = \max_{j \in [t]} \frac{g_j}{g'_j}$ from Lemma~\ref{lemma:maximum_value_of_linear_fraction}. Here, $\max_{j \in [t]} \frac{g_j}{g'_j}$ can be written as $\max_{\Hat{x}\in \mathcal{\Hat{X}}} \frac{p(\Hat{x}|x)}{p(\Hat{x}|x')}$ for $x,x'\in \mathcal{X}_k$. To calculate $\lim_{\varepsilon \rightarrow \infty} L_{\Hat{X}\rightarrow X_k}$, we need to calculate the maximum of $H^*$ for all $G,G', \forall x,x' \in \mathcal{X}_k$. 
    Therefore,
    \begin{equation*}
        \lim_{\varepsilon \rightarrow \infty} L_{\Hat{X}\rightarrow X_k} = \max_{\varepsilon} L_{\Hat{X}\rightarrow X_k} = \ln \max_{\Hat{x}\in \mathcal{\Hat{X}},x,x'\in\mathcal{X}_k} \frac{p(\Hat{x}|x)}{p(\Hat{x}|x')},
    \end{equation*}
    where $\mathcal{\Hat{X}}$ and $\mathcal{X}_k$ are alphabets of $\Hat{X}$ and $X_k$ respectively.

    This completes the proof.
\end{proof} 

\section{Analysis of CPL in Extreme Scenarios}\label{section:appendix:Extreme Scenarios}

Analysing how CPL behaves in extreme cases is beneficial to understanding the general scenarios. Therefore, this section analyses the scenarios where CPL could be maximum and minimum for generic ($\varepsilon,\delta$)-LDP mechanisms based on the derived theoretical results in Section~\ref{section:Proposed Algorithm to Compute CPL}.

\subsection{Maximum CPL}\label{section:Maximum CPL}

Maximum CPL depends on two main factors: 
\begin{enumerate}
    \item [$\bullet$] The privacy budget of the LDP mechanism of the neighbouring attribute (i.e., $\Hat{X}$). 
    If $X_k$ and $\Hat{X}$ are two correlated attributes, then we can calculate the maximum CPL using \eqref{eqn:corollary:max_privacy_leakage_privacy_budget} in Corollary~\ref{corollary:max_privacy_leakage_privacy_budget}.
    \item [$\bullet$] The correlation between two attributes (i.e., $X_k$ and $\Hat{X}$). According to the Corollary~\ref{corollary:maximum_additional_privacy_leakage_over_GG'}, $\max_{G,G'} L_{\Hat{X}\rightarrow X_k} = \varepsilon$. 
\end{enumerate}
 
Also, Corollary~\ref{corollary:maximum_additional_privacy_leakage_over_GG'} states that there is \emph{no} need for two attributes to be perfectly correlated to achieve maximum CPL. 
An example scenario is explained in Appendix~\ref{section:appendix:Theoretical Explanation for the Observations of Statistical Analysis}. 
Also, if two variables are perfectly correlated, then CPL is maximum.

\begin{corollary}\label{corollary:maximum_additional_privacy_leakage_over_GG'}
    Let $X_k$ and $\Hat{X}$ be two correlated attributes. Suppose $\Hat{X}$ is perturbed by $\varepsilon$-LDP mechanism $\Hat{\mathcal{M}}$. Let $G = (p(a_1|x)\text{, }\dots\text{, }p(a_t|x)), G' = (p(a_1|x')\text{, }\dots\text{, }p(a_t|x'))$ where $\mathcal{\Hat{X}} = \{a_1,\dots, a_t\}$ and $x,x' \in \mathcal{X}_k$. Here, $\mathcal{\Hat{X}}$ and $\mathcal{X}_k$ are alphabets of  $\Hat{X}$ and $X_k$, respectively. Then, 
    $\max_{G,G'} L_{\Hat{X}\rightarrow X_k} = \varepsilon$.
    This maximum is attained if there exist distributions $G,G'$ such that for all $g \in G$ and $g' \in G'$, in particular, if $g \neq 0$, then $g'= 0$.
\end{corollary}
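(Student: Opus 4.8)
The plan is to start from the closed form \eqref{eqn:H} for the $\delta=0$ privacy leakage, namely $\overline{H} = \frac{\sum_{i\in[t]\setminus S} g_i + e^\varepsilon \sum_{j\in S} g_j}{\sum_{i\in[t]\setminus S} g'_i + e^\varepsilon \sum_{j\in S} g'_j}$, together with the fact established earlier that $L_{\hat{X}\rightarrow X_k} = \ln\max_{x,x'\in\mathcal{X}_k}\overline{H}^*$ with $\overline{H}^*=\max_S\overline{H}$. The task then reduces to maximising $\overline{H}$ first over $S$ and then over the pair of conditional distributions $(G,G')$. The key simplification is to set $A := \sum_{j\in S} g_j$ and $B := \sum_{j\in S} g'_j$ and use that $G$ and $G'$ are conditional probability distributions over $\hat{\mathcal{X}}$, so $\sum_{i\in[t]} g_i = \sum_{i\in[t]} g'_i = 1$; this collapses the expression to
\begin{equation*}
\overline{H} = \frac{1 + (e^\varepsilon - 1)A}{1 + (e^\varepsilon - 1)B}, \qquad A,B\in[0,1].
\end{equation*}

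For the upper bound I would invoke Lemma~\ref{lemma:monotone_fn} twice: as a function of $A$ (with $B$ fixed) this is nondecreasing since $e^\varepsilon-1\ge 0$, and as a function of $B$ (with $A$ fixed) it is (strictly) decreasing when $e^\varepsilon>1$. Hence over the box $[0,1]^2$ the maximum is at $A=1$, $B=0$ and equals $\frac{1+(e^\varepsilon-1)}{1}=e^\varepsilon$. Since this bound holds for every $S$, every $(G,G')$, and every $x,x'$, we get $\overline{H}^*\le e^\varepsilon$ and therefore $L_{\hat{X}\rightarrow X_k}\le\varepsilon$, which already gives $\max_{G,G'} L_{\hat{X}\rightarrow X_k}\le\varepsilon$.

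For attainment I would translate $A=1$ and $B=0$ back into support conditions: $A=\sum_{j\in S}g_j=1$ holds for some $S$ iff $S$ contains every index with $g_j\ne 0$, while $B=\sum_{j\in S}g'_j=0$ holds iff $S$ contains no index with $g'_j\ne 0$. Both can be met by a single set $S$ precisely when $\{j:g_j\ne 0\}$ and $\{j:g'_j\ne 0\}$ are disjoint, i.e. $g_j\ne 0\Rightarrow g'_j=0$ for all $j$, which is the stated condition. Under it, taking $S=\{j:g_j\ne 0\}$ gives $\overline{H}=e^\varepsilon$, and combined with the upper bound $\overline{H}^*=e^\varepsilon$, so $L_{\hat{X}\rightarrow X_k}=\varepsilon$. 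I would then give a concrete witness (e.g.\ $t=2$, $G=(1,0)$, $G'=(0,1)$, realised by a joint law in which $\hat{X}$ is deterministic given $X_k$), which also shows the maximum is reached in the perfectly correlated case and, more tellingly, whenever the two conditionals have disjoint supports even if the correlation is otherwise weak.

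The main obstacle is not the algebra but being careful that the prior reduction is legitimately in force: the passage from the constrained maximisation \eqref{eqn:optimization_problem_over_C} over mechanism transition probabilities $C$ to the $\overline{H}$ form—via Proposition~\ref{prop:c_max c_min} and the greedy optimisation over $S$ proved optimal in Theorem~\ref{Theorem:optimality_of_algo}—must genuinely yield $L_{\hat{X}\rightarrow X_k}=\ln\max_{x,x'}\overline{H}^*$ (for the worst-case $\varepsilon$-LDP mechanism), not merely an upper bound. Once that is granted, the remaining optimisation over $(G,G')$ is the short monotonicity-plus-support argument above; a minor point to keep honest is that the exhibited $(G,G')$ must be valid conditional distributions arising from some joint distribution on $\hat{\mathcal{X}}\times\mathcal{X}_k$, which is immediate.
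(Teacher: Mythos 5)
Your proposal is correct and follows essentially the same route as the paper: both reduce to the form $\overline{H}=\frac{1+(e^\varepsilon-1)A}{1+(e^\varepsilon-1)B}$ with $A=\sum_{j\in S}g_j$, $B=\sum_{j\in S}g'_j$, and observe that disjoint supports of $G$ and $G'$ allow $A=1$, $B=0$, giving $\overline{H}^*=e^\varepsilon$ and hence leakage $\varepsilon$. Your write-up is in fact slightly more complete than the paper's one-line argument, since you also spell out the upper-bound direction ($A\le 1$, $B\ge 0$ imply $\overline{H}\le e^\varepsilon$) and the necessity of the disjoint-support condition, both of which the paper leaves implicit.
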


\begin{proof}
    According to the Algorithm~\ref{algorithm_2_part_2}, if $g \neq 0 \implies g'=0$ for all $g \in G$ and $g' \in G'$ then $A = 1$ and $B=0$. Therefore $\overline{H}^*=\varepsilon$. Thus, we have  $\max_{G,G'} L_{\Hat{X}\rightarrow X_k} = \varepsilon$.
    
    This completes the proof.
\end{proof}




\subsection{Minimum CPL}

Similar to the maximum case, minimum CPL also depends on two main factors: 
\begin{enumerate}
    \item [$\bullet$] the privacy budget of the LDP mechanism of the neighbouring attribute. 

    \item [$\bullet$] The relationship between two attributes. 
\end{enumerate}
As stated in Corollary~\ref{corollary:min_privacy_leakage_privacy_budget}, there is zero CPL when either the privacy budget of the neighbouring attribute (i.e., $\Hat{X}$) is $\varepsilon = 0$ or two attributes are independent.

\begin{corollary}\label{corollary:min_privacy_leakage_privacy_budget}
    Let $X_k$ and $\Hat{X}$ be two correlated attributes. Suppose $\Hat{X}$ is perturbed by $\varepsilon$-LDP mechanism $\Hat{\mathcal{M}}$. Then, there is no CPL caused by $\Hat{X}$ about $X_k$ when either $\varepsilon =0$ or two attributes are independent.
\end{corollary}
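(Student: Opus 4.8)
The plan is to show that under either hypothesis the supremum defining the CPL collapses to $1$, so that $L_{\hat{X}\rightarrow X_k}=\ln 1 = 0$. I would start from the vector form $e^{L_{\hat{X}\rightarrow X_k}} = \sup_{y,x,x'} C^{T} G / C^{T} G'$ introduced in Section~\ref{section:calculate additional privacy leakage between two correlated attributes}, where $C=(p(y|a_1),\dots,p(y|a_t))$ is a column of the transition matrix of $\hat{\mathcal{M}}$ and $G=(p(a_1|x),\dots,p(a_t|x))$, $G'=(p(a_1|x'),\dots,p(a_t|x'))$ are the conditional distributions of $\hat{X}$ given two realisations $x,x'$ of $X_k$; in particular $\sum_i g_i = \sum_i g'_i = 1$. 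I would then split into the two cases and reduce each ratio to $1$.

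For the case $\varepsilon=0$: the mechanism $\hat{\mathcal{M}}$ is $0$-LDP, so Definition~\ref{defn:LDP} (with $\delta=0$) gives $p(y|u)/p(y|v)\le 1$ for every $u,v\in\mathcal{\hat{X}}$ and every $y$; reading this bound in both orientations forces $p(y|u)=p(y|v)$, i.e. every entry of $C$ equals a common value $c_y$. Hence $C^{T}G = c_y\sum_i g_i = c_y = c_y\sum_i g'_i = C^{T}G'$, the ratio is $1$, and $L_{\hat{X}\rightarrow X_k}=0$. The same conclusion also drops out of the closed form used by Algorithm~\ref{algorithm_2_part_2}, namely $\overline{H}^{*}=\bigl(1+A(e^{\varepsilon}-1)\bigr)/\bigl(1+B(e^{\varepsilon}-1)\bigr)$, which is identically $1$ when $e^{\varepsilon}=1$, independently of $A$ and $B$.

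For the case where $X_k$ and $\hat{X}$ are independent: independence gives $p(u|x)=p(u)=p(u|x')$ for all $u\in\mathcal{\hat{X}}$ and all $x,x'\in\mathcal{X}_k$, so $G=G'$ (both equal the marginal of $\hat{X}$) and again $C^{T}G/C^{T}G'=1$ for every $y$. Alternatively, since $\max_{\hat{x},x,x'} p(\hat{x}|x)/p(\hat{x}|x')=1$ under independence, Corollary~\ref{corollary:max_privacy_leakage_privacy_budget} already yields $\max_{\varepsilon} L_{\hat{X}\rightarrow X_k}=\ln 1 = 0$; combined with $L_{\hat{X}\rightarrow X_k}\ge 0$ (the supremum in \eqref{eqn:privacy_leakage_additional_xk} is symmetric in $x,x'$, so its argument is at least $1$), this pins the value to $0$ for every $\varepsilon$.

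I do not expect a substantive obstacle; this is essentially a specialisation of the Section~\ref{section:methodology} machinery. The only points that need care are (i) justifying that $0$-LDP of $\hat{\mathcal{M}}$ genuinely makes the conditional law $p(\cdot|u)$ independent of the input $u$ — one must apply the ratio bound in both directions and use $\delta=0$ — and (ii) invoking $\sum_i g_i=\sum_i g'_i=1$ at exactly the step where the common factor $c_y$ cancels in the first case (and the identity $G=G'$ in the second). Both are routine, so the corollary follows.
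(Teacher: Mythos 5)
Your proof is correct, and it reaches the same conclusion by the same basic reduction the paper uses: show that the defining ratio collapses to $1$ in both cases, so the logarithm is $0$. The one genuine difference is in the $\varepsilon=0$ case. The paper argues entirely through Algorithm~\ref{algorithm_2_part_2}'s closed form $\overline{H}^*=\bigl(1+A(e^{\varepsilon}-1)\bigr)/\bigl(1+B(e^{\varepsilon}-1)\bigr)$, which is an \emph{upper bound} on CPL, and concludes from $\overline{H}^*=1$; you instead go back to Definition~\ref{defn:LDP} with $\delta=0$ and observe that $0$-LDP forces $p(y|u)=p(y|v)$ for all inputs, so every column $C$ is constant and the \emph{exact} ratio $C^{T}G/C^{T}G'$ equals $1$ for every $y,x,x'$. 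Your route is slightly stronger and more self-contained (it does not lean on the optimality of Algorithm~\ref{algorithm_2_part_2}), and you also make explicit the step the paper leaves implicit: that the supremum in \eqref{eqn:privacy_leakage_additional_xk} is at least $1$ by the $x\leftrightarrow x'$ symmetry, so a unit upper bound pins the CPL to exactly $0$. For the independence case both arguments are identical ($G=G'$ gives ratio $1$); note in passing that the paper's phrase ``$\overline{H}^*=1$ as $e^{\varepsilon}=1$'' in that second case is a slip --- the reason there is $A=B$, which your write-up gets right.
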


\begin{proof}
    First, let us consider the $\varepsilon=0$ scenario. From Algorithm~\ref{algorithm_2_part_2}, we have $\overline{H}^*=1$ as $e^\varepsilon = 1$. Therefore, we have zero CPL (i.e., $L_{\Hat{X}\rightarrow X_k} = 0$) from Algorithm~\ref{algorithm_2_part_2}. This means there is no CPL between any attributes when $\varepsilon = 0$.
    
    Next, let us consider the scenario where two attributes (i.e., $X_k$ and $\Hat{X}$) are independent. Then, we know that $G=G'$, where $G = (p(a_1|x)\text{, } p(a_2|x)\text{,}\dots\text{, }p(a_t|x))$, $G' = (p(a_1|x')\text{, }p(a_2|x')\text{,}\dots\text{, } p(a_t|x'))$, $\mathcal{\Hat{X}} = \{a_1, a_2, \dots, a_t\}$ and $x,x' \in \mathcal{X}_k$. Therefore, from Algorithm~\ref{algorithm_2_part_2}, we have $\overline{H}^*=1$ as $e^\varepsilon = 1$ and we have $L_{\Hat{X}\rightarrow X_k} = 0$ from Algorithm~\ref{algorithm_2_part_2}. Thus, there is no CPL when attributes are independent.
    
    This completes the proof.
\end{proof}

\section{Proof of Lemma~\ref{lemma:sigma_greater_than_ratio}}\label{proof:lemma:sigma_greater_than_ratio}


\begin{proof}
Let us take $\Delta$ as
    $\Delta = \frac{1 + (A + \sigma g')(e^\varepsilon-1)}{1 + (B + g')(e^\varepsilon-1)} - \frac{1 + A(e^\varepsilon-1)}{1 + B(e^\varepsilon-1)}$.
By simplifying, we have,
    $\Delta = \frac{g'(e^\varepsilon-1) \left(\sigma + \sigma B(e^\varepsilon-1) - 1 - A(e^\varepsilon-1)\right)}{\left(1 + (B + g')(e^\varepsilon-1)\right)\left(1 + B(e^\varepsilon-1)\right)}.$
Since the denominator and $g'(e^\varepsilon-1)$ are always positive, the sign of $\Delta$ only depends on $\left(\sigma + \sigma B(e^\varepsilon-1) - 1 - A(e^\varepsilon-1)\right)$.
Therefore,
\begin{equation}\label{eqn:Delta_sign}
    \Delta \text{  } 
    \begin{cases} 
    \geq 0 & \text{; if }  \sigma + \sigma B(e^\varepsilon-1) - 1 - A(e^\varepsilon-1) \geq 0, \\
    < 0 & \text{; otherwise.}
    \end{cases}
\end{equation}
If $\sigma + \sigma B(e^\varepsilon-1) - 1 - A(e^\varepsilon-1) \geq 0$, we have:
    $\sigma \geq \frac{1 + A(e^\varepsilon-1)}{1 + B(e^\varepsilon-1)}$.

This completes the proof.
\end{proof}

\section{Proof of Theorem~\ref{Theorem:optimality_of_algo}}\label{proof:Theorem:optimality_of_algo}


\begin{proof}
    Let $Q = \{q_1,q_2,\dots,q_t\}$ be the elements of element-wise division $G\oslash G'$ sorted in non-increasing order, i.e. $q_1 \geq q_2 \geq \dots \geq q_t$. Here, $t$ is the vector size of $G$ and $G'$.
    Next, let us recall the Algorithm~\ref{algorithm_2_part_2}.
    There are two variables, $A$ and $B$. The algorithm traverse all $q \in Q$ and adds $g \in G$ and $g' \in G'$ corresponding to $q$ to $A,B$ respectively if $q \geq \frac{1+A(exp(\varepsilon)-1)}{1+B(exp(\varepsilon)-1)}$. Suppose $S = \{q_1,q_2,\dots,q_v\}$ is the subset of $Q$ selected to update $A$ and $B$ by the algorithm. Here $v \leq t$. Let $H$ be the computed objective value of $H(\cdot)$ by the algorithm for $S$.

    First, we prove by contradiction that no other subset of $S$ can give a strictly larger objective value than $H$.\\

    \textbf{Assume there exists $\Hat{S} \subset S$ such that gives objective value $\Hat{H} (> H)$: } 
    Since $\Hat{S} \subset S$, choose any $\Hat{q} \in S \setminus \Hat{S}$. Since $Q$ is in non-increasing order, $q_1 \geq \Hat{q} \geq q_v$ where $q_v$ is the smallest element of $S$. We know that $q_v$ is included in calculating $H$ as $q_v \in S$. Therefore, from Lemma~\ref{lemma:less_than_a_b}, we have $H \leq q_v$. Suppose $H'$ is the value of the objective function computed using $S \setminus \{q'\}$. Here, $q' \in Q (\geq q_v)$. Then, from Lemma~\ref{lemma:sigma_greater_than_ratio}, we know $H' \leq H$. Since $\Hat{H}$ is the value of the objective function computes with $\Hat{S} \subset S$, we have $\Hat{H} \leq H' \leq H$. This is a contradiction with the initial assumption.  Consequently, no smaller subset $\Hat{S} \subset S$ can give a strictly greater value than $H$. 

    \textbf{Assume there exists $\Tilde{S} \subseteq Q$ such that gives objective value  $\Tilde{H} (> H)$: } 
    Let us take $\Hat{S} = S \cap \Tilde{S}$ and $\Hat{H}$ as the value of the objective function computed on $\Hat{S}$. Since $\Hat{S} \subseteq S$, we know that $\Hat{H} \leq H$ from the previous paragraph. Next, let us consider any $\Tilde{q} \in \Tilde{S} \setminus S$. Then, from Lemma~\ref{lemma:sigma_greater_than_ratio}, we know that there should be at least one such $\Tilde{q} > H$ to make $\Tilde{H} > H$ because $\Hat{H} \leq H$. However, from the terminating condition (line 8) of the Algorithm~\ref{algorithm_2_part_2} and Lemma~\ref{lemma:sigma_greater_than_ratio}, we know that $H > \Tilde{q}$ for all $\Tilde{q}$. Therefore, $\Hat{H} < H$. This is a contradiction with the initial assumption. Consequently, no any subset $\Tilde{S} \subseteq Q$ can give a strictly greater value than $H$.

    This completes the proof.
\end{proof}

\begin{table*}[ht]
\centering
\begin{minipage}
{0.3\textwidth}
\centering
  \begin{tabular}{|l|c|c|c|}
\hline
Dataset & $\varepsilon$ & \multicolumn{2}{|c|}{NMSE-CPL} \\
\cline{3-4} 
  &  & GRR ($\times 10^{-4}$) & EXP ($\times 10^{-4}$) \\\hline
 
 & 1 & 0.889 & 3.36 \\ 
\cline{2-4} 
 CelebA & 3 & 0.197 & 0.514  \\
 \cline{2-4} 
\hline
 & 1 & 2.99 & 8.16 \\ 
\cline{2-4} 
Adult & 3 & 3.08 & 2.73 \\ 
\cline{2-4} 
\hline
 & 1 & 11.6 & 54.1 \\ 
\cline{2-4} 
Cardio & 3 & 2.49 & 5.82 \\
\cline{2-4} 
\hline
 & 1 & 0.674 & 3.79  \\ 
\cline{2-4} 
DSS & 3 & 0.168 & 0.241   \\
 \cline{2-4} 
\hline
\end{tabular}
\caption{NMSE-CPL error between statistical and theoretical (Algorithm~\ref{algorithm_1}) CPL values.}
\label{table:algo_1_vs_statistical}
\end{minipage}
\hfill
\begin{minipage}{0.6\textwidth}
\centering
  \begin{tabular}{|l|c|c|c|c|c|c|c|c|c|}
\hline
Dataset & $\varepsilon$ & \multicolumn{8}{|c|}{NMSE-CPL} \\
\cline{3-10} 
  &  & GRR & RAPPOR  & EXP & OUE & BLH & OLH & SS & SHE  \\\hline
 
 & 1 & 8.89e-5 & 0.29 & 0.29 & 0.32 & 0.32 & 0.33 & 9.44e-5 & 0.29 \\ 
\cline{2-10} 
 CelebA & 3 & 1.97e-5 & 0.31 & 0.32 & 0.54 & 0.54 & 0.53 & 1.81e-5 & 0.35  \\
 \cline{2-10} 
\hline
 & 1 & 0.017 & 0.25 & 0.30 & 0.22 & 0.34 & 0.22 & 0.082 & 0.25 \\ 
\cline{2-10} 
Adult & 3 & 0.0037 & 0.23 & 0.29 & 0.31 & 0.55 & 0.31 & 0.0037 & 0.27 \\ 
\cline{2-10} 
\hline
 & 1 & 0.0088& 0.25 & 0.29 & 0.25 & 0.31 & 0.26 & 0.038 & 0.26\\ 
\cline{2-10} 
Cardio & 3 & 0.00075 & 0.23 & 0.26 & 0.40 & 0.50 & 0.39 & 0.00060 & 0.26\\
\cline{2-10} 
\hline
 & 1 & 0.031 & 0.28 & 0.35 & 0.19 & 0.41 & 0.19 & 0.17 & 0.29 \\ 
\cline{2-10} 
DSS & 3 & 0.012 & 0.25 & 0.33 & 0.23 & 0.63 & 0.22 & 0.012 & 0.28  \\
 \cline{2-10} 
\hline
\end{tabular}
\caption{NMSE-CPL errro between statistical and theoretical (Algorithm~\ref{algorithm_2_part_2}) CPL values.}\label{table:algo_2_vs_statistical}
\end{minipage}
\end{table*}

\begin{lemma}\label{lemma:less_than_a_b}
Let $0 \leq A, B, g,g' \leq 1$, and suppose $\frac{g}{g'} \geq \frac{1+A(e^\varepsilon-1)}{1+B(e^\varepsilon-1)}$ and $e^\varepsilon > 1$. Then,
       $\frac{1+(A+g)(e^\varepsilon-1)}{1+(B+g')(e^\varepsilon-1)} - \frac{g}{g'} \leq 0$.
\end{lemma}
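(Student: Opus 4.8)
The plan is to clear denominators and watch a cross term cancel. Write $\lambda = e^\varepsilon - 1$, which is strictly positive by hypothesis, and note that both denominators $1 + B\lambda$ and $1 + (B+g')\lambda$ are at least $1 > 0$ since $A, B, g, g' \geq 0$; we also take $g' > 0$ so that the ratio $g/g'$ is well defined. With this notation the claim is equivalent to
\begin{equation*}
  \frac{1+(A+g)\lambda}{1+(B+g')\lambda} \leq \frac{g}{g'}.
\end{equation*}

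First I would cross-multiply by the positive quantities $g'$ and $1+(B+g')\lambda$, turning the goal into $g'\bigl(1+(A+g)\lambda\bigr) \leq g\bigl(1+(B+g')\lambda\bigr)$. Expanding both sides gives $g' + g'A\lambda + g g'\lambda$ on the left and $g + gB\lambda + g g'\lambda$ on the right. The \emph{key observation} is that the mixed term $g g'\lambda$ appears on both sides and cancels, so the inequality reduces to $g'(1+A\lambda) \leq g(1+B\lambda)$.

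Finally I would invoke the hypothesis: multiplying $\frac{g}{g'} \geq \frac{1+A\lambda}{1+B\lambda}$ through by the positive number $g'(1+B\lambda)$ yields exactly $g(1+B\lambda) \geq g'(1+A\lambda)$, which is precisely the reduced inequality. Reversing the algebraic steps gives the claim. There is no genuine obstacle here; the only points needing a line of care are that all denominators are positive (so the inequality direction is preserved under cross-multiplication) and that the quadratic-in-$\lambda$ contribution $g g'\lambda$ is identical on both sides — this is exactly what makes the reduction collapse to the given hypothesis rather than to something strictly stronger.
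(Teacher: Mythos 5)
Your proof is correct and follows essentially the same route as the paper's: reduce the claim to the hypothesis $\frac{1+A(e^\varepsilon-1)}{1+B(e^\varepsilon-1)} \leq \frac{g}{g'}$. In fact the paper's own proof merely asserts this reduction, whereas you supply the cross-multiplication and the cancellation of the $gg'(e^\varepsilon-1)$ term that actually justifies it (equivalently, the displayed fraction is the mediant of $\frac{1+A\lambda}{1+B\lambda}$ and $\frac{g}{g'}$ and so lies between them); your added caveat that $g'>0$ is needed for the ratio to be well defined is also a fair tightening of the statement.
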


\begin{proof}
    Let us take $f(x) = \frac{1+(A+g)(e^\varepsilon-1)}{1+(B+g')(e^\varepsilon-1)} - \frac{g}{g'}$. Since $\frac{1+A(e^\varepsilon-1)}{1 + B(e^\varepsilon-1)} \leq \frac{g}{g'}$, $f(x) \leq 0$.
    $\frac{1+(A+g)(e^\varepsilon-1)}{1+(B+g')(e^\varepsilon-1)} - \frac{g}{g'} \leq 0$
    This completes the proof.
\end{proof}

\section{Maximum CPL under a non-Perfect Correlation}\label{section:appendix:Theoretical Explanation for the Observations of Statistical Analysis}

The following example illustrates how CPL could become maximum even under a non-perfect correlation which highlights the privacy risk of the conventional correlation metrics usage
to quantify CPL. Suppose $\mathcal{X}_k = \{x_1, x_2, x_3, x_4\},\mathcal{\Hat{X}} = \{\Hat{x}_1, \Hat{x}_2, \Hat{x}_3, \Hat{x}_4\}$ are alphabets of $X_k$ and $\Hat{X}$, respectively and (\ref{eqn:joint_prob_distribution}) is the joint probability distribution of two attributes.
If we take $G = P(\Hat{X}| X_k = x_1)$ and $G' = P(\Hat{X}| X_k = x_2)$, then this satisfy the maximum privacy leakage requirement in Corollary~\ref{corollary:maximum_additional_privacy_leakage_over_GG'}. Therefore, $L_{\Hat{X} \rightarrow X_k} = \varepsilon$. However, note that $X_k$ and $\Hat{X}$ are not perfectly correlated. According to conventional metrics, $X_k$ and $\Hat{X}$  are \textbf{weakly} correlated as {normalised MI  = 0.164} and {PCC = 0.357} (i.e., $0.2 \leq PCC < 0.4$ weak correlation~\cite{Wikipedia_Correlation_Coefficient_thresholds}). We compare these theoretical results with statistically estimated CPL as summarised in Table~\ref{table:maximum_leakage_example}. Here, we generate a synthetic dataset of 100K samples using \eqref{eqn:joint_prob_distribution}.

\begin{equation}\label{eqn:joint_prob_distribution}
        P(X_k, \Hat{X}) = 
     \bordermatrix{ 
        &\Hat{x}_1 &\Hat{x}_2 &\Hat{x}_3 &\Hat{x}_4 \cr
       x_1 &0.2 &0 &0 &0 \cr
       x_2 &0 &0.2 &0 &0 \cr
       x_3 &0.1 &0.15 &0.03 &0.02 \cr
       x_4 &0.1 &0.15 &0.03 &0.02 \cr}
\end{equation}

\begin{table}[h]
\centering
\begin{tabular}{|c|c|c|c|c|}
\hline
Privacy Budget & \multicolumn{2}{|c|}{Experimental (GRR)} & \multicolumn{2}{|c|}{Theoretical} \\
\cline{2-5}
 $\varepsilon$ & $L_{\Hat{X} \rightarrow X_k}$ & $L_{X_k \rightarrow \Hat{X}}$ & $L_{\Hat{X} \rightarrow X_k}$ & $L_{X_k \rightarrow \Hat{X}}$ \\\hline
$0.5$ & 0.5007  & 0.2833 & 0.5 & 0.2810 \\
$1$ & 0.9985 & 0.6222 & 1 & 0.6203  \\
$2$ & 2.0020 & 1.4369 & 2 & 1.4340  \\
\hline
\end{tabular}
\caption{Experimental (GRR) and theoretical values of $\max_{G,G'} L_{\Hat{X}\rightarrow X_k}$ and $\max_{G,G'} L_{X_k \rightarrow \Hat{X}}$.}
\label{table:maximum_leakage_example}
\end{table}




\section{Results: Statistical vs Theoretical (Algorithm~\ref{algorithm_1})}\label{section:Results: Statistical vs Theoretical algo 1}

Table~\ref{table:algo_1_vs_statistical} summarises NMSE-CPL between the CPL given by the statistical method and Algorithm~\ref{algorithm_1} for GRR and EXP mechanisms.  Since all the NMSE-CPL errors are very small ($<5.5\times 10^{-3}$), which validates the computed CPL values of Algorithm~\ref{algorithm_1}.

\textbf{NMSE-CPL} quantifies the normalised mean square error between theoretical CPL $L^*_{x \rightarrow x_k}$ and statistically estimated CPL $\tilde{L}_{x \rightarrow x_k}$ for all the attributes in the dataset as, 
\begin{equation}
    \operatorname{NMSE-CPL} = \frac{\sum_{x_k \in X} \sum_{x \in X\setminus \{x_k\}} \bigl(\tilde{L}_{x \rightarrow x_k} - L^*_{x \rightarrow x_k}\bigr)^2}{\sum_{x_k \in X} \sum_{x \in X\setminus \{x_k\}}( L^*_{x \rightarrow x_k})^2}.
\end{equation}
Here, $n$ denotes the total number of attributes in set $X$.

\section{Results: Statistical vs Theoretical (Algorithm~\ref{algorithm_2_part_2})}\label{section:Results: Statistical vs Theoretical algo 2}

In this section, we calculate NMSE-CPL between the CPL given by the statistical method and the theoretical method (Algorithm~\ref{algorithm_2_part_2}). Table~\ref{table:algo_2_vs_statistical} summarises the results. Results reveal that both GRR and SS mechanisms leak CPL close to the optimal upper bound, which can be computed using Algorithm~\ref{algorithm_2_part_2}.

\section{Privacy Budget Calibration}\label{section:appendix:privacy_budget_calibration}


When all attributes are assigned the same privacy budget, the calibration problem can be formulated as the following optimisation:

\begin{equation}
    \begin{split}
        \textbf{maximise} & \quad \tilde{\varepsilon}\\
        \textbf{subject to} & \quad \tilde{\varepsilon} + \sum_{x \in X\setminus \{X_k\}} L_{x\rightarrow X_k} \leq \overline{\varepsilon}; \quad\forall k \in [n].
    \end{split}
\end{equation}

where $\tilde{\varepsilon}$ is the per-attribute calibrated budget, $\overline{\varepsilon}$ is the maximum allowable privacy leakage, and $L_{x \rightarrow X_k}$ denotes the privacy leakage from attribute $x$ to attribute $X_k$. We solve this optimisation problem using signed gradient descent~\cite{signed_gradient_descent} with a step size $\Delta = 0.01$, which efficiently converges to the optimal solution, because we know that $L_{x \rightarrow X_k}$ monotonically increases with $\tilde{\varepsilon}$.

\begin{algorithm}[h]
\caption{Calibrating Privacy Budgets.}\label{algorithm: budget_calibration}
\begin{algorithmic}[1]
    \Statex \textbf{Input:}
    \Statex \hspace{1em} $\overline{\varepsilon}$ \Comment{\textit{Total privacy budget.}}
    \Statex \hspace{1em} $n$ \Comment{\textit{Attribute count.}}
    \Statex \hspace{1em} $P(X_i,X_j), \ \forall i,j \in [n]$
    \Comment{\textit{Probability distributions.}}
    \Statex \hspace{1em} $\Delta$\Comment{\textit{Step size.}}
    \State $\tilde{\varepsilon} \gets \overline{\varepsilon}/n$ \Comment{\textit{Initial $\overline{\varepsilon}$ based on SPL algorithm.}}
    \State $L \gets 0$ \Comment{\textit{To keep the maximum TPL.}}
    \While{$L < \overline{\varepsilon}$} 
    \Comment{\textit{To keep current TPL}}
    \ForAll{$i \in [n]$}
    \State$ l \gets 0$ 
    \ForAll{$j \in [n]$}
    \State$L_{j \rightarrow i} \gets \text{Compute CPL($\tilde{\varepsilon},P(X_i,X_j)$)}$ 
        \State$ l \gets l + L_{j \rightarrow i}$ \Comment{\textit{Update $l$.}}
    \EndFor
    \State $L \gets \max (l,L)$
    \EndFor
    \State $ \tilde{\varepsilon} \gets \tilde{\varepsilon} + \Delta$ \Comment{\textit{Update $\tilde{\varepsilon}$}.}
    \EndWhile
    \State \Return $\varepsilon^* = \tilde{\varepsilon}$
\end{algorithmic}
\end{algorithm}



\end{appendices}

\end{document}